\newif\ifnarrowcols
\newtheorem{definition}{Definition}
\newtheorem{theorem}{Theorem}
\newtheorem{lemma}{Lemma}
\newtheorem{observation}{Observation}
\newtheorem{proposition}{Proposition}
\newtheorem{corollary}{Corollary}
\newtheorem{notation}{Notation}
\newlength{\boxwidth}
\DeclareRobustCommand{\qed}{%
  \ifmmode 
  \else \leavevmode\unskip\penalty9999 \hbox{}\nobreak\hfill
  \fi
  \quad\hbox{\qedsymbol}}
\newcommand{\openbox}{\leavevmode
  \hbox to.77778em{%
  \hfil\vrule \vbox to.675em{\hrule width.6em\vfil\hrule}%
  \vrule\hfil}}
\newcommand{\qedsymbol}{\openbox}
\newenvironment{proof}[1][\proofname]{\par \normalfont
  \topsep6\p@\@plus6\p@ \trivlist
  \item[\hskip\labelsep\bfseries\itshape #1.]\ignorespaces }{%
  \qed\endtrivlist }
\newcommand{\proofname}{Proof}
\newcommand{\nset}{\mbox{{\normalfont I\hspace{-.4ex}N}}}
\newcommand{\hide}[1]{}
\newcommand{\F}{{\ensuremath {\mathcal F}}}
\newcommand{\G}{{\ensuremath {\mathcal G}}}
\newcommand{\GG}{{\ensuremath {\mathcal G}{\mathcal G}}}
\newcommand{\K}{{\ensuremath {\mathcal K}}}
\newcommand{\NE}{{\ensuremath {\mathcal N}}}
\renewcommand{\S}{{\ensuremath {\mathcal S}}}
\newcommand{\cube}{K}
\def\pp{{\bf p}}
\def\payoff{\mbox{\it payoff}}
\def\PSPACE{\mbox{\rm PSPACE}}
\def\switch{\it switch}
\def\brmapfn{Brouwer-mapping function}
\def\bmf{bmf}
\title{The Complexity of the Homotopy Method, Equilibrium
Selection, and Lemke-Howson Solutions}
\author{\IEEEauthorblockN{Paul W. Goldberg, Rahul Savani}
\IEEEauthorblockA{Dept.\ of Computer Science\\
University of Liverpool,\\
Liverpool L69 3BX, UK.\\
Email: goldberg@liv.ac.uk, rsjs@liv.ac.uk}
\and
\IEEEauthorblockN{Christos H. Papadimitriou}
\IEEEauthorblockA{Computer Science Division, Soda Hall,\\
University of California at Berkeley,\\
Berkeley, CA 94720, USA.\\
Email: christos@cs.berkeley.edu}
}
\author{
Paul W.\ Goldberg\thanks{Supported by EPSRC Grant EP/G069239/1 ``Efficient Decentralised Approaches in Algorithmic Game Theory''}\\
Dept.\ of Computer Science\\
University of Liverpool,\\
Ashton Street, Liverpool L69 3BX, UK.\\
{\tt P.W.Goldberg@liverpool.ac.uk}
\and
Christos H. Papadimitriou\\
University of California at Berkeley,\\
Computer Science Division, Soda Hall,\\
Berkeley, CA 94720, USA.\\
{\tt christos@cs.berkeley.edu}
\and
Rahul Savani\\
Dept.\ of Computer Science\\
University of Liverpool,\\
Ashton Street, Liverpool L69 3BX, UK.\\
{\tt Rahul.Savani@liverpool.ac.uk}
}
\begin{document}

\maketitle

\begin{abstract}
We show that the widely used homotopy method for solving fixpoint
problems, as well as the Harsanyi-Selten equilibrium selection
process for games, are \PSPACE-complete to implement.
Extending our result for the Harsanyi-Selten process, we show that 
several other homotopy-based algorithms for finding equilibria of games  
are also \PSPACE-complete to implement.
A further application of our techniques yields the result that it is
\PSPACE-complete to compute any of the equilibria that could be found
via the classical Lemke-Howson algorithm, a complexity-theoretic 
strengthening of the result in \cite{SvS}.
These results show that our techniques can be widely applied and 
suggest that the \PSPACE-completeness of implementing 
homotopy methods is a general principle.
\end{abstract}

\ifnarrowcols
\begin{IEEEkeywords}
game theory, computational complexity
\end{IEEEkeywords}
\else
\newpage
\fi

\section{Introduction}
According to Roger Myerson \cite{My}, the 1950 publication of Nash's
paper on equilibria was a watershed event not just for Game Theory,
but for Economics in general. The new general equilibrium concept,
and its established universality, was an impetus for understanding
rationality in much more general economic contexts, and inspired the
important price equilibrium results by Arrow and Debreu. Myerson
argues convincingly in \cite{My} that the concept of Nash
equilibrium lies at the foundations of modern economic thought.

Seen from an algorithmic perspective, however, the Nash equilibrium
suffers from two important problems: First, it is not clear how to
find it efficiently (the same is true for the Arrow-Debreu variety
for markets and prices). This shortcoming had already been
identified by economists since the 1950s, and much effort has been
devoted to algorithms for finding Nash equilibria, see
\cite{SH,LH,HP1} for examples from a very extensive literature. None
of these algorithms came with polynomial-time guarantees, however,
and the recent result \cite{DGP,CDT} establishing that the problem is
PPAD-complete explains why. Of the many algorithmic approaches
proposed by economists over the past 50 years for finding Nash
equilibria, most have been shown by now to require exponential time
in the worst case \cite{SvS,HPV}. 
One exception is an important
algorithmic genre known as {\em homotopy methods} \cite{Ea}; see
\cite{HP} for a recent survey. 

In topology, a {\em homotopy} is a
continuous transformation from one function to another (as, for
example, between two paths joining two points on a map). The
homotopy method starts with a fixpoint problem that is easy to solve
(say, a rotation of a disc around its center), and continuously
transforms it into the problem in hand, by ``pivoting'' to new
fixpoints along the way.  A theorem by Browder \cite{B} establishes
the validity of this method in the limit, by showing the existence
of a continuous path of fixpoints that joins two fixpoints of the
initial and the final problems.

The second algorithmic obstacle for the Nash equilibrium concept is {\em
multiplicity.} Games have multiple equilibria, and markets many
price equilibria, and thus the corresponding equilibrium concepts
are only {\em nondeterministic predictions} (oxymoron intended). In
price equilibria, this multiplicity has been blamed for {\em
economic crises:} The path guaranteed by Browder's theorem is
non-monotonic, going back and forth in time. As a result, equilibria
vanish at its folds, leaving the market in turmoil~\cite{Ba}. In
games, a proposed remedy for multiplicity is the so-called focal
point theory see, e.g., \cite{Kr} p.~414, postulating that players
implicitly coordinate their equilibrium choice by focusing on the
most obvious, or mutually advantageous, equilibrium; repeated play
and learning (see, e.g., \cite{FL}) can also be considered a remedy
for multiplicity. In 1975, Harsanyi proposed the {\em tracing
procedure} \cite{Harsanyi} for battling equilibrium multiplicity, a theory
further explicated in his joint 1988 book ``A General Theory of
Equilibrium Selection in Games'' with Selten \cite{HS} (Harsanyi and
Selten shared in 1994 the Nobel prize with Nash). The tracing
procedure asserts that players engaged in a game $\G$ play at first
a simple game $\G_0$, in which their prior beliefs about the other
players' behavior result in a dominant strategy. As time $t$
progresses, and their priors are falsified by life, they play a more
and more realistic game $\G_t = (1-t)\cdotp \G_0+t\cdotp \G$, until,
at time $t=1$, they end up playing the intended game $\G$. They show
that, for almost all games, tracing the equilibrium path of this
process results in a unique equilibrium. Notice the parallel with
the homotopy method; apparently the two were discovered
independently.

\paragraph{Our results\ifnarrowcols{}\else{.}\fi}
{\em This paper is a complexity-theoretic critique of the tracing
procedure and the homotopy method:} we show that
finding the solutions they prescribe requires the power 
of \PSPACE. In particular, finding the Brouwer
fixpoint that would have been discovered by the homotopy method, for
a simple starting function and an adversarial final one, is
\PSPACE-complete. The same is true, via standard reductions, for
price equilibria.  We also construct examples where the homotopy
method not only will undergo an exponential number of pivots (this
was expected since~\cite{HPV}), but will suffer an exponential
number of {\em direction reversals.}   As for the tracing procedure,
we show that it is \PSPACE-complete to find the Nash equilibrium
selected by it, even in two-player games, and even if the initial game
has dominant strategies obtained from priors, exactly as prescribed
by Harsanyi and Selten.  
We extend this result to homotopy-based algorithms where the starting 
game depends on the final game and show that it is 
\PSPACE-complete to implement the 
Herings-van den Elzen, Herings-Peeters, and van den Elzen-Talman algorithms 
for finding equilibria in games.
Finally, it is particularly noteworthy that
\PSPACE-completeness prevails even for finding the solutions that would be
returned by the classical Lemke-Howson algorithm, a simplex-like method that
had long been considered an oasis of conceptual simplicity
and (until \cite{SvS}) of algorithmic hope in this field.
This reinforces the ``exponentially long paths'' result
of~\cite{SvS} with a new result which says that, subject only to the hardness
of \PSPACE, no short cuts to Lemke-Howson solutions are possible (for any of the different 
initial choices of the algorithm).
Since it is known that the Lemke-Howson
algorithm can be expressed as a homotopy~\cite{HP}, this result can also
be seen as a powerful specialization of our first result.

The algorithms we consider solve problems in the complexity class PPAD, which
is contained in TFNP, the class of all total function problems in NP.  Another
prominent complexity class contained in TFNP is PLS (for polynomial local
search).  Many common problems in PLS (e.g., local max cut and finding pure
equilibria of congestion games) are complete under a so-called {\em tight}
PLS-reduction, implying that the corresponding standard local search algorithm
is exponential (for certain starting configurations and {\em any} choices of
the local search algorithm).   Furthermore, one can conclude that the
computational problem of finding a local optimum reachable from a given
starting configuration by  local search is \PSPACE-complete. 

No such concept of tight reductions is known for PPAD, and our results
can be seen as addressing this deficiency. 
Specifically, we show the \PSPACE-completeness (and exponential worst-case behaviour) of a number of homotopy-based
algorithms for finding equilibria.  Our reductions start with the problem {\sc
Other end of this line} ({\sc Oeotl}),
which is related to the problem {\sc End of the line} used in the definition is
PPAD, seeking not just any end of a path, but the other end of the particular
path starting at the origin.  {\sc Oeotl} was known to be \PSPACE-complete since~\cite{Pa}, but this fact has so
far remained unexploited for proving lower bounds for other problems.  

\paragraph{Outline of the paper\ifnarrowcols{}\else{.}\fi}
In Section~\ref{sec:prelim_hom}, we give an overview of the linear
homotopy method as applied to Brouwer functions and games.  In
Section~\ref{sec:com}, we recall the \PSPACE-complete problem {\sc Oeotl} ({\sc Other
End of this Line}), which serves as the starting point for all our main reductions. 
In Section~\ref{sec:hom}, we show that the linear homotopy method to
compute a Brouwer fixpoint is \PSPACE-complete, which is proved in
Section~\ref{sec:hom:red}. In Section~\ref{sec:lt}, we establish the
\PSPACE-completeness of the linear tracing procedure for two-player
strategic form games for a special starting game that is independent
of the final game.  These results are extended to starting games that
depend on the final game in Section~\ref{sec:extVdET-HP},
where we show that it is \PSPACE-complete to implement the 
Herings-van den Elzen, Herings-Peeters, and van den Elzen-Talman algorithms 
for computing equilibria of games.
The techniques of~\cite{CDT,DGP} are central to both Section~\ref{sec:hom}
and Section~\ref{sec:lt} and are recalled and extended along the way.
Finally, in Section~\ref{sec:extLH}, we show
that it is \PSPACE-complete to find any solution 
of a two-player game by the Lemke-Howson algorithm.

\section{Preliminaries}
\label{sec:prelim}

\subsection{Homotopies} \label{sec:prelim_hom}
A {\em Brouwer function} $\F$ is a continuous function from a convex and compact domain $D$ to itself;
by Brouwer's fixpoint theorem there exists $x\in D$ such that $\F(x)=x$.
A {\em homotopy} between two functions $\F_0:X\longrightarrow Y$ and $\F_1:X\longrightarrow Y$ (where
$X$ and $Y$ are topological spaces) is a continuous function $H:[0,1]\times X\longrightarrow Y$ such that
for all $x\in X$, $H(0,x)=\F_0(x)$ and $H(1,x)=\F_1(x)$. In this paper, we are interested in the special case
where $X=Y=D$, for $D$ a closed compact subset of Euclidean space, such as a cube.
Thus, $\F_0$ and $\F_1$ are Brouwer functions on $D$.
Given two continuous functions $\F_0,\F_1:D\longrightarrow D$, the {\em linear homotopy} is given by the expression
$H(t,x)=(1-t)\F_0(x)+t\F_1(x)$, and (if $D$ is convex) results in a continuum of Brouwer
functions $\F_t:D\longrightarrow D$ given by $\F_t=(1-t)\cdotp \F_0 + t\cdotp \F_1$ for $t\in[0,1]$.

Browder's fixpoint theorem \cite{B} (not to be confused with Brouwer's fixpoint theorem)
asserts that given a homotopy connecting $\F_0$ and $\F_1$,
there is a path in $[0,1]\times D$ from some fixpoint of $\F_0$ to some
fixpoint of $\F_1$, such that for every point $(t,x)$ on that path, $x$ is a fixpoint of $\F_t$.
The homotopy method \cite{Ea,HP} for finding a fixpoint of $\F_1$ selects $\F_0$ to have
a unique and easy to find fixpoint, and essentially follows such a path.
As noted in~\cite{HP}, we do not expect the path to be monotonic in $t$ --- indeed, we show in
\ifnarrowcols the full paper \else Section~\ref{sec:exp-changes} \fi
that an exponential number of direction reversals is possible.

We are often interested in {\em approximate} fixpoints\footnote{A very
interesting alternative consideration \cite{EY} focuses on exact
fixpoints, resulting in higher complexity of the search problem;
here we could also consider exact fixpoints and equilibria without
much effect on our results, since we are dealing with
\PSPACE-completeness. It is known from~\cite{EY} that this harder
problem belongs to \PSPACE.}. If $\F$ is a Brouwer function, an
$\epsilon$-approximate fixpoint is a point $x$ such that
$|\F(x)-x|\leq \epsilon$ (we shall use the $L_{\infty}$ metric
throughout).  It follows from Browder's theorem that, for any
$\F_0,\F_1$, there is a finite sequence
$x_0,x_{t_1},\ldots,x_{t_k},x_1$ of $\epsilon$-approximate fixpoints
of $\F_0,\F_{t_1},\ldots,\F_{t_k},\F_1$, for some $k$ and
$t_1,\ldots,t_k$, such that any two consecutive fixpoints in the
sequence are at most $\epsilon$ apart.

We shall be interested in the following problem, which we call {\sc Browder fixpoint}: 
Given two arithmetic circuits computing two
functions $\F_0$ and $\F_1$ from $[0,1]^d$ to itself with Lipschitz
constant $\ell$, an $\epsilon>0$, where $\F_0$ has a unique fixpoint $x_0$,
find an $\epsilon$-approximate fixpoint $x_1$ of $\F_1$ that is connected via
a sequence of $\epsilon$-approximate fixpoints to $x_0$. (To make this
definition precise, we of course have to identify classes of functions from which
$\F_0$ and $\F_1$ may be drawn.) Notice that the homotopy
method for computing Brouwer fixpoints provides a solution to this problem.

Homotopies can be defined very similarly also for games.  Given two
games $\G_0,\G_1$ of the same type (number of players and
strategies), we consider $\G_t=(1-t)\cdotp \G_0 + t\cdotp \G_1$, where
it is the players' utilities that are interpolated.  It is routine
to extend this definition to more general classes of games, such as
graphical games~\cite{K} (in which case, in addition to the players and
strategies, the two graphs must be the same). Browder's theorem, via
Nash's reduction, establishes that there is a path of approximate
Nash equilibria here as well. The problem {\sc Linear tracing} is
the following: Given two games $\G_0$ and $\G_1$, an $\epsilon>0$,
and a Nash equilibrium $x_0$ of $\G_0$, find an $\epsilon$-approximate Nash
equilibrium $x_1$ of $\G_1$ that is connected via a sequence of
$\epsilon$-approximate Nash equilibria to $x_0$.

It is easy to see that {\sc Linear tracing} is in \PSPACE, and it can
be checked that the algorithm of Herings and van den Elzen~\cite{HE}
achieves this\ifnarrowcols{}\else(see Section~\ref{sec:lt:membership})\fi.
{\sc Browder Fixpoint} is also in \PSPACE.

\subsection{\bf \sc Other End Of This Line} \label{sec:com} 
We consider directed graphs on $2^n$ vertices represented as $n$-bit
vectors. The arcs are represented by two polynomial-size circuits $S$ and
$P$, each having $n$ inputs and outputs, as follows. There is an arc from vertex
$v$ to $w$ provided that $S(v)=w$ and $P(w)=v$. Notice that all vertices
of the graph have both indegree and outdegree $0$ or $1$, that is, the
graph consists of paths, cycles, and isolated vertices.

\begin{definition}
An $(S,P)$-graph with parameter $n$ is a graph on $\{0,1\}^n$ specified by
circuits $S$ and $P$, as described above, subject to the constraint that 
vertex $0^n$ has no incoming arc but does have an outgoing arc.
\end{definition}

The problem {\sc End of the line} is the problem of finding a vertex of a given
$(S,P)$-graph other than $0^n$ which has at most one incident arc. Note that this
problem is in the class TFNP of total search problems in NP: there exists a
solution that could be obtained by following the directed path that starts at $0^n$, and
any given solution may be efficiently checked for correctness.
The class PPAD~\cite{Pa} is defined as all search problems polynomial-time reducible to
{\sc End of the line.} The problem {\sc Other end of this line} (which we will
subsequently abbreviate to {\sc Oeotl}) is the problem of finding
the end of {\em the particular path} that starts at $0^n$. In contrast with
{\sc End of the line}, a given solution to an instance of {\sc Oeotl} has
no obvious concise certificate that it is the correct endpoint, so while {\sc Oeotl} is a
total search problem, it is apparently not an NP total search
problem. In fact, we have the following (Theorem 2 of~\cite{Pa}),
which is the starting-point of our reductions.

\begin{theorem}\label{thm:oeotl}~{\cite{Pa}}  {\sc Oeotl} is \PSPACE-complete.
\end{theorem}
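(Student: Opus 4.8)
The plan is to establish membership in \PSPACE\ and \PSPACE-hardness separately.

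For membership, I would simply trace the path: starting from $v=0^n$, repeatedly compute $w=S(v)$, and if $P(w)=v$ set $v\leftarrow w$ and continue, otherwise stop and output $v$. Because $0^n$ has indegree $0$ and every vertex has indegree at most $1$, the path out of $0^n$ is simple, so it has length less than $2^n$ and is not a cycle; hence the procedure halts and outputs the unique other endpoint of that path. It keeps only the current vertex and (if one wants a syntactically transparent bound) a binary counter up to $2^n$, i.e.\ $O(n)$ bits beyond the space needed to evaluate the circuits $S$ and $P$, so it runs in polynomial space. Thus {\sc Oeotl} is in \PSPACE.

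For hardness, I would reduce an arbitrary language $L\in\PSPACE$, with input $x$, to {\sc Oeotl}. Fix a deterministic Turing machine deciding $L$ in space $p(|x|)$; appending a $2^{O(p(|x|))}$-step clock (a counter of $O(p(|x|))$ bits) makes it halt on every input in an accepting or rejecting state with no outgoing transition. First I would pass to a \emph{reversible} machine $M$ deciding $L$ in polynomial space --- this is the point at which one invokes the fact that reversible space equals deterministic space, or Bennett's reversible simulation, which inflates the space only by a $\log(\text{time})=O(p)$ factor --- arranged so that its start state is never re-entered, hence the start configuration has no predecessor, and so that it still halts in a unique configuration recording whether $x\in L$. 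Next, encode the configurations of $M$ on $x$ by strings in $\{0,1\}^n$ with $n$ polynomial in $|x|$, choosing the (polynomial-time computable) encoding so that the start configuration is $0^n$. Let the circuit $S$ map a configuration to its successor and $P$ map a configuration to its predecessor; on the halting configuration put $S(v)=v$, and on strings that encode no configuration put $S(v)=P(v)=v$. Then, with the arc conventions of the problem, $0^n$ has no incoming arc but one outgoing arc, so this is a legitimate $(S,P)$-graph, and the path leaving $0^n$ is exactly the sequence of configurations in the computation of $M$ on $x$, terminating at the unique halting configuration; reading off from that configuration whether $M$ accepted decides whether $x\in L$. Since the whole construction is polynomial time, every problem in \PSPACE\ reduces to {\sc Oeotl}, which with membership gives \PSPACE-completeness.

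The step I expect to be the crux is the reversibilization: turning the given space-bounded computation into one whose configuration graph is a disjoint union of simple paths and cycles, with the start configuration a genuine source, all without leaving polynomial space. Everything else --- the clock, the renaming that makes the start configuration $0^n$, and the behaviour of $S$ and $P$ on the halting configuration and on non-configuration strings --- is routine bookkeeping.
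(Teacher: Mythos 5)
Your argument is correct: path-following from $0^n$ gives membership in \PSPACE, and the hardness reduction via reversible simulation of a space-bounded machine (with the clock, the re-encoding of the start configuration as $0^n$, and the harmless self-loops on junk strings as routine bookkeeping) is exactly the standard proof, with the reversibilization step correctly identified as the crux and correctly discharged by Bennett's simulation or the equivalence of reversible and deterministic space, as in \cite{CP}. The paper itself offers no proof here --- it cites the result to \cite{Pa} --- and the argument in that source is essentially the one you give, so your approach matches the paper's (cited) proof.
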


\ifnarrowcols
We build on the ideas of~\cite{DGP,CDT}
(showing that unrestricted Nash equilibria can efficiently encode {\sc End of the line} solutions)
to show how equilibria defined by homotopies can efficiently encode {\sc Oeotl} solutions.
\else
\subsection{Our Approach}

In~\cite{DGP}, each instance $I$ of {\sc End of the line} is reduced
to a game $\G_I$ in such a way that any Nash equilibrium of $\G_I$
efficiently encodes a solution to $I$. Here we reduce $I$ to a
homotopy problem defined by two games, $(\G_0,\G'_I)$, where $\G_0$
depends only on $n$, while $\G'_I$ encodes the circuits in $I$, and
is an extension of $\G_I$ of~\cite{DGP}. We establish that the
associated linear homotopy corresponds to the naive ``follow the
line'' approach to solving {\sc Oeotl}; technically, a suffix of the
homotopy path corresponds to following the line, and the existence of
the relevant suffix is established in a non-constructive way, using
the intermediate value theorem. In extending the result to the
Lemke-Howson algorithm, the main technical obstacle is the initial
choice of which ``label to drop'', leading to multiple disjoint paths
in the mixed-strategy profile space. We have to ensure that all of the
$2n$ solutions (one for each pure strategy) efficiently encode the
solution to $I$, where $I$ is treated as an instance of {\sc Oeotl}.
This is done by embedding two copies of the game $\G'_I$ inside a
larger one in such a way that at least one copy does not contain the
initially-dropped label, and arguing 
that any Lemke-Howson equilibrium restricted to this copy ends up
encoding the unique solution to~$I$.
\fi

\section{The Homotopy Method for Brouwer Fixpoints}\label{sec:hom}

In this section we give detailed definitions of classes of fixpoint and approximate
fixpoint computation problems. In Section~\ref{sec:hom:def}, we review the definition
of {\brmapfn}s\ ---and related concepts--- from Chen et al.~\cite{CDT}, here applied
to a three dimensional domain.  In Section~\ref{sec:hom:impl}, we review the techniques of~\cite{DGP,CDT}
for implementing {\brmapfn}s as arithmetic circuits.
In Section~\ref{sec:hom:red}, we prove Theorem~\ref{thm:pspace-brouwer}, the main result of
Section~\ref{sec:hom}, in which we establish the \PSPACE-completeness
of a linear homotopy for finding a fixpoint of a Brouwer function. 
$n\in\nset$ will denote a complexity parameter of problem instances.
We define a sequence $\F^{(n)}_0$  of ``basic Brouwer functions'' having unique known fixpoints.
For each $n$ we define a class of Brouwer functions whose members encode $(S,P)$-graphs on $\{0,1\}^n$.
The homotopy of Equation~(\ref{eq:homotopy}) defines a class of functions $\F_t$, $t\in[0,1]$,
that interpolate between $\F_0$ and $\F_1$ and specifies a particular fixpoint of $\F_1$.
We will show that from that fixpoint, we can efficiently recover a solution to {\sc Oeotl}
for the graph encoded by~$\F_1$.

\subsection{Definitions and notation}
\label{sec:hom:def}

\begin{notation}
Let $\cube$ be the unit 3-D cube $[0,1]^3$. For $n\in\nset$ let
$\K^{(n)}$ denote a partition of $\cube$ into $2^{3n}$ ``cubelets'',
$\K^{(n)}=\{\cube_{ijk}~:~0\leq i,j,k\leq 2^n-1\}$; $\cube_{ijk}$ 
is an axis-aligned cube of length $2^{-n}$ whose vertex closest to the
origin has coordinates $2^{-n}(i,j,k)$.  
\end{notation}

We define a {\em Brouwer-mapping circuit} in a similar way to the
definition in~\cite{CDT}, here specialized to the case of 3 dimensions.
We also introduce some variations of the definition, as follows:
\begin{definition}\label{def:bmf}
\ifnarrowcols \else (Brouwer-mapping circuit/function; basic \brmapfn; DGP-style \brmapfn; partial \brmapfn) \fi

A {\em Brouwer-mapping circuit (bmc)} is a directed boolean circuit with
$3n$ input nodes and $2$ output nodes. Note that any bmc $B$ has an associated
{\em \brmapfn\ (\bmf)} $f_B:\K^{(n)}\longrightarrow\{0,1,2,3\}$ 
that maps any cubelet $\cube_{ijk}$ to one of the four colors $\{0,1,2,3\}$.
We require the colors of all exterior cubelets to be predetermined as follows.
For $i=0$, $f_B(\cube_{ijk})=1$.
For $j=0$, $i>0$, $f_B(\cube_{ijk})=2$.
For $k=0$, $i,j>0$, $f_B(\cube_{ijk})=3$.
All other exterior cubelets are mapped to $0$.

The {\em basic \bmf} $f^{(n)}_0:\K^{(n)}\longrightarrow\{0,1,2,3\}$
has the additional property that all internal cubelets get mapped to 0. Notice
that $f^{(n)}_0$ is computable by a bmc of size polynomial in $n$.

A {\em DGP-style \bmf} is one that is derived from an $(S,P)$-graph in
the manner of~\cite{DGP}, and so is computable with a bmc of size polynomial
in the size of circuits $S$ and $P$. (Proposition~\ref{prop:graph} notes the relevant
property of DGP-style \bmf's.)

A {\em partial \bmf} $f$ is defined with respect to a
set $\S \subseteq \K^{(n)}$; $f$ assigns a color
to elements of $\S$ but $f$ may be undefined on non-elements of $\S$.
\end{definition}

\begin{proposition}\label{prop:graph}
The following problem is \PSPACE-complete. Given a Brouwer-mapping circuit $B$, find a point in $\cube$
that is a vertex of 4 cubelets mapped to all 4 colors by the associated {\bmf} $f_B$,
and which is connected to the origin via cubelets having colors other than 0.
\end{proposition}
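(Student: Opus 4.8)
The plan is to prove the two directions separately. For membership in \PSPACE, note first that for a fixed bmc $B$ and a fixed interior grid vertex $v$ (one shared by eight cubelets --- a boundary grid vertex lies on at most four cubelets, which receive at most two colours under the exterior rule of Definition~\ref{def:bmf}, so cannot carry all four), one can decide in polynomial time whether $v$ is ``panchromatic'', i.e.\ whether the eight cubelets incident to $v$ are mapped to all four colours, simply by evaluating $B$ on each of them. Next, for a cubelet $c$, the predicate ``$c$ and $\cube_{000}$ lie in the same connected component of the graph on cubelets in which two cubelets are adjacent when they share a facet and both have colour $\ne 0$'' is decidable in polynomial space: it is a reachability question in a graph on $2^{3n}$ vertices whose edge relation is a polynomial-time predicate, hence it is in \PSPACE\ by Savitch's theorem. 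An algorithm for the problem therefore enumerates the interior grid vertices $v$ (exponentially many, each described by $O(n)$ bits), tests for each whether $v$ is panchromatic and whether some cubelet incident to $v$ is connected to $\cube_{000}$ as above, and outputs the first $v$ that passes both tests; a solution is guaranteed to exist by the standard parity argument for boundary colourings of this form, and the algorithm uses only polynomial space, so the problem is in \PSPACE.

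For \PSPACE-hardness I would reduce from {\sc Oeotl} (Theorem~\ref{thm:oeotl}). Given circuits $S,P$ specifying an $(S,P)$-graph on $\{0,1\}^n$, construct the DGP-style {\bmf} $f_B$ and a bmc $B$ computing it, of size polynomial in $|S|+|P|$, as in~\cite{DGP} (recalled in Section~\ref{sec:hom:impl}). The properties of that construction I would invoke are: (i) the cubelets of colour $\ne 0$ consist of the exterior cubelets fixed by Definition~\ref{def:bmf} together with pairwise disjoint ``tubes'', one per directed path or cycle of the $(S,P)$-graph; (ii) panchromatic vertices occur only at the loose ends of tubes that come from paths; and (iii) the path of the $(S,P)$-graph that starts at $0^n$ is anchored near the origin --- consistently with the fixed exterior colouring --- so that its start contributes no panchromatic vertex, and so that within the colour-$\ne 0$ connected component of $\cube_{000}$ the only panchromatic vertex is the one at the other end of that path's tube, from whose location one reads off in polynomial time the endpoint of that path, i.e.\ the answer to the {\sc Oeotl} instance. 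Given (i)--(iii), every solution to the stated problem on input $B$ maps back to the (unique) solution of the {\sc Oeotl} instance, so the problem is \PSPACE-hard; together with membership this gives \PSPACE-completeness.

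The hard part is establishing (i)--(iii): verifying that the DGP geometry genuinely confines panchromatic vertices to the loose ends of the tubes --- in particular that no spurious panchromatic vertices arise in the interior of a tube, at a bend of a tube, or among the fixed exterior cubelets --- and that the tube of the $0^n$-path can be routed to reach the origin with a ``capped'', non-panchromatic start that is consistent with the prescribed exterior colours. This is precisely the combinatorial core inherited from~\cite{DGP,CDT}; it is developed in Sections~\ref{sec:hom:impl} and~\ref{sec:hom:red}, so the proof of the proposition can appeal to it rather than redo the geometry. The one remaining point is the routine fact that the reduction runs in polynomial time, which is the size bound on DGP-style bmcs already recorded in Definition~\ref{def:bmf}.
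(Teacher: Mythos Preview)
Your proposal is correct and follows essentially the same route as the paper: reduce from {\sc Oeotl} via the DGP-style \brmapfn, and argue that the non-zero-coloured component containing the origin carries exactly one panchromatic vertex, namely the one encoding the far endpoint of the $0^n$-path. Your properties (i)--(iii) are a repackaging of the four properties the paper lists for the DGP reduction.

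Two small differences worth noting. First, for membership the paper simply observes that one can follow the $\{1,2,3\}$-chromatic path from $2^{-n}(1,1,0)$ in polynomial space; you instead enumerate candidate vertices and invoke Savitch for the connectivity test. Both are valid; yours is arguably cleaner because it does not depend on the degree-$\leq 2$ structure of the chromatic graph, while the paper's path-following argument simultaneously establishes totality (your appeal to ``the standard parity argument'' for totality is a little vague---what is really needed is precisely that the $\{1,2,3\}$-chromatic path from the unique boundary source must terminate at an interior panchromatic vertex, which is the paper's argument). Second, you point to Sections~\ref{sec:hom:impl} and~\ref{sec:hom:red} for the geometric verification of (i)--(iii); those sections \emph{use} the DGP construction rather than develop it, so the citation should go directly to~\cite{DGP}, as the paper does.
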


This is a total search problem: the topological intuition is that there is a
line that is adjacent to the colors $\{1,2,3\}$ and has one end at $2^{-n}(1,1,0)$.
The other end must be inside $\cube$ and adjacent to color 0, since no other exterior
point is adjacent to the colors $\{1,2,3\}$.
\ifnarrowcols \else We note in passing that if we did not make
the ``connected to the origin'' requirement, the problem would be PPAD-complete. \fi

\ifnarrowcols
The proof of Proposition~\ref{prop:graph} in the full paper, applies
the reduction of~\cite{DGP} from {\sc End of the line} to the version of the problem where
a {\em panchromatic vertex} is sought that is adjacent to all colors. The resulting $\{1,2,3\}$-colored line
has a structure that faithfully simulates the arcs of the $(S,P)$-graph from which is was derived.
Panchromatic vertices correspond to {\sc End of the line} solutions, and are linked-up with the
$\{1,2,3\}$-colored line, whose structure corresponds to the {\sc End of the line} graph, and whose
orientation arises from the clockwise order of $\{1,2,3\}$ around it.
\else
\begin{proof}
We reduce from {\sc Oeotl} (Theorem~\ref{thm:oeotl}): Let $G$ be an $(S,P)$-graph.
Let $f_B$ be a DGP-style \bmf\ derived from $G$, whose circuit $B$ efficiently encodes $G$.

Given a \bmf\ $f_B$, define a {\em $\{1,2,3\}$-chromatic vertex} to be one that is shared
by 3 cubelets with colors 1, 2 and 3. By construction, the only exterior $\{1,2,3\}$-chromatic
vertex for any \bmf\ is $2^{-n}(1,1,0)$. Form a digraph $G_B$ on $\{1,2,3\}$-chromatic vertices by
adding an arc between any pair that share a cubelet, directed such that if it pointing away from
a viewer, its adjacent colors $1$, $2$, $3$ will appear in clockwise order around it.
The reduction of~\cite{DGP} ensures $G_B$ has indegree/outdegree at most 1.

Define a {\em panchromatic vertex} to be one that belongs to 4 cubelets of all 4 different colors.
By construction, for all \bmf's there is a path of $\{1,2,3\}$-chromatic vertices starting
at $2^{-n}(1,1,0)$ and ending at a unique panchromatic vertex $v_{end}$. $v_{end}$ is a
solution; it can be found in polynomial space by following this path. 

Let $G_B$ be the graph on $f_B$'s $\{1,2,3\}$-chromatic vertices as described above.
The reduction of~\cite{DGP} has the following properties, from which the result follows.
\begin{enumerate}
\item Each vertex $v$ of  $G$ has an associated $\{1,2,3\}$-chromatic vertex $b(v)$ of $G_B$;
$v$ and $b(v)$ may be computed in polynomial time from each other. For $v=0^n$,
$b(v)=2^{-n}(1,1,0)$.
\item $v$ is a solution to {\sc End of the line} if and only if $b(v)$ is panchromatic.
\item Each arc $(v,w)$ of $G$ corresponds to a sequence of edges of $G_B$ that connect
$b(v)$ to $b(w)$.
\item Each connected component of cubelets colored with $\{1,2,3\}$, corresponds to
a connected component of $G$.
\end{enumerate}
\end{proof}
\fi

\subsection{Implementing Brouwer-mapping functions as arithmetic circuits}
\label{sec:hom:impl}

We review a class of functions used to establish PPAD-completeness of graphical
and strategic-form games. Recall that $\cube$ denotes the 3-dimensional unit cube;
we consider continuous functions $\F:\cube\longrightarrow \cube$
having the following structure.  Each function is an arithmetic circuit composed of
nodes, with each node taking inputs from up to 2 other nodes, and producing
an output, for example, the sum of its inputs. All values are constrained to $[0,1]$, so a
node that adds its inputs would output 1 if their sum is greater than 1.  Identify 3 nodes
as ``input nodes'' and another 3 as ``output nodes'', so if~$\F$ is a continuous function
from $\cube$ to $\cube$, it has a Brouwer fixpoint.

\begin{definition}\label{def:circuit}
A {\em linear arithmetic circuit} is an arithmetic circuit that computes a function
from $\cube$ to $\cube$, represented by a directed graph whose nodes are ``gates'' that
perform certain basic arithmetic operations on their inputs as follows.
Each gate takes as input 0, 1 or 2 real values in $[0,1]$ and outputs a single real value
in $[0,1]$, where the output of a gate may be the sum/difference/max/min
of two inputs, or a constant multiple of a single input, or no input and
constant output. (An output value is set to 1 if for example two inputs
that sum to more than 1 are input to a ``sum'' gate.) We also allow ``comparator
gates'' in which the output of such a gate evaluates to 1 (respectively, 0) if its first input is
greater (respectively, less) than the second input, and may take any value
if they are equal.
\end{definition}

\begin{notation}
Let $\alpha=2^{-2n}$. Let $\delta_1=(\alpha,0,0)$, $\delta_2=(0,\alpha,0)$,
$\delta_3=(0,0,\alpha)$, $\delta_0=(-\alpha,-\alpha,-\alpha)$.
\end{notation}

\begin{definition}\label{def:impl}
We shall say that a \brmapfn\ $f$ is {\em implemented by an arithmetic
circuit $C$} if whenever $f(\cube_{ijk})=c$, then $C(x)-x = \delta_c$ when $x$ is at the center of
$\cube_{ijk}$. For $x$ not at a center, $C(x)-x$ should be a convex
combination of values of $C(z)-z$ for cubelet centers $z$ within
$L_\infty$ distance $2^{-n}$ of $x$. Given $\F:\cube\longrightarrow\cube$
computed by such a $C$, we shall similarly say that $\F$ {\em implements} $f$.
\end{definition}

\begin{observation}\label{obs:impl}
If $\F$ implements $f$, then any fixpoints of $\F$ must lie within distance
$2^{-n}$ of panchromatic vertices of $f$, and vice versa.
\end{observation}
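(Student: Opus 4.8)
The plan rests on one elementary fact about the displacement vectors of Definition~\ref{def:impl}: for $T\subseteq\{0,1,2,3\}$, the origin lies in $\mathrm{conv}\{\delta_c:c\in T\}$ if and only if $T=\{0,1,2,3\}$. One direction is the identity $\tfrac14(\delta_0+\delta_1+\delta_2+\delta_3)=0$. For the other, if $0\notin T\subseteq\{1,2,3\}$ then every convex combination of the $\delta_c$, $c\in T$, equals $\alpha(\lambda_1,\lambda_2,\lambda_3)$ with the $\lambda_i\ge0$ summing to $1$, hence is nonzero; and if $0\in T$ but some color of $\{1,2,3\}$ is missing, say $3\notin T$, then in any convex combination $\lambda_0\delta_0+\lambda_1\delta_1+\lambda_2\delta_2$ the third coordinate is $-\alpha\lambda_0$, so it vanishes only when $\lambda_0=0$, and then the first two coordinates are $\alpha\lambda_1,\alpha\lambda_2$, which are not both $0$. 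By the symmetry of colors $1,2,3$ this settles every proper subset.

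Given this, the forward direction is short. If $\F(x)=x$ then $C(x)-x=0$, which by Definition~\ref{def:impl} exhibits $0$ as a convex combination of the vectors $C(z)-z=\delta_{f(\cube_z)}$ over cubelet centers $z$ within $L_\infty$-distance $2^{-n}$ of $x$ (here $\cube_z$ is the cubelet with center $z$). First, every coordinate of $x$ must lie in $[2^{-n-1},\,1-2^{-n-1}]$: otherwise all such centers have a common coordinate index pinned to $0$ or to $2^n-1$, so by the prescribed exterior coloring their colors omit at least one of $\{0,1,2,3\}$, whence $C(x)-x\neq0$ by the fact above. Hence $x$ lies in a cell of the grid of cubelet centers, and (for $x$ interior to that cell --- the degenerate cases are analogous and only sharpen the conclusion) the only centers within $L_\infty$-distance $2^{-n}$ of $x$ are the eight corners of that cell. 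Discarding the zero-weight ones, the fact above forces the surviving corner-cubelets to realize all four colors; but the eight cubelets whose centers are the corners of a center-grid cell all meet at the grid vertex $v$ at the cell's center, so $v$ is panchromatic and $\|x-v\|_\infty\le 2^{-n-1}<2^{-n}$.

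For the converse I would argue locally around a panchromatic vertex $v$. Let $Q_v$ be the cell of the grid of cubelet centers with center $v$, i.e.\ the axis-aligned cube of side $2^{-n}$ about $v$; its eight corners are exactly the centers of the eight cubelets meeting at $v$, which (as $v$ is panchromatic) realize all four colors, so on $Q_v$ the displacement $D(x):=C(x)-x$ is the trilinear interpolant of eight corner values drawn from $\{\delta_0,\delta_1,\delta_2,\delta_3\}$, each occurring at least once. I would compose $\F|_{Q_v}$ with the nearest-point retraction onto $Q_v$ --- legitimate because $\|D\|_\infty\le\alpha=2^{-2n}$ is far below the side $2^{-n}$ --- apply Brouwer's theorem to the resulting self-map of $Q_v$, and check that the fixpoint it yields is, outside a collar of width $\alpha$ about $\partial Q_v$, a genuine fixpoint of $\F$, necessarily within $2^{-n-1}$ of $v$. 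The one real obstacle, as I see it, is controlling $D$ on $\partial Q_v$: for an arbitrary assignment of four colors to the cubelets around a vertex the trilinear interpolant can be zero-free (already in the two-dimensional analogue, e.g.\ with the colors in the ``wrong'' cyclic order), so the argument must invoke the specific oriented arrangement of $\{0,1,2,3\}$ around a panchromatic vertex forced by a DGP-style \bmf\ (Definition~\ref{def:bmf}) --- roughly, color $c\in\{1,2,3\}$ occupying the cubelet on the negative side of the $c$-th coordinate axis through $v$, color $0$ opposite --- from which $D$ is nonvanishing on $\partial Q_v$ with local degree $\pm1$, forcing a zero of $D$ inside $Q_v$. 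Since Observation~\ref{obs:impl} is applied in Section~\ref{sec:hom:red} only to DGP-style \bmf s, this structure is always available.
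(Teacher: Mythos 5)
The paper offers no argument for this Observation at all --- it is stated as an immediate consequence of Definition~\ref{def:impl} --- so the comparison is really between your write-up and the one-line argument the authors evidently had in mind. Your forward direction is exactly that intended argument, done carefully: $0\in\mathrm{conv}\{\delta_c:c\in T\}$ forces $T=\{0,1,2,3\}$ (indeed with weights $(\tfrac14,\tfrac14,\tfrac14,\tfrac14)$), the exterior coloring of Definition~\ref{def:bmf} rules out fixpoints near the boundary, and the eight cubelet centers within $L_\infty$-distance $2^{-n}$ of an interior point all meet at one grid vertex, which is therefore panchromatic. The only soft spot is the dismissal of the degenerate case where $x$ lies on the grid of cubelet centers: there up to $27$ centers are within distance $2^{-n}$, they do not share a vertex, and ``only sharpens the conclusion'' is not an argument; it is harmless for the implementation actually built in Theorem~\ref{thm:bmf2circ} (where the convex combination is supported on a single simplex of one cubelet), but under the bare Definition~\ref{def:impl} it deserves a sentence.

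The genuine gap is the ``vice versa'' half. You correctly diagnose that panchromaticity alone cannot suffice (your two-dimensional ``wrong cyclic order'' example is exactly right), but you then only assert the rescue: the claim that for a DGP-style \bmf\ the displacement is nonvanishing on $\partial Q_v$ with local degree $\pm1$ is stated, not proved, and it is the entire content of this direction. Moreover it is pinned to a trilinear interpolant of the eight corner values of $Q_v$, which is an interpolation scheme that neither Definition~\ref{def:impl} prescribes nor the paper uses: the implementation of Theorem~\ref{thm:bmf2circ} linearly interpolates over a simplicial decomposition of each cubelet, with the value at each cubelet vertex set to the \emph{average} of the displacements of the (up to eight) incident cubelet centers --- so near $v$ the field is not your trilinear interpolant (note $\F(v)-v$ itself vanishes only when each color occurs exactly twice around $v$), and the boundary/degree analysis would have to be carried out for that scheme and for the specific local arrangement the DGP construction produces at its panchromatic vertices. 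Since the paper only ever invokes the forward direction (in the proof of Theorem~\ref{thm:pspace-brouwer}), this gap does not infect anything downstream, but as a proof of the Observation as stated your proposal establishes one direction and leaves the other as an unverified sketch.
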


\begin{theorem}\label{thm:bmf2circ}
A Brouwer-mapping function having complexity parameter $n$ can be implemented
using a linear arithmetic circuit having $poly(n)$ gates, that computes a
continuous function.
\end{theorem}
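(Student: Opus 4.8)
\emph{Proof plan.}\ The plan is to reuse the construction of~\cite{DGP,CDT}, specialised to the three-dimensional cube~$\cube$ and to the \brmapfn s of Definition~\ref{def:bmf}, and to verify that every operation it needs is available in a \emph{linear} arithmetic circuit in the sense of Definition~\ref{def:circuit}. First I would extract, from a point $x=(x_1,x_2,x_3)\in\cube$, the data locating $x$ in the grid of cubelet \emph{centres}: for each coordinate~$r$ a cascade of $O(n)$ comparator gates (each comparing a running residual against a fixed dyadic rational) produces the $n$ high-order bits $b_{r,1},\dots,b_{r,n}$ of $x_r$, i.e.\ the index of the cubelet slab containing $x_r$, while a constant multiple of the final residual gives the rescaled local coordinate $y_r := 2^n x_r - \lfloor 2^n x_r\rfloor \in [0,1]$. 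These comparators are the only gates that behave discontinuously, which is the point I will have to manage.

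Next I would define the displacement $\F(x)-x$ by barycentric interpolation over Kuhn's (``standard'') triangulation of the grid of centres. The cell of the centre grid containing~$x$ has the eight cubelet centres $z_0,\ z_0+2^{-n}e_r,\ \dots,\ z_0+2^{-n}(1,1,1)$ as its vertices (for standard basis vectors $e_1,e_2,e_3$), and the Kuhn simplex containing~$x$ is the one indexed by the decreasing order of $y_1,y_2,y_3$, which comparator gates identify. On that simplex the four barycentric coordinates of~$x$ are $\lambda_0 = 1-\max_r y_r$, $\lambda_3 = \min_r y_r$, and $\lambda_1,\lambda_2$ the two consecutive gaps of the sorted $y$-values --- all min/max/difference combinations, summing to~$1$ --- and its four vertices are four of the eight centres, each with a cubelet-index triple obtained from the bits $b_{r,\cdot}$ by incrementing the (comparator-selected) appropriate coordinates. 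I would then feed each of these four triples into an arithmetic simulation of the bmc~$B$ that computes the given \brmapfn, replacing each boolean gate by a $\min$, $\max$ or ``$1-(\cdot)$'' gate (legitimate since all wire values are~$0$ or~$1$), at cost $O(|B|)$ gates per evaluation, to read off the colour $c_i\in\{0,1,2,3\}$ of each vertex~$i$. Finally I put
\[
  \F(x) = x + \sum_{c=0}^{3} w_c(x)\,\delta_c, \qquad
  w_c(x) = \sum_{i=0}^{3} \min\bigl(\lambda_i,\ \mathbf{1}[c_i=c]\bigr),
\]
where $\mathbf{1}[c_i=c]\in\{0,1\}$ is the boolean function of the two colour bits of vertex~$i$ that tests whether its colour equals~$c$. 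Here the linearity restriction really bites: one cannot multiply two circuit values, but the only products the formula needs are of a barycentric coordinate $\lambda_i\in[0,1]$ by a $\{0,1\}$-valued quantity, and for such $a,b$ one has $a\cdot b=\min(a,b)$; moreover each $\delta_c$ is a \emph{fixed} vector, so every coordinate of $\sum_c w_c(x)\delta_c$ is then just a constant-coefficient combination of the $w_c(x)$. Altogether $\F$ is built from $O(|B|)+poly(n)$ permitted gates --- that is, $poly(n)$ gates for the basic and DGP-style \bmf's that matter here.

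The step I expect to require the most care is checking that $\F$ is genuinely \emph{continuous}, despite being assembled from the discontinuous comparators that select the slab of each~$x_r$ and the Kuhn simplex of~$x$. This is exactly why the interpolation is performed over a triangulation: across a shared facet of two Kuhn simplices the single barycentric coordinate that ``disagrees'' vanishes there, so both branches of the formula return the same displacement; the interpolants on two adjacent centre-grid cells likewise agree on their common face; hence the ``any value if the inputs are equal'' freedom of a comparator at a tie has no effect, and $\F$ is a well-defined, continuous (indeed piecewise-linear) map. Granting continuity, the remaining claims follow easily: at a centre~$z$ we have $y=0$, so $\lambda_0=1$ and $\F(z)-z = \delta_c$ for $c$ the colour of the cubelet of~$z$; for general~$x$, $\F(x)-x = \sum_i \lambda_i\bigl(\F(v_i)-v_i\bigr)$ is a convex combination of the values $\F(z)-z$ taken at centres $z=v_i$ lying within $L_\infty$-distance $2^{-n}$ of~$x$, so $\F$ implements the given \brmapfn\ in the sense of Definition~\ref{def:impl} (and Observation~\ref{obs:impl} then applies); and, being piecewise linear with pieces of diameter $\Theta(2^{-n})$ and oscillation $O(2^{-2n})$, $\F$ has Lipschitz constant $O(1)$.

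It remains only to see that $\F$ maps $\cube$ into itself. Every $\F(x)-x$ is a convex combination of $\delta$-values of cubelets within $L_\infty$-distance $2^{-n}$ of~$x$, with $\|\delta_c\|_\infty = \alpha = 2^{-2n} \ll 2^{-n}$; and on the exterior cubelets --- all of which lie within $2^{-n}$ of $\partial\cube$ --- the hard-wired colours of Definition~\ref{def:bmf} are chosen precisely so that the associated $\delta_c$ points \emph{into}~$\cube$ along the nearest face ($\delta_1,\delta_2,\delta_3$ away from the faces $x_r=0$, and $\delta_0$ away from the faces $x_r=1$). A short case analysis on these boundary slabs --- with the obvious clamping to a legitimate cell in the thin outermost layer where $x$ lies beyond the last centre --- then gives $\F(\cube)\subseteq\cube$, completing the plan.
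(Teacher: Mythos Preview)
Your construction is correct and takes a genuinely different route from the paper's. The paper subdivides each cubelet $\cube_{ijk}$ into twelve tetrahedra all sharing the cubelet's centre, with the remaining vertices at the cubelet's \emph{corners}; it assigns to each corner the average of the displacements $\delta_c$ at the (up to eight) adjacent cubelet centres, and then interpolates linearly within whichever of the twelve simplices contains~$x$ (a twelve-way case split in the circuit). You instead triangulate the \emph{dual} grid, whose vertices are the cubelet centres themselves, by Kuhn's standard $3!$-simplex decomposition: sorting the three local fractional coordinates selects one of six tetrahedra whose four vertices are four neighbouring centres, and you interpolate directly between their $\delta$-values using the $\min(\lambda_i,\mathbf{1}[c_i=c])$ trick to avoid a true product. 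Both yield a continuous piecewise-linear $\F$ satisfying Definition~\ref{def:impl}; your version is arguably cleaner in that it never introduces the auxiliary corner values and needs only four bmc evaluations per point, at the price of the half-cubelet-thick boundary layer (where $x$ lies outside the convex hull of all centres) that must be handled by the clamping you mention. One small slip to fix: you define $y_r := 2^n x_r - \lfloor 2^n x_r\rfloor$ and later claim $y=0$ at a cubelet centre, but with that definition $y=(\tfrac12,\tfrac12,\tfrac12)$ there; to locate $x$ in the \emph{centre} grid you should extract the bits of $x_r - 2^{-n-1}$ instead, after which everything you wrote goes through.
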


The proof gives a new technique to implement any Brouwer-mapping function $f$ as a {\em continuous}
function $\F$ that uses a linear arithmetic circuit. This is in
contrast with the corresponding techniques of~\cite{CDT,DGP} that used a
sampling-based approach in order to smooth the transition between distinct cubelets.
The sampling-based approach results in discontinuous functions, where
Browder's theorem would not be applicable (although it could still be applied to
a continuous approximation). The technique only works in constant dimension;
if can be extended to higher dimension using the ``snake-embeddings'' of~\cite{CDT}.
\ifnarrowcols \else (See Section~\ref{sec:polyerror}.) \fi
\ifnarrowcols \else The general idea of the technique is to take a simplicial decomposition of
the domain $\cube$, give rules for obtaining the values of $\F$ at the vertices
of the decomposition, and linearly interpolate within each simplex. \fi

\begin{proof}
Let $f:\K^{(n)}\longrightarrow\{0,1,2,3\}$ be a Brouwer-mapping function.
We construct a continuous Brouwer function $\F:\cube\longrightarrow\cube$
computed by a linear arithmetic circuit $C$ as follows.

For $x$ at the center of cubelet $\cube_{ijk}$, set $\F(x)-x = \delta_c$ where $c=f(\cube_{ijk})$.
For $x$ a vertex of cubelets $\K_x \subset \K^{(n)}$, set $\F(x)-x$ to be the average of $\F(z)-z$
for all points $z$ at the centers of members of $\K_x$. The relevant points $z$ can be obtained using a
polynomial-sized piece of circuitry.

Let $\S$ be a simplicial decomposition of the unit cube consisting of 12 simplices
that share a vertex at the center of the cube, and all other vertices are vertices of
the cube. Let $\S_{ijk}$ be the simplicial decomposition of cubelet $\cube_{ijk}$
obtained by scaling $\S$ down to $\cube_{ijk}$. Applied to all cubelets
in $\K^{(n)}$ this results in a
highly regular decomposition $\S^{(n)}$ of $\cube$ into $12.2^{3n}$ simplices.

For any $x\in\cube$, $F(x)$ is obtained by linearly interpolating between
the vertices of the simplex in $\S^{(n)}$ that contains $x$. Clearly $\F$ is continuous.

The result follows from the following claim:
\begin{proposition}
$\F$ as defined above, may be computed by a linear arithmetic circuit of size polynomial in $n$.
\end{proposition}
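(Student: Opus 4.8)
The plan is to build the linear arithmetic circuit $C$ in three stages mirroring the definition of $\F$. Given $x=(x_1,x_2,x_3)\in\cube$: stage~1 locates the cubelet $\cube_{ijk}\ni x$ together with the rescaled local coordinates $u=(u_1,u_2,u_3)\in[0,1]^3$ of $x$ inside it; stage~2 evaluates $\F$ at the nine vertices of $\S^{(n)}$ that matter --- the centre $O$ of $\cube_{ijk}$ and its eight corners; stage~3 interpolates within the simplex of $\S_{ijk}$ containing $x$. For stage~1 I would use the standard bit-extraction gadget: with $y_0=x_\ell$, iterate $b_t:=[\,y_{t-1}\ge\tfrac{1}{2}\,]$ (a comparator gate) and $y_t:=2\!\left(y_{t-1}-\tfrac{1}{2} b_t\right)$ for $t=1,\dots,n$; then $b_1\cdots b_n$ are the binary digits of $i=\lfloor 2^n x_\ell\rfloor$ and $y_n=u_\ell$, and since $y_{t-1}-\tfrac{1}{2} b_t\in[0,\tfrac{1}{2}]$ no wire leaves $[0,1]$. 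This is $O(n)$ gates per coordinate; from the digits of $(i,j,k)$ a few $O(n)$-gate ripple adders also produce the digits of the neighbouring indices $i\pm1$, the out-of-range cases being flagged by the one-gate tests $[\,i=0\,]$ and $[\,i=2^{n}-1\,]$.

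For stage~2 I would first convert the Brouwer-mapping circuit $B$ into a linear arithmetic circuit by simulating it gate by gate: $\neg z\mapsto 1-z$, $\wedge\mapsto\min$, $\vee\mapsto\max$, which are exact on $\{0,1\}$-valued wires, giving an $O(|B|)=poly(n)$-gate circuit that outputs the two-bit colour of a cubelet from the binary encoding of its index. Running $O(1)$ copies of it on $\cube_{ijk}$ and its (at most $26$) neighbours, and turning the colour bits into $\{0,1\}$-valued indicators $[\,f(\cube')=c\,]$ by $\min/(1-\cdot)$, I obtain the displacement $\delta_{f(\cube')}$ of every relevant cubelet. The centre value is $\F(O)=O+\delta_{f(\cube_{ijk})}$ and each corner value is $\F(q)=q+\tfrac{1}{|\K_q|}\sum_{\cube'\in\K_q}\delta_{f(\cube')}$, with $|\K_q|\in\{1,2,4,8\}$ read off from the $[\,i=0\,],[\,i=2^n-1\,]$ flags; the cubelet positions and the points $O,q$ are themselves linear functions of the extracted digits. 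Since $\alpha=2^{-2n}$ is far smaller than the distance $2^{-n-1}$ of every cubelet centre from $\partial\cube$, and since the prescribed colours make every boundary displacement point inward, one checks that $\F(O),\F(q)\in[0,1]^3$ and that all partial sums in their evaluation stay in $[0,1]$.

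For stage~3 I would select the simplex of $\S_{ijk}$ containing $x$ by a depth-$O(1)$ decision tree of comparators on $\{u_1,u_2,u_3,1-u_1,1-u_2,1-u_3\}$: the coordinate and side minimising $\min(u_\ell,1-u_\ell)$ name one of the six square pyramids with apex $O$, and one further comparison picks one of its two tetrahedra; the $12$ leaves yield a one-hot family of indicators $[\,x\in T\,]$, and ambiguous comparator outputs on simplex boundaries are harmless because the interpolant agrees there. For each of the twelve $T$ --- vertex set $\{O\}$ plus three corners --- the barycentric coordinates $\lambda^T_v(u)$ are fixed affine functions, one obtained from another by a coordinate permutation or reflection, e.g.\ $\lambda_O=2u_3$, $\lambda_{(1,1,0)}=u_2-u_3$, $\lambda_{(1,0,0)}=u_1-u_2$, $\lambda_{(0,0,0)}=1-u_1-u_3$, each computable by $O(1)$ gates whose partial sums stay in $[0,1]$. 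Then
\[
\F(x)=\sum_T[\,x\in T\,]\sum_{v\in T}\lambda^T_v\,\F(v)=x+\sum_T[\,x\in T\,]\sum_{v\in T}\lambda^T_v\bigl(\F(v)-v\bigr),
\]
using $\sum_{v\in T}\lambda^T_v v=x$. The key point that makes the right-hand side expressible with the given gates is that every coordinate of $\F(v)-v$ is a sum of a bounded number of terms $(\text{constant})\cdot[\,\cdot\,]$ with the constant in $\{0,\pm\alpha,\pm\tfrac{\alpha}{2},\pm\tfrac{\alpha}{4},\pm\tfrac{\alpha}{8}\}$ and hence of absolute value $\le\alpha$; so every product occurring above is either (constant)$\times$(value in $[0,1]$) or ($\{0,1\}$-indicator)$\times$(value in $[0,1]$), and the latter equals the two arguments' $\min$. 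Splitting each coordinate of the correction into a positive part $P_\ell\in[0,\alpha]$ and a negative part $N_\ell\in[0,\alpha]$ and outputting $\F(x)_\ell=(x_\ell-N_\ell)+P_\ell$, a short case analysis (using $\alpha$ small and the exterior-colour convention: when $x_\ell<\alpha$ all relevant displacements have nonnegative $\ell$-th coordinate, and when $x_\ell>1-\alpha$ all have nonpositive $\ell$-th coordinate) shows $0\le x_\ell-N_\ell$ and $\F(x)_\ell\le1$ pointwise, so no clamping occurs and $C$ computes $\F$ exactly.

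The step I expect to be the main obstacle is stage~3: realising a barycentric combination with a gate set that lacks multiplication of two variables. The construction gets around this precisely because the ``variable part'' of $\F$ is a convex combination of the displacements $\delta_c$, each a \emph{constant} vector of tiny $L_\infty$-norm $\alpha$, so every needed product collapses to (constant)$\times$(variable) or to a $\min$; the remaining work is the bookkeeping that keeps all wires inside $[0,1]$. Summing the three stages, $C$ has $O(n)+O(|B|)+O(1)=poly(n)$ gates, proving the proposition.
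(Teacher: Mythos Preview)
Your proof is correct and follows essentially the same three-stage plan as the paper's (locate the containing simplex, evaluate $\F$ at its vertices via the boolean circuit $B$, and linearly interpolate). In fact you are more explicit than the paper on the one delicate point: the paper simply asserts that ``we never need to multiply two computed quantities together,'' whereas you spell out why---the vertex displacements $\F(v)-v$ are sums of constants times $\{0,1\}$-valued colour indicators, so each product $\lambda^T_v\cdot(\text{const})\cdot I$ needed in the interpolation reduces to a constant-scaling followed by a $\min$ with the indicator.
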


\begin{proof}
If $x$ is not a vertex of $\S^{(n)}$, the circuit can determine the vertices
of a simplex $S_x\in \S^{(n)}$ that contains $x$.
There may be more than one such simplex, in which case {\em it does not matter which is chosen}.

The circuit has 12 cases to consider, depending on the orientation of $S_x$. Each case can be
handled in the same general manner, by subtracting some vertex $v$ of $S_x$ from $x$, and multiplying
$(x-v)$ by some constants (the coefficients of the linear function that interpolated between
the vertices of $S_x$). Note that we never need to multiply two computed
quantities together, multiplication only ever takes place between a computed quantity and
a constant, as required for a linear arithmetic circuit.
\end{proof}
\end{proof}

\subsection{The \PSPACE\ reduction to linear arithmetic circuits}\label{sec:hom:red}

In this subsection, we establish the \PSPACE-completeness of the problem {\sc Browder fixpoint},
mentioned in the Introduction, which can now be made precise as follows.
We use two {\bmf}s $f_0$ and $f_1$, where $f_0$ is the {\em basic \bmf} of Definition~\ref{def:bmf},
and $f_1$ shall be a DGP-style {\bmf} that encodes an instance of {\sc End of the line} as
constructed in~\cite{DGP}.
Let $\F_0$ and $\F_1$ be implementations of $f_0$ and $f_1$ using linear
arithmetic circuits as described in the proof of Theorem~\ref{thm:bmf2circ}.
For $\F:\cube\longrightarrow\cube$ let $\F^{(i)}$ denote the $i$-th component of $\F$. For $i=1,2,3$ let
\begin{equation}\label{eq:homotopy}
\ifnarrowcols
\begin{tabular}{@{}r@{}c@{}l}
\else
\begin{tabular}{rcl}
\fi
$\bar{\F}_t^{(i)}$  & = & $(\F_0^{(i)}-t) + (\F_1^{(i)}-(1-t))$ \\
$\F_t^{(i)}$        & = & $\max(\min(\F^{(i)}_0,\F^{(i)}_1), \bar{\F}_t^{(i)})$
\end{tabular}
\end{equation}
where in~(\ref{eq:homotopy}), the outputs of operators $+$ and $-$ are restricted
to lie in $[0,1]$ (so, rounding to 0 or 1 if needed). $\F_t$ interpolates continuously
between $\F_0$ and $\F_1$ and is constructed from them using elements of the linear
arithmetic circuits of Definition~\ref{def:circuit} (which is useful later; the natural alternative
$\F_t = t\F_0 + (1-t)\F_1$ does not have this property.)

\begin{observation}\label{obs:cts}
For all $t\in[0,1]$, $\F^{(i)}_t$ is Lipschitz continuous, with Lipschitz value $<2.2^{-n}$.
\end{observation}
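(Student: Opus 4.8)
The plan is to separate two ingredients: the ``shape'' of the basic circuits $\F_0,\F_1$ on the simplicial mesh of Theorem~\ref{thm:bmf2circ}, and the combinatorics of the $\max,\min,+,-$ and clipping operations in~(\ref{eq:homotopy}). All Lipschitz constants are in the $L_\infty$ metric, and $\mathrm{clip}(s)=\max(0,\min(1,s))$ denotes clipping to $[0,1]$. Writing $x_i$ for the $i$-th coordinate of $x\in\cube$, the quantity I actually bound is the Lipschitz constant of the perturbation $\F^{(i)}_t(x)-x_i$ --- i.e.\ $\F^{(i)}_t$ is the $i$-th coordinate projection plus a very flat map --- and I show this perturbation is $(<2\cdot2^{-n})$-Lipschitz.

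First I would bound the displacements of the two basic circuits. Recall from the proof of Theorem~\ref{thm:bmf2circ} that $\F_0$ and $\F_1$ are affine on each simplex of $\S^{(n)}$, with $\F(v)-v$ at any vertex $v$ of $\S^{(n)}$ a convex combination of $\delta_0,\delta_1,\delta_2,\delta_3$ (a single $\delta_c$ at a cubelet centre; the mean of the incident cubelets' $\delta$-vectors at a cubelet corner). Since $\|\delta_c\|_\infty=\alpha=2^{-2n}$ for every $c$, and $\cube_{ijk}$ is incident to all of its own corners, $\F(v)-v$ moves by at most $2\alpha$ along any edge of $\S^{(n)}$, whose length is $\Theta(2^{-n})$. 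Writing the displacement on a simplex $S$ as $\sum_v\mu_v(x)\bigl(\F(v)-v\bigr)$ with barycentric coordinates $\mu_v$ and using $\sum_v\nabla\mu_v\equiv0$ to subtract off a reference vertex, its Jacobian is a sum of at most three matrices each of operator norm $\le 2\alpha\,\|\nabla\mu_v\|$; evaluating the barycentric gradients for the twelve simplex orientations of $\S^{(n)}$ --- and using that the corner values are \emph{means} of the $\delta$-vectors, together with which colour patterns the constructions of $f_0$ and $f_1$ can actually realise around one cubelet --- yields that each coordinate of $\F_0(x)-x$ and of $\F_1(x)-x$ is $(<2\cdot2^{-n})$-Lipschitz.

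Next I would push this through~(\ref{eq:homotopy}). Using $\F^{(i)}_0,\F^{(i)}_1\in[0,1]$, analyse the clippings in $\bar{\F}^{(i)}_t=\mathrm{clip}\bigl(\mathrm{clip}(\F^{(i)}_0-t)+\mathrm{clip}(\F^{(i)}_1-(1-t))\bigr)$. If both clipped summands are positive then each equals its unclipped value, so $\bar{\F}^{(i)}_t=\mathrm{clip}(\F^{(i)}_0+\F^{(i)}_1-1)\le\min(\F^{(i)}_0,\F^{(i)}_1)$ (as $\F^{(i)}_0,\F^{(i)}_1$ are nonnegative and $\le1$); if neither is positive then $\bar{\F}^{(i)}_t=0\le\min(\F^{(i)}_0,\F^{(i)}_1)$; in both cases $\F^{(i)}_t=\min(\F^{(i)}_0,\F^{(i)}_1)$. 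If only the first clipped summand is positive it equals $\F^{(i)}_0-t\in(0,1]$, whence $\F^{(i)}_t=\max\bigl(\min(\F^{(i)}_0,\F^{(i)}_1),\F^{(i)}_0-t\bigr)$, which is one of those two functions; symmetrically if only the second is positive. Thus at every point $\F^{(i)}_t(x)-x_i$ equals one of $\min(\F^{(i)}_0-x_i,\F^{(i)}_1-x_i)$, $(\F^{(i)}_0-x_i)-t$, $(\F^{(i)}_1-x_i)-(1-t)$, each of which is either a $\min$ of the two per-coordinate displacements bounded above or a constant shift of one of them, hence $(<2\cdot2^{-n})$-Lipschitz. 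Since $\F^{(i)}_t$ is itself piecewise affine and on each affine piece coincides with one of these three functions, $\F^{(i)}_t(x)-x_i$ is globally $(<2\cdot2^{-n})$-Lipschitz, establishing Observation~\ref{obs:cts}.

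The main obstacle is the first step: a crude estimate already gives $O(2^{-n})$, but getting the constant down to (at most) $2$ needs the bounded-but-fiddly bookkeeping indicated --- the twelve simplex types of $\S^{(n)}$, the averaging at cubelet corners, and the admissible local colourings in $f_0$ and $f_1$. The second step is a finite case analysis together with the standard fact that a continuous function that is $L$-Lipschitz on each of finitely many polyhedral pieces is globally $L$-Lipschitz.
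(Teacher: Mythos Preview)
The paper does not prove Observation~\ref{obs:cts}; it is stated without argument and used only so that Browder's theorem applies, for which any finite Lipschitz bound on $(t,x)\mapsto\F_t(x)$ suffices. There is thus nothing in the paper to compare your argument against.

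On your write-up: note first that you are not proving the literal statement. As written, $\F^{(i)}_t$ itself is claimed to be $(<2\cdot2^{-n})$-Lipschitz, which is false --- $\F^{(i)}_t$ differs from the coordinate projection $x\mapsto x_i$ by $O(\alpha)$ and hence has Lipschitz constant near~$1$. Your decision to bound the displacement $\F^{(i)}_t(x)-x_i$ instead is the only reading under which the quoted constant makes sense (another possibility is that ``$2.2^{-n}$'' is simply a typo for ``$2$''); you should flag this reinterpretation explicitly. Your Step~2 is correct and clean: the case analysis on~(\ref{eq:homotopy}) shows that $\F^{(i)}_t(x)-x_i$ pointwise coincides with one of three globally defined continuous functions built from $\F^{(i)}_0-x_i$ and $\F^{(i)}_1-x_i$ by $\min$ or a constant shift, and the polyhedral gluing argument transfers their Lipschitz constant with no loss. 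The genuine gap is Step~1. You assert, but do not prove, that the per-coordinate displacements of $\F_0,\F_1$ are $(<2\cdot2^{-n})$-Lipschitz; a direct calculation on a simplex of $\S^{(n)}$ (apex at the cubelet centre, base a triangle of cubelet corners) gives $\|\nabla(\F^{(i)}-x_i)\|_1\le C\cdot2^{-n}$ with a crude constant $C=8$, and it is far from clear that the ``admissible local colourings'' of $f_0$ and the DGP construction force $C<2$ --- near the $\{1,2,3\}$-chromatic line all four $\delta_c$ genuinely contribute at adjacent corners. Since the exact constant is immaterial to the paper, the honest fix is to state and prove the $O(2^{-n})$ bound on the displacement (which your outline already delivers) and observe that this is all that is required.
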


$\F_0$ has a unique fixpoint close to $2^{-n}(1,1,1)$. $\F_0$ is a ``basic Brouwer function''
which forms the starting-point of homotopies we consider. Hence Observation~\ref{obs:cts} and
Browder's fixpoint theorem implicitly define a corresponding fixpoint of $\F_1$.

Define an  {\em approximate fixpoint} of $\F:\cube\longrightarrow\cube$ to be
a point $x\in\cube$ with $|\F(x)-x|\leq \alpha/5$ (recall $\alpha=2^{-2n}$).

\begin{theorem}\label{thm:pspace-brouwer}
It is \PSPACE-complete to find, within accuracy $2^{-n}$, the coordinates
of the fixpoint of $\F_1$ that corresponds to the homotopy of~(\ref{eq:homotopy}).
It is also \PSPACE-complete to find the coordinates of an approximate fixpoint of $\F_1$
that would be obtained by following a sequence of approximate fixpoints
of $\F_t$ in which consecutive points are within distance $\alpha$ of each other.
\end{theorem}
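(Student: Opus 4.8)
The plan is to reduce from the \PSPACE-completeness of Proposition~\ref{prop:graph} (the "connected-to-origin panchromatic vertex" problem for a Brouwer-mapping circuit), relying on the machinery already assembled: $\F_0$ implements the basic \bmf\ $f_0$, $\F_1$ implements a DGP-style \bmf\ $f_1$ encoding an {\sc Oeotl} instance, and by Observation~\ref{obs:impl} fixpoints of $\F_0,\F_1$ sit within $2^{-n}$ of panchromatic vertices of $f_0,f_1$. Membership in \PSPACE\ was already asserted ({\sc Browder Fixpoint} is in \PSPACE), so the work is hardness. First I would analyse the structure of the intermediate functions $\F_t$ defined by~(\ref{eq:homotopy}). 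The key observation is that at a point $x$ in the interior of a cubelet of color $c$, $\F_0(x)-x=\delta_c^{(0)}$ and $\F_1(x)-x=\delta_c^{(1)}$ where the superscripts indicate the colorings of $f_0$ and $f_1$ at the cubelet containing $x$; the $\max/\min$ construction in~(\ref{eq:homotopy}) is engineered so that $\F_t$ at such a point behaves, coordinatewise, like an interpolation that is "dominated" by whichever of $\F_0,\F_1$ currently has the larger displacement, with the $-t$ / $-(1-t)$ shifts breaking ties as $t$ sweeps from $0$ to $1$. I would make precise the claim that the approximate-fixpoint set of $\F_t$, as a subset of $[0,1]^3\times[0,1]$ (the $(x,t)$ "prism"), forms a connected structure whose $t=0$ slice is the unique fixpoint of $\F_0$ near $2^{-n}(1,1,1)$ and whose $t=1$ slice consists of fixpoints of $\F_1$ near panchromatic vertices of $f_1$.

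The heart of the argument is to show that the Browder path from the $\F_0$-fixpoint, followed to $t=1$, lands at the \emph{specific} panchromatic vertex of $f_1$ that is connected to the origin via non-zero-colored cubelets. I would argue this in two stages. Stage one: identify, for each $t\in(0,1)$, the set of approximate fixpoints of $\F_t$. Because $f_0$ colors all internal cubelets $0$, the displacement $\delta_0^{(0)}=(-\alpha,-\alpha,-\alpha)$ points "inward toward the origin" everywhere in the interior; combined with $f_1$'s coloring, for each fixed $t$ the only places where $\F_t(x)\approx x$ are near vertices adjacent to cubelets whose $f_1$-colors include $\{1,2,3\}$ (since the $1,2,3$-displacements $\delta_1,\delta_2,\delta_3$ are the only ones that can cancel against $\delta_0$ and against each other), i.e.\ near the $\{1,2,3\}$-chromatic line $G_B$ of the proof of Proposition~\ref{prop:graph}. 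Stage two: track how the "active" point on this line moves with $t$. At $t=0$ the active point is forced (by $f_0$'s structure) to be the corner near $2^{-n}(1,1,1)$; as $t$ increases the shift terms $(\F_0^{(i)}-t)$ weaken the contribution of $\F_0$ and strengthen $\F_1$'s, so the approximate-fixpoint locus migrates continuously along the $\{1,2,3\}$-colored line --- exactly the line that starts at $2^{-n}(1,1,0)$ and ends at the origin-connected panchromatic vertex $v_{\mathrm{end}}$ --- and at $t=1$ only points near $v_{\mathrm{end}}$ survive. The connectedness along the way is precisely the Browder path, and the intermediate value theorem / a standard non-constructive continuity argument (as flagged in "Our Approach") supplies the existence of the connecting suffix without our having to compute it.

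Given this, both halves of the theorem follow. For the first statement: recovering the fixpoint of $\F_1$ "that corresponds to the homotopy" within accuracy $2^{-n}$ means recovering $v_{\mathrm{end}}$ up to $2^{-n}$, which by the local rigidity of the DGP implementation (Observation~\ref{obs:impl}: fixpoints and panchromatic vertices are in $2^{-n}$-correspondence) lets us read off the {\sc Oeotl} solution encoded by $f_1$ in polynomial time; conversely an {\sc Oeotl} oracle plus the polynomial-time map $v\mapsto b(v)$ of Proposition~\ref{prop:graph} produces $v_{\mathrm{end}}$ and hence the fixpoint, so the problem is \PSPACE-hard, and \PSPACE-complete with the stated membership. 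For the second statement, the $\alpha$-step discrete sequence of $\alpha/5$-approximate fixpoints is just a finite sampling of the same Browder path (its existence guaranteed by Browder's theorem exactly as in the {\sc Browder fixpoint} discussion), so any endpoint it can reach is again within $2^{-n}$ of $v_{\mathrm{end}}$, and the same reduction applies.

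The main obstacle I expect is Stage one/two above: proving that for \emph{every} intermediate $t$ the approximate-fixpoint set of $\F_t$ really is pinned to a neighbourhood of the $\{1,2,3\}$-colored line and moves monotonically-in-structure along it as $t$ grows. The danger is spurious approximate fixpoints of $\F_t$ created by the $\max/\min$ and the clamping to $[0,1]$ near cubelet boundaries or near the faces of $\cube$, which could in principle open up "shortcut" components of the fixpoint locus not present in either $\F_0$ or $\F_1$. Handling this requires a careful case analysis of~(\ref{eq:homotopy}) coordinate-by-coordinate at cubelet centers and then using the convex-combination/interpolation clause of Definition~\ref{def:impl} (plus the simplicial interpolation of Theorem~\ref{thm:bmf2circ}) to rule out off-line fixpoints in the transition regions --- essentially the continuous analogue of the smoothing arguments of~\cite{CDT,DGP}, but now needing genuine continuity rather than the sampling trick, which is exactly why the new implementation in Theorem~\ref{thm:bmf2circ} was built.
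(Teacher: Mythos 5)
There is a genuine gap. Your hardness argument hinges on the ``Stage one / Stage two'' claim --- that for \emph{every} intermediate $t$ the approximate-fixpoint set of $\F_t$ is pinned to a neighbourhood of the $\{1,2,3\}$-chromatic line and that the active locus migrates along that line, ``monotonically-in-structure,'' to $v_{\mathrm{end}}$ --- and you yourself flag this as the main unresolved obstacle. As stated it is both unproven and stronger than what is true or needed: nothing forces the motion along the line to be structurally monotone (indeed the construction in Section~\ref{sec:exp-changes} produces exponentially many reversals of $t$ along the Browder path), and a full characterization of the fixpoint set of each $\F_t$, including ruling out spurious components created by the $\max/\min$ and clamping, is exactly the hard part you defer. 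So the proposal does not yet contain a proof that the homotopy endpoint encodes the {\sc Oeotl} solution.

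The paper's proof avoids this entirely with a much weaker barrier argument, by contradiction: suppose the endpoint $x_1$ lies (by Observation~\ref{obs:impl}) within $2^{-n}$ of a panchromatic vertex $v$ of $f_B$ that is \emph{not} connected to the origin through non-zero cubelets. Since the Browder set $P\subseteq\cube\times[0,1]$ is connected and starts near $2^{-n}(1,1,1)$, its projection must cross the $0$-colored region separating the two components; after subdividing cubelets so that $0$-colored separating layers have thickness at least $3$, some $(x,t)\in P$ has $x$ at distance at least $2^{-n}$ from every non-zero cubelet of $f_B$ (hence also of $f_0$). At such an $x$, every coordinate of $\F_0(x)-x$ and of $\F_1(x)-x$ is $<-\alpha/5$ (convex combinations of $\delta_0$ only), and the sandwich built into~(\ref{eq:homotopy}) forces every coordinate of $\F_t(x)-x$ below $-\alpha/5$ as well, so $x$ is not even an approximate fixpoint of $\F_t$ --- contradiction. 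Thus no tracking of the line, and no analysis of intermediate fixpoint sets, is required: connectivity of $P$ plus the impossibility of crossing the $0$-colored barrier already pins the endpoint to the unique origin-connected panchromatic vertex. If you want to salvage your route, you would have to replace the migration claim by precisely this kind of exclusion argument; as written, the central step is missing.
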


\begin{proof}
We reduce from the problem defined in Proposition~\ref{prop:graph} as follows.
Let $B$ be a Brouwer-mapping circuit derived from {\sc Oeotl}-instance $(S,P)$ using Proposition~\ref{prop:graph}
and let $f_B:\K^{(n)}\longrightarrow\{0,1,2,3\}$ be the function computed by $B$.
Let $\F_1:\cube\longrightarrow\cube$
be the function computed by a linear arithmetic circuit that implements $f_B$,
and $\F_0$ be computed by a circuit that implements the basic \bmf\ $f_0$
(where both implementations apply Theorem~\ref{thm:bmf2circ}). $\F_t$ is given by~(\ref{eq:homotopy}).

Let $P$ be a connected subset of $\cube\times[0,1]$ such that for any $(x,t)\in P$, $x$ is a fixpoint
of $\F_t$, and $P$ contains $x_0\in(\cube,0)$ and $x_1\in(\cube,1)$. Browder's fixpoint
theorem (with Observation~\ref{obs:cts}) assures us that such a $P$ exists. We claim that $x_1$
is within distance $2^{-n}$ of the unique solution to $B$ of the problem specified in
Proposition~\ref{prop:graph} (and hence, given $x_1$ we can easily construct this solution).

Suppose otherwise. For $x_1$ to be a fixpoint (even an approximate one) of $\F_1$,
by Observation~\ref{obs:impl} it must be within distance $2^{-n}$ of a panchromatic vertex $v$ of $f_B$.
But now, $v$ is not connected to the origin via non-zero cubelets of $f_B$.
By connectivity of $P$, there must exist $(x,t)\in P$ such that $x$ lies within a cubelet $K_x$
where $f_B(K_x)=0$.

We may assume further that $x$ is at least $2^{-n}$ distant from any non-zero cubelet of $f_B$.
This follows provided we assume that connected components
of non-zero cubelets of $f_B$ are separated from each other by a layer of
0-colored cubelets of thickness at least 3. This may be safely assumed by increasing
$n$ by a factor of 3 and subdividing the cubelets. We note that
\begin{enumerate}
\item each entry of vector $\F_0(x)-x$ is $<-\alpha/5$, and
\item each entry of $\F_1(x)-x$ is $<-\alpha/5$.
\end{enumerate}
It follows that for $t\in[0,1]$, each entry of $f_t(x)-x$ is less than $-\alpha/5$,
since coordinatewise, $f_0 \leq f_t \leq f_B$. That means that $x$ cannot be
an approximate fixpoint of any $f_t$, contradicting the assumption as required.

Since $x$ is at least $2^{-n}$ distant from any non-zero cubelet of $f_B$, it is also
at least $2^{-n}$ distant from any non-zero cubelet of $f_0$, since for any cubelet
$K_{ijk}$, $f_B(K_{ijk})=0~\Longrightarrow~f_0(K_{ijk})=0$.
The implementation of any \bmf\ $f$ as a function $\F$ computed by a linear arithmetic circuit,
as referred to in Theorem~\ref{thm:bmf2circ}, ensures that
$\F(x)-x$ is a convex combination of vectors $\F(z)-z$ for cubelet centers $z$ in the
vicinity of $x$, and since all those cubelet centers are colored 0, we have that the
entries of $\F(x)$ are all less than $-\alpha/5$, as required.
\end{proof}

\section{The Linear Tracing Procedure}\label{sec:lt}

We now turn to games and Nash equilibrium. Let $\G$ denote an $n\times
n$ game that we wish to solve, assumed to be chosen by an adversary.
$\G_0$ is a game 
with a unique ``obvious'' solution. In $\G_0$ each player
receives payoff 1 for his first action, and payoff 0 for all others,
regardless of what the other player does.

\ifnarrowcols
\else
\begin{equation}\label{eqn1}
\G_0 =
\begin{tabular}{cccccc}
                  & \vline & $s^c_0$  & $s^c_1$   & $\ldots$ & $s^c_{n-1}$ \\ \hline
$s^r_0$           & \vline & $(1,1)$  & $(1,0)$   & $\ldots$ & $(1,0)$  \\
$s^r_1$           & \vline & $(0,1)$  & $(0,0)$   & $\ldots$ & $(0,0)$  \\
$\vdots$          & \vline & $\vdots$ & $\vdots$  &          & $\vdots$ \\
$s^r_{n-1}$       & \vline & $(0,1)$  & $(0,0)$   & $\ldots$ & $(0,0)$  \\
\end{tabular}
\end{equation}
\fi

In the problem {\sc Linear tracing} the solution consists of the Nash equilibrium
of $\G$ that is connected to the unique equilibrium $(s^r_0,s^c_0)$ of $\G_0$
via equilibria of convex combinations $(1-t)\G_0+t\G$.
We can also define an approximate version of this problem, where instances
include an additional parameter~$\epsilon$, and we seek an $\epsilon$-Nash
equilibrium that is connected to the solution of ${\cal G}_0$ via a
sequence of $\epsilon$-approximate solutions of ${\cal G}_t$. For the two-player case
we assume $\epsilon=0$. For more than 2 players, we need a positive $\epsilon$
to ensure that solutions can be written down as rational numbers.

\begin{theorem}\label{thm:lt}
{\sc Linear tracing} is \PSPACE-complete for 2-person games.
\end{theorem}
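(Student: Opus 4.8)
The plan is to reduce from the problem of Proposition~\ref{prop:graph} (equivalently, from {\sc Oeotl}), by converting the Brouwer-fixpoint construction of Section~\ref{sec:hom} into a two-player game via a refinement of the PPAD-hardness machinery of~\cite{DGP}. Recall that~\cite{DGP} reduces a Brouwer function $\F_1$ computed by a linear arithmetic circuit to a graphical game, and then to a two-player game, in such a way that Nash equilibria of the two-player game encode (approximate) fixpoints of $\F_1$, and hence panchromatic vertices of the underlying \bmf. The difficulty here is not to reproduce that correspondence at the endpoint $t=1$ --- that is essentially the content of Theorem~\ref{thm:pspace-brouwer} transported to games --- but to control the \emph{entire homotopy path} $\G_t=(1-t)\G_0+t\G$, where $\G_0$ is the fixed ``everyone plays action $0$'' game of~(\ref{eqn1}), so that the Browder path emanating from the unique equilibrium $(s^r_0,s^c_0)$ of $\G_0$ provably terminates at the equilibrium encoding the {\sc Oeotl} solution and not at some other equilibrium of $\G$.

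The key steps, in order. First, take the {\sc Oeotl}-instance $(S,P)$, build the Brouwer-mapping circuit $B$ via Proposition~\ref{prop:graph}, and implement the associated \bmf\ $f_B$ as a continuous linear arithmetic circuit $\F_1$ via Theorem~\ref{thm:bmf2circ}; apply the~\cite{DGP} construction to obtain a two-player game $\G$ whose equilibria correspond to approximate fixpoints of $\F_1$, i.e.\ to points near panchromatic vertices of $f_B$. Second --- and this is where the construction must be done with care --- arrange $\G$ so that its relationship to the fixed starting game $\G_0$ mirrors, \emph{on the nose}, the relationship between $\F_1$ and the basic function $\F_0$ in the homotopy~(\ref{eq:homotopy}): the intermediate games $\G_t$ should have the property that their equilibria, read through the~\cite{DGP} encoding, correspond to approximate fixpoints of a function $\F_t$ that interpolates monotonically between $\F_0$ and $\F_1$ in the sense that coordinatewise $f_0\le f_t\le f_B$. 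Concretely, I would design the payoff gadgets so that the ``displacement'' computed inside the game is $\F_t(x)-x$ with $\F_t$ essentially the function of~(\ref{eq:homotopy}) (this is the reason that homotopy was built from $\max,\min,+,-$ rather than the naive convex combination --- those operations are exactly what the~\cite{DGP} game gadgets can compute). Third, invoke Browder's theorem for the game homotopy (via Nash's reduction, as noted in Section~\ref{sec:prelim_hom}): there is a connected path of approximate equilibria from the unique equilibrium of $\G_0$ to some equilibrium $x_1$ of $\G$, and along it the associated point $x$ is always an approximate fixpoint of $\F_t$. Fourth, run the argument of the proof of Theorem~\ref{thm:pspace-brouwer} verbatim: if $x_1$ did not encode the unique panchromatic vertex connected to the origin, then by connectivity of the path and Observation~\ref{obs:impl} some $(x,t)$ on it has $x$ deep inside a $0$-colored region of $f_B$ (using the thickened-boundary trick of increasing $n$ by a factor $3$), and there both $\F_0(x)-x$ and $\F_1(x)-x$ have every coordinate $<-\alpha/5$, hence so does $\F_t(x)-x$ since $f_0\le f_t\le f_B$ coordinatewise, contradicting that $x$ is an approximate fixpoint of $\F_t$. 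Therefore $x_1$ encodes the {\sc Oeotl} solution, which can be decoded in polynomial time. Membership in \PSPACE\ was already noted (the Herings--van den Elzen algorithm witnesses it).

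The main obstacle is step two: verifying that the~\cite{DGP} reduction, which was engineered to certify equilibria of a \emph{single} game, can be carried out in a ``homotopy-faithful'' way so that the games $\G_t$ really do compute $\F_t$ of~(\ref{eq:homotopy}) and so that the \emph{unique} equilibrium of $\G_0$ is the $t=0$ endpoint of the relevant Browder component. There are two sub-issues: ensuring the starting game $\G_0$ genuinely has a unique equilibrium after it is embedded alongside all the auxiliary ``circuit-simulation'' players of the~\cite{DGP} gadget (the auxiliary players must be forced into a unique behaviour at $t=0$, e.g.\ by having their payoffs at $t=0$ make their current strategies strictly dominant, in the spirit of Harsanyi--Selten priors), and ensuring that the path components do not get shortcut --- i.e.\ that following approximate equilibria of $\G_t$ with step size $\alpha$ stays within a single Browder component, which follows from the Lipschitz bound analogous to Observation~\ref{obs:cts}. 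Once the game gadgetry is set up to compute precisely the circuit-homotopy of~(\ref{eq:homotopy}), the topological argument is a direct transcription of the proof of Theorem~\ref{thm:pspace-brouwer}. I expect the write-up to spend most of its length recalling and then extending the~\cite{DGP} gadgets to carry a $t$-parameter, and comparatively little on the contradiction argument itself.
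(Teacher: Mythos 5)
Your overall reduction chain (Oeotl $\rightarrow$ circuit homotopy of Theorem~\ref{thm:pspace-brouwer} $\rightarrow$ two-player game via the machinery of~\cite{DGP}) matches the paper's, and you correctly identify where the difficulty lies; but the way you propose to resolve that difficulty has a genuine gap. You want the intermediate games $\G_t=(1-t)\G_0+t\G_1$ to encode, ``on the nose,'' the intermediate Brouwer functions $\F_t$ of~(\ref{eq:homotopy}), with equilibria of $\G_t$ read through the circuit-encoding as approximate fixpoints of $\F_t$. This cannot be arranged: $\G_0$ is \emph{fixed} by the problem statement (payoff $1$ for the first strategy regardless of the opponent), so the only design freedom is in $\G_1$, and for small and moderate $t$ the term $(1-t)\G_0$ gives a large uniform bonus to the first strategies, forcing both players to concentrate essentially all probability on $s^r_0,s^c_0$. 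The circuit-encoding of~\cite{DGP} requires every strategy pair representing a graphical-game player to carry probability at least $1/2n$; that structure is destroyed by the convex combination with $\G_0$, so equilibria of $\G_t$ simply do not simulate any circuit for small $t$, let alone compute $\F_t$. In short, the tracing parameter $t$ of the two-player homotopy cannot itself serve as the parameter of the Brouwer homotopy, and your ``homotopy-faithful gadget'' step is exactly the point at which the argument as proposed breaks down.

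The paper's proof supplies the missing transduction mechanism, and it is indirect. One first proves a graphical-game version (Proposition~\ref{prop:gg}): a designated player $v_{\switch}$ with constant payoffs is wired in (Lemma~\ref{lem:switch}, Eq.~(\ref{eq:gg-homotopy}), which is where the $\max/\min$ form of~(\ref{eq:homotopy}) is actually exploited) so that $\pp[v_{\switch}]$ --- not $t$ --- plays the role of the internal homotopy parameter interpolating between $f_0$ and $f_1$. Then $\G_1$ is built as an $(n+1)\times(n+1)$ game containing the circuit-encoding game $\G$ plus dummy strategies $s^r_0,s^c_0$ and a small bonus $\delta$ on the column strategy $s^c_k$ representing $v_{\switch}$. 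Lemma~\ref{lem:gt} shows that any equilibrium of $\G_t$ with $\Pr[s^r_0],\Pr[s^c_0]<1$, after renormalizing away the dummy strategies, is an equilibrium of a \emph{biased} version of $\G$ in which $s^c_k$ gets a bonus $\delta\Pr[s^r_0]/(1-\Pr[s^r_0])$; as $t$ increases along the Browder path this effective bias decays continuously to zero, so $\pp[v_{\switch}]$ is driven continuously from $0$ to $1$. The correspondence with the Brouwer homotopy therefore holds only on a \emph{suffix} of the path (the paper takes the longest suffix with $t\geq 0.5$, whose existence is argued non-constructively), after which your concluding connectivity/contradiction argument in the style of Theorem~\ref{thm:pspace-brouwer} does apply. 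Without the switch-player and decaying-bonus construction, or some equivalent device decoupling $t$ from the circuit-level parameter, the proposal does not go through.
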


The same result then holds for strategic-form games with more than 2 players.
It holds for a value of $\epsilon$ that is exponentially small; we could again use the ideas
of~\cite{CDT} to obtain a version where $\epsilon$ is inverse polynomial.

\ifnarrowcols  \else
Our reduction uses the result of the previous section,
along with earlier reductions between strategic-form games and graphical
games. $\G_0$ has a similar role to the basic Brouwer function $\F_0$,
but the correspondence is indirect; generally $\F_0$ is associated with one of the
``intermediate games'' $\G_t$ for $t>0$.
\fi

\subsection{Brief overview of the proof ideas}

The following is a brief overview of the rest of this Section~\ref{sec:lt}.
Membership of \PSPACE\ can be deduced from~\cite{HE}. The reduction from the
\PSPACE-complete discrete Brouwer fixpoint problem of the previous section,
applies the idea from~\cite{DGP} of going via {\em graphical games} to
normal-form games.  We derive a type of graphical game in which a specific
player (denoted $v_{\switch}$) acts as a switch, allowing the remaining players to simulate
either the basic Brouwer-mapping function, or one associated with an
instance of the search for a discrete Brouwer fixpoint. $v_{\switch}$
governs this behavior via his choice of either one of two alternative
strategies, and we show that a {\em continuous} path of equilibria from one choice
to the other, results in an equilibrium that ultimately represents a solution
to {\sc Oeotl}. The graphical game is then encoded as a 2-player game such that
the linear-tracing procedure corresponds to this continuous path of
equilibria in the graphical game.

\ifnarrowcols  \else
\subsection{Membership of \PSPACE}\label{sec:lt:membership}

Herings and van den Elzen~\cite{HE} show how to find approximate
equilibria on multi-player games, implicitly constructing a degree-2
graph that has a vertex corresponding to $\NE_0$, the Nash equilibrium
of $\G_0$. Given a simplicial decomposition of $D\times [0,1]$ (where
$D$ is the space of mixed strategies of $\G$) vertices of the graph
correspond to simplices and subsimplices, and edges are implicitly 
defined by a lexicographical pivoting rule that governs a choice of
movement from simplex to adjacent simplex, at each step of the
algorithm. It can be checked that this algorithm establishes membership
of \PSPACE\ for multiplayer {\sc Linear tracing}.
\fi

\subsection{Graphical Games}

In a graphical game~\cite{K}, each player is a vertex of a graph, and his payoffs depend on his
own and his neighbors' actions. For a low-degree graph, this is one way that games
having many players may be represented concisely. A homotopy between two graphical
games $\GG_0$ and $\GG_1$ would require that these games have the same 
underlying graph, so that they differ only in their numerical
payoffs. In the graphical games considered here, each player has just
2 actions and 3 neighbors. The main result of this section is

\begin{proposition}\label{prop:gg}
Consider graphical games that contain a special player $v_{\switch}$ whose payoffs
are constant (unaffected by his own actions or the other players'). The following
problem is \PSPACE-complete: find a Nash equilibrium of the game where
$v_{\switch}$ plays 1, that is topologically connected to a Nash equilibrium in which
$v_{\switch}$ plays 0, via a path of Nash equilibria in which $v_{\switch}$ plays mixed strategies.
\end{proposition}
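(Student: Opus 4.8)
The plan is to reduce from the \PSPACE-complete discrete Brouwer-fixpoint problem of Proposition~\ref{prop:graph} (equivalently, from the Brouwer-mapping circuit problem), building a graphical game whose equilibria simulate the computation of the associated \bmf, together with a single ``switch'' player $v_{\switch}$ that toggles between simulating the basic \bmf\ $f_0$ and the DGP-style \bmf\ $f_B$. The backbone of the construction is the reduction of~\cite{DGP} from Brouwer functions to graphical games: given a linear arithmetic circuit computing a Brouwer function $\F:\cube\longrightarrow\cube$, there is a graphical game of polynomial size, with each player having $2$ actions and $3$ neighbors, such that the ``output players'' encode (in their mixed strategies) a point $x$, the ``computation players'' evaluate $\F(x)$ gate by gate, and in any Nash equilibrium the encoded $x$ is an approximate fixpoint of $\F$. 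I would run this reduction on both $\F_0$ and $\F_B$ simultaneously, sharing the same set of coordinate-encoding players, and then use $v_{\switch}$ to select which computed update vector is fed back to those players.

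**First I would** recall precisely the \cite{DGP} gadgetry (addition, multiplication-by-constant, comparison, and the ``brittle/robust'' comparator tricks needed to keep equilibria well-behaved) and note that each arithmetic operation of Definition~\ref{def:circuit} — sum, difference, max, min, constant multiple, comparator — is realizable by a constant-size graphical-game gadget; since our circuits are \emph{linear} arithmetic circuits there is no genuine multiplication of two computed quantities, which is exactly what makes the encoding faithful. Next I would build the composite game: players $x^{(1)},x^{(2)},x^{(3)}$ (or rather a bundle of players encoding each coordinate in binary or via the averaging trick of~\cite{DGP}) hold the current point; one block of computation players computes $\F_0(x)-x$, a second block computes $\F_B(x)-x$; then a ``multiplexer'' block reads $v_{\switch}$'s action and the two candidate update vectors, and outputs $(1-s)(\F_0(x)-x)+s(\F_B(x)-x)$ where $s\in[0,1]$ is the mixed probability with which $v_{\switch}$ plays $1$ — this is just the homotopy of~(\ref{eq:homotopy}) implemented inside the graphical game. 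Here $v_{\switch}$'s payoffs are declared constant, so he is free to play any mixed strategy; the multiplexer feeds the interpolated update back to the coordinate players via the standard \cite{DGP} ``move in the direction of the update'' gadget.

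**The correctness argument** then runs as follows. When $v_{\switch}$ plays pure $0$, the game restricted to the remaining players is the \cite{DGP} game for $\F_0$, whose unique (approximate) Nash equilibrium encodes $\F_0$'s unique fixpoint near $2^{-n}(1,1,1)$ — call this profile $\NE_0$; when $v_{\switch}$ plays pure $1$, the game is the \cite{DGP} game for $\F_B$, whose equilibria encode panchromatic vertices of $f_B$. By Browder's theorem applied to the continuous family obtained by letting $s$ range over $[0,1]$ (this is where Observation~\ref{obs:cts} and the continuity guaranteed by Theorem~\ref{thm:bmf2circ} are essential — the \cite{DGP} reduction takes continuous Brouwer functions to a family of graphical games whose equilibrium correspondence is upper-hemicontinuous in $s$), there is a connected path of Nash equilibria of the composite game joining $\NE_0$ to some equilibrium with $v_{\switch}$ playing $1$; I must argue this path can be parameterized so $v_{\switch}$'s probability is monotone, or at least that the path stays inside the ``$v_{\switch}$ mixed'' region except at its endpoints, and that the endpoint with $s=1$ encodes a panchromatic vertex \emph{connected to the origin} — this last point is inherited from Proposition~\ref{prop:graph} exactly as in the proof of Theorem~\ref{thm:pspace-brouwer}: the homotopy cannot ``jump'' to a disconnected panchromatic vertex because passing through a $0$-colored region forces the update vector to have all coordinates $<-\alpha/5$, killing any approximate equilibrium there. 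Since recovering a solution to \textsc{Oeotl} from that connected panchromatic vertex is polynomial-time (Proposition~\ref{prop:graph}), and membership in \PSPACE\ follows from the Herings--van den Elzen homotopy algorithm~\cite{HE} applied to the graphical game, \PSPACE-completeness follows.

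**The main obstacle** I expect is not the gadget construction but the topological bookkeeping connecting the discrete path of Proposition~\ref{prop:graph} to a \emph{continuous} path of Nash equilibria through the ``$v_{\switch}$ mixed'' region: I need the \cite{DGP} equilibrium-to-fixpoint correspondence to behave well as the interpolation parameter $s$ varies continuously (not just at $s\in\{0,1\}$), which requires checking that the gadgets are robust to the small perturbations introduced by mixing, and I need to rule out ``spurious'' equilibria of the composite game — e.g. equilibria in which the coordinate players encode garbage because a comparator gadget is at its indifference point. The standard remedy, which I would adopt, is the interval/averaging encoding of~\cite{CDT,DGP} together with thickening the $0$-colored separating layers (as already done in the proof of Theorem~\ref{thm:pspace-brouwer} by tripling $n$), so that every approximate equilibrium robustly decodes to a point within $2^{-n}$ of a panchromatic vertex and the ``connected to the origin'' property propagates along the whole continuous path.
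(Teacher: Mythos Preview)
Your plan is essentially the paper's own proof: construct the \cite{DGP} graphical-game simulations of $\F_0$ and $\F_1$ sharing the same coordinate players, add an indifferent player $v_{\switch}$ whose mixed strategy selects the interpolation, and then invoke the argument of Theorem~\ref{thm:pspace-brouwer} to conclude that the endpoint with $\pp[v_{\switch}]=1$ encodes the panchromatic vertex connected to the origin. The paper packages exactly this as Lemma~\ref{lem:switch} plus Observation~\ref{intermediate}.

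One genuine technical slip: your multiplexer is written as the convex combination $(1-s)(\F_0(x)-x)+s(\F_B(x)-x)$, which multiplies two \emph{computed} quantities ($s$ and the update vectors) and therefore is not realizable by the linear-arithmetic-circuit gadgets you yourself insist on a few lines earlier. This is precisely why the paper does \emph{not} use the natural convex combination; equation~(\ref{eq:homotopy}) (and its in-game counterpart~(\ref{eq:gg-homotopy})) is the truncated-arithmetic surrogate
\[
\max\bigl(\min(\F_0^{(i)},\F_1^{(i)}),\,(\F_0^{(i)}-s)+(\F_1^{(i)}-(1-s))\bigr),
\]
built from $+,-,\max,\min$ with truncation to $[0,1]$ only. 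You cite~(\ref{eq:homotopy}) but then describe the wrong formula. Using~(\ref{eq:homotopy}) verbatim fixes the inconsistency and, importantly, keeps the game a \emph{linear} graphical game in the sense of Definition~\ref{def:simulation}, which the downstream reduction to two-player games relies on.
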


Let $\F_0$ and $\F_1$ be functions computed by linear arithmetic circuits
that implement Brouwer-mapping functions $f_0$ and $f_1$, where $f_0$ is the ``basic
\bmf'' of Definition~\ref{def:bmf}, and $f_1$ is a DGP-style \bmf\ that
encodes some instance of {\sc End of the line}.


\begin{notation}
In a graphical game in which all players have 2 pure strategies denoted
0 and 1, given a mixed-strategy profile for the players we let
$\pp[v]$ denote the probability that player $v$ plays 1.
\end{notation}

\begin{definition}\label{def:simulation}
\ifnarrowcols{}\else (Linear graphical game; simulation of bmf's and partial bmf's) \fi

Given a bmf $f$, we construct an associated graphical game $\GG_f$ as follows.
$\GG_f$ has 3 special players $(v_x,v_y,v_z)$ whose strategies $(\pp[v_x],\pp[v_y],\pp[v_z])$
represent a point in $\cube$. If $f$ is implemented by $\F:\cube\longrightarrow\cube$ we
use gadgets of~\cite{DGP} to simulate the nodes in the arithmetic circuit that computes
$\F$ (each node of the circuit has an additional associated player in $\GG_f$).
The game can pay them to adjust $(\pp[v_x],\pp[v_y],\pp[v_z])$ in the direction
$\F(\pp[v_x],\pp[v_y],\pp[v_z])$ $-(\pp[v_x],\pp[v_y],\pp[v_z])$.
Then the players $(v_x,v_y,v_z)$ are incentivized to play $\F(\pp[v_x],\pp[v_y],\pp[v_z])$.
Consequently a Nash equilibrium of $\GG_f$ corresponds to a fixpoint of $\F$.
Moreover, an $\epsilon$-Nash equilibrium corresponds to a ${\rm poly}(\epsilon)$-approximate fixpoint of $\F$.
We call $\GG_f$ a {\em linear graphical game} since we only allow players whose payoffs cause them to
simulate the gates of linear arithmetic circuits.

A game of the above kind is said to {\em simulate} $f$. We say further that a game $\GG$
simulates a partial bmf on a subset $S$ of cubelets, if for any $K\in S$, when
$(\pp[v_x],\pp[v_y],\pp[v_z])$ lie at the center of $K$ the players $(v_x,v_y,v_z)$
are incentivized to play $(\pp[v_x],\pp[v_y],\pp[v_z])+\delta_c$, where $c=f(K)$.
\end{definition}

\begin{lemma}~\label{lem:switch}
Given any linear graphical game $\GG_1$ that simulates a \brmapfn\ $f_1$,
we can efficiently construct a new game $\GG^+$ having a player $v_{switch}$
whose behavior can either cause $\GG$ to simulate $f_1$ (if $v_{switch}$ plays 1)
or cause $\GG$ to simulate $f_0$ if instead $v_{switch}$ plays 0.
\end{lemma}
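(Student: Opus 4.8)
The plan is to build $\GG^+$ by placing, side by side, a sub-game that computes the basic implementation $\F_0$ and the given sub-game $\GG_1$ that computes $\F_1$, and then stitching their outputs together with a constant number of extra gadget-players that realise, coordinate by coordinate, the homotopy formula of Equation~(\ref{eq:homotopy}), with the parameter $t$ read off the mixed strategy $\pp[v_{switch}]$ of a new constant-payoff player $v_{switch}$. Concretely, first fix --- by Theorem~\ref{thm:bmf2circ} --- a linear arithmetic circuit $C_0$ of size $poly(n)$ computing a continuous implementation $\F_0$ of the basic \bmf\ $f_0$, and let $C_1$ be the linear arithmetic circuit already simulated by $\GG_1$, which computes $\F_1$. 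Form one combined circuit $C^+$ whose inputs are the three coordinate values $(\pp[v_x],\pp[v_y],\pp[v_z])$ together with an extra input $t$; $C^+$ runs $C_0$ and $C_1$ in parallel on the coordinates and, for each $i\in\{1,2,3\}$, computes $\bar{\F}_t^{(i)}=(\F_0^{(i)}-t)+(\F_1^{(i)}-(1-t))$ and then $\F_t^{(i)}=\max(\min(\F_0^{(i)},\F_1^{(i)}),\bar{\F}_t^{(i)})$, exactly as in~(\ref{eq:homotopy}). Using~(\ref{eq:homotopy}) rather than the literal interpolation $t\F_0+(1-t)\F_1$ is what keeps $C^+$ a \emph{linear} arithmetic circuit: every operation it uses --- addition and subtraction of computed values (including $t$), $\min$, $\max$, and clamping to $[0,1]$ --- is among the gate types of Definition~\ref{def:circuit}, and $C^+$ has only $O(1)$ more gates than $C_0$ and $C_1$ together.

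I would then turn $C^+$ into the graphical game $\GG^+$ exactly as prescribed in Definition~\ref{def:simulation}: re-use the players of $\GG_1$ for the $C_1$ part, introduce one gadget-player per remaining gate of $C^+$, have the coordinate-players $(v_x,v_y,v_z)$ feed their strategies in as the first three inputs, add the new player $v_{switch}$ and route $\pp[v_{switch}]$ in as the input $t$, and pay $(v_x,v_y,v_z)$ to move in the direction $\F_t(\pp[v_x],\pp[v_y],\pp[v_z])-(\pp[v_x],\pp[v_y],\pp[v_z])$ (toward the output of $C^+$ rather than that of $C_1$). Finally give $v_{switch}$ a payoff that is independent of every player's action, so that every value of $\pp[v_{switch}]$ is a best response for him --- the ``constant payoff'' requirement of Proposition~\ref{prop:gg}. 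All of this is polynomial-time in the size of $\GG_1$ and in $n$. For mixed $v_{switch}$ the game simulates $\F_t$ with $t=\pp[v_{switch}]$, which is the property the downstream argument will exploit.

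It remains to check the two endpoint behaviours, a short calculation with the clamping convention of~(\ref{eq:homotopy}). If $v_{switch}$ plays $1$ then $t=1$: $\F_0^{(i)}-1$ clamps to $0$, so $\bar{\F}_1^{(i)}=\F_1^{(i)}$ and hence $\F_t^{(i)}=\max(\min(\F_0^{(i)},\F_1^{(i)}),\F_1^{(i)})=\F_1^{(i)}$, so $(v_x,v_y,v_z)$ are incentivised exactly as in $\GG_1$ and $\GG^+$ simulates $f_1$. If $v_{switch}$ plays $0$ then $t=0$: $\F_1^{(i)}-1$ clamps to $0$, so $\bar{\F}_0^{(i)}=\F_0^{(i)}$ and $\F_t^{(i)}=\max(\min(\F_0^{(i)},\F_1^{(i)}),\F_0^{(i)})=\F_0^{(i)}$, so $\GG^+$ simulates $f_0$. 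The one point requiring care --- and, I expect, the main technical content of the lemma --- is precisely that $t$ enters as a \emph{variable}, the realised mixing probability of $v_{switch}$, rather than as a fixed constant: one must confirm that ``subtract $t$'' is a legitimate gate of a linear arithmetic circuit (it is, being the subtraction of two computed values) and that the DGP gadgets faithfully transmit $\pp[v_{switch}]$ as a numerical input to the rest of the circuitry. Everything else is bookkeeping of the kind already carried out in~\cite{DGP}.
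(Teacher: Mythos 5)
Your construction is essentially the paper's own proof: both place the circuitry for $\F_0$ and $\F_1$ side by side, make $v_{\switch}$ a constant-payoff player whose mixing probability supplies the homotopy parameter $t$, and stitch the two outputs together coordinatewise with gadget players realising Equation~(\ref{eq:homotopy}) (the paper's Equation~(\ref{eq:gg-homotopy})), which is exactly why that clamped $\max/\min$ form, rather than $t\F_0+(1-t)\F_1$, is used. The only difference is bookkeeping --- the paper keeps the two subgames' coordinate players $(v^0_x,v^0_y,v^0_z)$, $(v^1_x,v^1_y,v^1_z)$ and rewires their circuits to read the combined players $(v^+_x,v^+_y,v^+_z)$, whereas you merge everything into one circuit $C^+$ --- and your endpoint clamping calculation is correct.
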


$v_{\switch}$ shall serve as a ``switch'', in allowing the game to
switch between simulating $f_0$ and $f_1$ (using an additional 3 players
$(v^+_x,v^+_y,v^+_z)$ whose strategies represent a point in $\cube$) according to
whether $v_{\switch}$ plays 0 or 1. Of course, $v_{\switch}$ has a key role in the associated
two-player game.

\begin{proof}
For $i\in\{0,1\}$, let $\GG_i$ be a graphical game constructed from $f_i$
according to Definitions~\ref{def:bmf}, \ref{def:impl}, \ref{def:simulation}.
$\GG_i$ has 3 players/vertices whose mixed strategies, as represented
by the probabilities that they play 1, represent a point in $\cube$.
Denote these players $(v^i_x,v^i_y,v^i_z)$.

Construct a ``combined'' game $\GG^+$ as follows. $\GG^+$ contains all the
players in $\GG_0$ and $\GG_1$ together with a new player $v_{\switch}$, where
$v_{\switch}$ has the same fixed payoff for playing either 0 or 1.
We add 3 players $(v^+_x,v^+_y,v^+_z)$ whose mixed strategies represent a point in
$\cube$, and players $(\bar{v}^+_x,\bar{v}^+_y,\bar{v}^+_z)$, whose behavior is governed by
\begin{equation}\label{eq:gg-homotopy}
\ifnarrowcols \begin{tabular}{@{}r@{}c@{}l@{}}
\else \begin{tabular}{@{}r@{}c@{}l@{}} \fi
$\pp[\bar{v}^+_x]$ & = & $(\pp[v^0_x]-\pp[v_{\switch}])+(\pp[v^1_x]-(1-\pp[v_{\switch}]))$ \\
$\pp[v^+_x]$ & = & $\max(\pp[\bar{v}^+_x],\min(\pp[v^0_x],\pp[v^1_x]))$
\end{tabular}
\end{equation}
(and similar expressions for $v^+_y$ and $v^+_z$)
where the parentheses in the above expression are important since the outputs
of the operators $+$ and $-$ are truncated to lie in $[0,1]$.

Players from $\GG_0$ and $\GG_1$ that take input from nodes $v^0_i$ or $v^1_i$
respectively, are then modified to take that input from $v^+_i$ instead.
This completes the construction.
\end{proof}

\begin{proof} of Proposition~\ref{prop:gg}:
We reduce from the circuit homotopy of Theorem~\ref{thm:pspace-brouwer}.
Let $\{\F_t~:~t\in[0,1]\}$ be an instance of this circuit homotopy.
Construct $\G_1$ from $\F_1$ as per Definition~\ref{def:simulation}.
Construct $\GG^+$ as in Lemma~\ref{lem:switch}, and we make the following observation.
\begin{observation}\label{intermediate}
Suppose that in $\GG^+$ we have $\pp[v_{\switch}]=t\in(0,1)$. The resulting game
$\GG^+_t$ simulates a partial \brmapfn\ $f_t$ which is implemented by a Brouwer
function $\F_t$ that is (pointwise) a convex combination of $\F_0$ and $\F_1$
and is defined on the subset of cubelets where $f_0=f_1$.
Given a homotopy path of Nash equilibria of $\GG^+$ that start at the unique
equilibrium of $\GG^+$ that satisfies $\pp[v_{\switch}]=0$ and ends at an
equilibrium of $\GG^+$ in which $\pp[v_{\switch}]=1$, there is a corresponding
homotopy path from the fixpoint of $\F_0$ and a fixpoint of $\F_1$
(noting that~(\ref{eq:gg-homotopy}) is essentially the same as~(\ref{eq:homotopy})).
\end{observation}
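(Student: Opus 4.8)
The plan is to read both halves of the Observation directly off the construction of Lemma~\ref{lem:switch}, exploiting the fact that~(\ref{eq:gg-homotopy}) turns into~(\ref{eq:homotopy}) once the gadgets have done their work. Fix $t\in(0,1)$ and let $\GG^+_t$ be $\GG^+$ with $v_{\switch}$'s mixed strategy pinned to ``play $1$ with probability $t$''; since $v_{\switch}$'s payoff is constant this is consistent with equilibrium --- every strategy is a best response, so $v_{\switch}$ imposes no constraint of its own. In any Nash equilibrium of $\GG^+_t$ the~\cite{DGP} gadgets inside the copy $\GG_0$ --- whose circuit inputs were rerouted in Lemma~\ref{lem:switch} from $(v^0_x,v^0_y,v^0_z)$ to $(v^+_x,v^+_y,v^+_z)$ --- force $(\pp[v^0_x],\pp[v^0_y],\pp[v^0_z])=\F_0(\pp[v^+_x],\pp[v^+_y],\pp[v^+_z])$, and likewise $(\pp[v^1_x],\pp[v^1_y],\pp[v^1_z])=\F_1(\pp[v^+_x],\pp[v^+_y],\pp[v^+_z])$. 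Writing $x=(\pp[v^+_x],\pp[v^+_y],\pp[v^+_z])$ and substituting these equalities together with $\pp[v_{\switch}]=t$ into~(\ref{eq:gg-homotopy}) turns it, coordinate by coordinate, into exactly~(\ref{eq:homotopy}) evaluated at $x$; hence $x$ is a fixpoint of $\F_t$, and conversely every fixpoint of $\F_t$ extends through the gadgets to a Nash equilibrium of $\GG^+_t$. Since $\F_t$ of~(\ref{eq:homotopy}) interpolates continuously between $\F_0$ and $\F_1$ (coordinatewise lying between them, and in particular mapping $\cube$ into itself), and since on any cubelet $K$ with $f_0(K)=f_1(K)=c$ both $\F_0$ and $\F_1$ --- and, by a short check of the truncated $\max$/$\min$, also $\F_t$ --- displace the centre of $K$ by $\delta_c$, the game $\GG^+_t$ simulates precisely the partial bmf $f_t$ that agrees with $f_0$ ($=f_1$) on $\{K:f_0(K)=f_1(K)\}$; on cubelets where $f_0,f_1$ disagree $\F_t$ gives an intermediate displacement that is none of the four $\delta_c$, which is why $f_t$ is only partial.

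For the path‑transport half I first record the endpoint behaviour. At $t=0$ the truncations collapse~(\ref{eq:gg-homotopy}) to $\pp[v^+_x]=\pp[v^0_x]$ (and similarly for $y,z$), so $\GG^+_0$ reinstates $\GG_0$'s own feedback loop and its Nash equilibria are exactly the fixpoints of $\F_0$, of which there is a unique one (near $2^{-n}(1,1,1)$), the remaining players being then determined --- this is ``the unique equilibrium of $\GG^+$ with $\pp[v_{\switch}]=0$''. Symmetrically $t=1$ collapses~(\ref{eq:gg-homotopy}) to $\pp[v^+_x]=\pp[v^1_x]$, so the equilibria of $\GG^+_1$ are the fixpoints of $\F_1$. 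Now, given a continuous path $\pi$ of Nash equilibria of $\GG^+$ with $\pp_{\pi(0)}[v_{\switch}]=0$ and $\pp_{\pi(1)}[v_{\switch}]=1$, set $\phi(s)=\bigl((\pp_{\pi(s)}[v^+_x],\pp_{\pi(s)}[v^+_y],\pp_{\pi(s)}[v^+_z]),\,\pp_{\pi(s)}[v_{\switch}]\bigr)\in\cube\times[0,1]$. Being a coordinate readout of the continuous $\pi$, $\phi$ is continuous, and by the previous paragraph each $\phi(s)$ is a pair $(x,t)$ with $x$ a fixpoint of $\F_t$; therefore $\phi([0,1])$ is a connected set of such pairs running from $(x_0,0)$, with $x_0$ the unique fixpoint of $\F_0$, to $(x_1,1)$ for some fixpoint $x_1$ of $\F_1$ --- precisely a homotopy path of the kind singled out by~(\ref{eq:homotopy}). (If $\pi$ is only a path of $\epsilon$-equilibria one obtains, in the same way, a path of ${\rm poly}(\epsilon)$-approximate fixpoints, which is all that is needed; and the existence of some such $\pi$ itself follows from Browder's theorem applied to the continuous family $\{\GG^+_t\}_{t\in[0,1]}$.)

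The internals of the~\cite{DGP} gadgets --- that rerouting the circuit inputs really does make $\GG_0,\GG_1$ evaluate $\F_0,\F_1$ at the common argument $x$, and that exact equilibria match exact fixpoints --- are invoked as a black box from Definition~\ref{def:simulation}, so the one step that needs genuine care, and which I expect to be the main obstacle, is the truncated‑arithmetic bookkeeping in~(\ref{eq:homotopy})/(\ref{eq:gg-homotopy}): verifying that it collapses cleanly to $v^0$ and to $v^1$ at $t=0,1$ and reproduces $\delta_c$ \emph{exactly} on every cubelet where $f_0=f_1$, even though $\delta_c$ has the same tiny magnitude as the quantities being clipped to $[0,1]$. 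That is what pins down that $\GG^+_t$ simulates the partial bmf $f_t$ rather than some nearby but different function, and it is what lets the \PSPACE-hardness of Theorem~\ref{thm:pspace-brouwer} be carried over without loss.
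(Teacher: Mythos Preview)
Your proposal is correct and aligns with the paper's approach, though you have been considerably more thorough: in the paper the Observation is stated inside the proof of Proposition~\ref{prop:gg} and given no separate argument beyond the parenthetical ``noting that~(\ref{eq:gg-homotopy}) is essentially the same as~(\ref{eq:homotopy})''. Your unpacking --- that the rerouted gadgets make $(\pp[v^0_\bullet],\pp[v^1_\bullet])=(\F_0(x),\F_1(x))$ at the common argument $x=(\pp[v^+_x],\pp[v^+_y],\pp[v^+_z])$, that substituting these into~(\ref{eq:gg-homotopy}) literally reproduces~(\ref{eq:homotopy}), and that the coordinate projection $\phi$ carries a path of equilibria to a path of fixpoints --- is exactly the intended reading, and your check of the $t=0,1$ collapses and the $f_0=f_1$ cubelets is the right way to certify the ``partial bmf'' claim.
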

That concludes the proof of Proposition~\ref{prop:gg}.\end{proof}

The following version of Lemma~\ref{lem:switch} is useful in the construction
for Lemke-Howson solutions, later on.

\begin{corollary}~\label{cor:switches}
Given any linear graphical game $\GG_1$ that simulates a \brmapfn\ $f_1$,
we can efficiently construct a new game $\GG^+$ having 2 players $v_{\switch}$ and $v'_{\switch}$
whose behavior can either cause $\GG^+$ to simulate $f_1$ (if both $v_{\switch}$, $v'_{\switch}$ play 1)
or cause $\GG^+$ to simulate $f_0$ if instead either or both play 0.
\end{corollary}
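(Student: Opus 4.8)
The plan is to mimic the construction of Lemma~\ref{lem:switch}, but to replace the single switch player $v_{\switch}$ by a pair $v_{\switch},v'_{\switch}$ whose behaviour is combined through an AND-like operation, so that the effective ``switch value'' is $1$ exactly when both play $1$, and is $0$ whenever either plays $0$. Concretely, I would introduce an auxiliary player $v^\wedge$ (a node in a linear arithmetic circuit) computing $\pp[v^\wedge]=\min(\pp[v_{\switch}],\pp[v'_{\switch}])$; this is a legal gate in the sense of Definition~\ref{def:circuit}, so it can be simulated in a linear graphical game by the gadgets of~\cite{DGP}. Then I would run the construction of Lemma~\ref{lem:switch} verbatim, except that every occurrence of $\pp[v_{\switch}]$ in the homotopy equation~(\ref{eq:gg-homotopy}) is replaced by $\pp[v^\wedge]$. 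This yields a game $\GG^+$ with the two switch players and the same internal structure (the players $(v^+_x,v^+_y,v^+_z)$, the barred players, and the copies of $\GG_0,\GG_1$) as before.

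The key steps, in order, are: (1) state the $\min$-gadget and note that it is realisable by a linear arithmetic circuit and hence by a linear graphical game via the $\GG_f$ construction of Definition~\ref{def:simulation}; (2) build $\GG^+$ as in Lemma~\ref{lem:switch} with $\pp[v_{\switch}]$ replaced by $\pp[v^\wedge]$ throughout~(\ref{eq:gg-homotopy}); (3) observe that when both $v_{\switch},v'_{\switch}$ play $1$ we have $\pp[v^\wedge]=1$, so~(\ref{eq:gg-homotopy}) forces $\pp[v^+_i]=\pp[v^1_i]$ for each coordinate, and the rewired players of $\GG_1$ read their inputs from $(v^+_x,v^+_y,v^+_z)$, so $\GG^+$ simulates $f_1$; (4) symmetrically, if either switch plays $0$ then $\pp[v^\wedge]=0$, which makes~(\ref{eq:gg-homotopy}) collapse to $\pp[v^+_i]=\pp[v^0_i]$ (using $f_0\le f_1$ coordinatewise, exactly as in the analysis of~(\ref{eq:homotopy})), so $\GG^+$ simulates $f_0$. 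Points (3) and (4) are just the already-established behaviour of~(\ref{eq:homotopy}) read with $t=\pp[v^\wedge]$.

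The only real subtlety — and the step I would be most careful about — is that we now need the ``off'' behaviour ($\GG^+$ simulating $f_0$) to hold whenever \emph{at least one} of the two switch players plays $0$, not merely when a single designated player does; the $\min$-gadget is precisely what guarantees this, since $\min(\pp[v_{\switch}],\pp[v'_{\switch}])=0$ as soon as one of the two is $0$, and the truncated arithmetic in~(\ref{eq:gg-homotopy}) then pins $(v^+_x,v^+_y,v^+_z)$ to the $f_0$-branch $(\min(\pp[v^0_i],\pp[v^1_i]))=(\pp[v^0_i])$. A minor point to check is that a pure or $\epsilon$-Nash equilibrium still enforces $\pp[v^\wedge]\approx\min(\pp[v_{\switch}],\pp[v'_{\switch}])$ up to the usual $\mathrm{poly}(\epsilon)$ slack of the $\GG_f$ simulation, which is inherited from Definition~\ref{def:simulation} and does not affect the conclusion. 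Everything else — the polynomial size of $\GG^+$, the fact that $v_{\switch},v'_{\switch}$ have constant payoffs independent of their own actions, and continuity of the interpolation — is identical to Lemma~\ref{lem:switch}.
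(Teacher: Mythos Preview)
Your approach is correct and achieves the corollary, but it differs from the paper's route. The paper does not introduce a $\min$-gadget; instead it \emph{iterates} the construction of Lemma~\ref{lem:switch}: first it uses $v_{\switch}$ in~(\ref{eq:gg-homotopy}) to produce players $(v^+_x,v^+_y,v^+_z)$ interpolating between the $\GG_0$ and $\GG_1$ outputs, and then applies the same interpolation formula a second time with $v'_{\switch}$, now interpolating between $(v^0_x,v^0_y,v^0_z)$ and $(v^+_x,v^+_y,v^+_z)$ to obtain $(v^{++}_x,v^{++}_y,v^{++}_z)$. Thus if either switch is $0$ the cascade collapses to the $\GG_0$ branch, and only when both are $1$ does it pass through to $\GG_1$. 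Your $\min$-based construction is arguably more direct and makes the ``AND'' semantics explicit in a single gadget; the paper's cascaded version avoids the extra auxiliary player $v^\wedge$ but requires two layers of interpolation players. Both are polynomial-size and both give the required endpoint behaviour for the Lemke-Howson application.

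One small slip in your write-up: in step~(4) you justify $\pp[v^+_i]=\pp[v^0_i]$ at $t=0$ by claiming $\min(\pp[v^0_i],\pp[v^1_i])=\pp[v^0_i]$ via ``$f_0\le f_1$ coordinatewise''. That inequality is neither generally true nor needed. When $\pp[v^\wedge]=0$, the truncated arithmetic in~(\ref{eq:gg-homotopy}) gives $\pp[\bar v^+_i]=(\pp[v^0_i]-0)+(\pp[v^1_i]-1)=\pp[v^0_i]+0=\pp[v^0_i]$ (the second summand truncates to $0$), and then $\max(\min(\cdot,\cdot),\pp[v^0_i])=\pp[v^0_i]$ regardless of which of $\pp[v^0_i],\pp[v^1_i]$ is smaller. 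So the conclusion stands, just for a different reason.
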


\ifnarrowcols \else
\begin{proof}
The proof of the previous Lemma is modified as follows.
We re-use Equation~(\ref{eq:gg-homotopy}) for players $(v^+_x,v^+_y,v^+_z)$.
The 3 players $(v^{++}_x,v^{++}_y,v^{++}_z)$ whose mixed strategies represent a point in
$\cube$ have behavior  governed by
\begin{equation}
\begin{tabular}{rcl}
$\pp[\bar{v}^{++}_x]$ & = & $(\pp[v^0_x]-\pp[v'_{\switch}])+(\pp[v^+_x]-(1-\pp[v'_{\switch}]))$ \\
$\pp[v^{++}_x]$ & = & $\max(\pp[\bar{v}^{++}_x],\min(\pp[v^0_x],\pp[v^+_x]))$
\end{tabular}
\end{equation}
again with similar expressions for $v^{++}_y$ and $v^{++}_z$.
\end{proof}
\fi

\subsection{From graphical to two-player strategic-form games}

In this subsection we prove the following theorem, from which Theorem~\ref{thm:lt}
follows\ifnarrowcols{.}\else \ since we have previously noted membership of \PSPACE.\fi

\begin{theorem}\label{thm:ltdetails}
It is \PSPACE-hard to compute the Nash equilibrium of a given 2-player normal-form
game $\G_1$, that is obtained via the linear homotopy that starts from $\G_0$, a version
of $\G_1$ where the payoffs have been changed to give each player payoff 1 for his first
strategy and 0 for the others.
\end{theorem}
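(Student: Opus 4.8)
The plan is to reduce from Proposition~\ref{prop:gg}, the \PSPACE-complete problem of following a homotopy path in a graphical game governed by a switch player $v_{\switch}$, using the standard machinery (from~\cite{DGP} and its refinements in~\cite{CDT}) for simulating a bounded-degree graphical game by a two-player normal-form game. The key technical point is that this simulation must be \emph{homotopy-respecting}: not only must Nash equilibria of the two-player game $\G_1$ correspond to Nash equilibria of the graphical game $\GG^+$ of Lemma~\ref{lem:switch}, but the \emph{convex combinations} $(1-t)\G_0 + t\G_1$ traversed by the linear tracing procedure must correspond (up to the usual polynomial blow-up in approximation error) to the homotopy path of equilibria of $\GG^+$ in which $\pp[v_{\switch}]$ ranges over $[0,1]$.

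First I would recall the DGP reduction: a graphical game with $N$ players, each with $2$ strategies and degree $\le 3$, is simulated by a two-player game in which each of the two players has roughly $N$ ``blocks'' of pure strategies, one block per graphical-game player, and the payoffs are rigged so that in any equilibrium each player spreads probability within a block to encode that graphical-game player's mixed strategy, while cross-block payoffs implement the gadget constraints (addition, multiplication by a constant, comparison) that propagate values along the arithmetic circuit. I would cite this as a black box and only isolate the property I need, namely that the map from graphical-game mixed profiles to two-player mixed profiles is affine in the encoded probabilities, so that interpolating the \emph{payoff matrices} of the two-player game corresponds to interpolating the \emph{incentives} in the graphical game. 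The crucial role of $v_{\switch}$ is that in the two-player game it occupies one distinguished block; giving each player payoff $1$ on the first strategy of that block and $0$ elsewhere (the definition of $\G_0$) forces $\pp[v_{\switch}]=0$ at $t=0$, matching the starting configuration of Proposition~\ref{prop:gg}, and at $t=1$ the genuine game $\G_1$ simulates $\GG^+$ faithfully, forcing $\pp[v_{\switch}]=1$.

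Next I would argue that the linear tracing path for $(\G_0,\G_1)$ projects onto a connected set of approximate equilibria of the intermediate graphical games $\GG^+_t$. Here one must be careful: in $\G_0$ \emph{all} blocks (not just $v_{\switch}$'s) have the trivial ``play strategy $0$'' payoff structure, so $\G_0$ is not literally a simulation of $\GG^+$ with $\pp[v_{\switch}]=0$; rather, as the excerpt warns, $\F_0$ corresponds to an intermediate game $\G_t$ with $t>0$ small. So the reduction should be set up so that for $t$ in an initial interval the non-switch blocks are dominated into their correct ``simulation'' behaviour while $v_{\switch}$ is still pinned near $0$, and only the tail $t\in[t^\ast,1]$ of the tracing path carries the real content --- this is exactly the ``suffix of the homotopy path'' phenomenon flagged in the outline, and it is handled by an intermediate-value / connectivity argument: the tracing path is connected, it starts at the trivial equilibrium and ends at an equilibrium of $\G_1$, hence it must pass through the regime where $\GG^+$ is being faithfully simulated with $\pp[v_{\switch}]$ sweeping from $0$ to $1$, and the endpoint equilibrium of $\G_1$ decodes to an equilibrium of $\GG^+$ with $\pp[v_{\switch}]=1$, i.e.\ a solution to the Proposition~\ref{prop:gg} instance, i.e.\ (unwinding Theorem~\ref{thm:pspace-brouwer} and Proposition~\ref{prop:graph}) a solution to {\sc Oeotl}.

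The main obstacle I anticipate is precisely this decoupling of the trivial starting game $\G_0$ from the ``$v_{\switch}$ plays $0$'' configuration of $\GG^+$: one needs to verify that along the \emph{whole} tracing path, once $t$ exceeds a small threshold the non-switch gadgets have already been forced into correct behaviour (so that the two-player path really is shadowing the graphical-game homotopy of~(\ref{eq:gg-homotopy})), and that the folds and direction reversals of the path do not let it escape back into a region where the encoding breaks down. Making the error bookkeeping go through --- tracking how an $\epsilon$-Nash equilibrium of $\G_t$ becomes a $\mathrm{poly}(\epsilon)$-approximate equilibrium of $\GG^+_t$ and thence a $\mathrm{poly}(\epsilon)$-approximate fixpoint of $\F_t$, with $\epsilon$ chosen small enough (exponentially small, as the remark after the theorem notes) that the $\alpha/5 = 2^{-2n}/5$ resolution of Theorem~\ref{thm:pspace-brouwer} is respected --- is routine but is where the proof will spend most of its length. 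Membership in \PSPACE\ is quoted from Herings--van den Elzen~\cite{HE}, so only \PSPACE-hardness needs to be established here.
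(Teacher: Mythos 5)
Your high-level skeleton matches the paper's (reduce from Proposition~\ref{prop:gg}, pass through the DGP graphical-to-two-player encoding, and argue about a suffix of the tracing path), but the proposal leaves out the one construction that makes the argument go through, and the naive version you describe does not obviously work. In the paper, $\G_1$ is \emph{not} the bare circuit-encoding game $\G$: each player is given one additional strategy $s^r_0$, $s^c_0$ (these become ``the first strategy'' that $\G_0$ bonuses), whose payoffs are constant across the opponent's circuit strategies except for a small $\delta$-bonus at the single entry $(s^r_0,s^c_k)$, where $s^c_k$ is the strategy encoding $v_{\switch}$ playing $0$; moreover $\GG^+$ is modified to give $v_{\switch}$ a small fixed incentive (e.g.\ $0.01$) to play $1$. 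This design is what couples the homotopy to the switch in a controlled way: because $s^r_0$'s payoffs are uniform over $s^c_1,\ldots,s^c_n$ up to the $\delta$ entry (and symmetrically for $s^c_0$), any equilibrium of $\G_t$ with $\Pr[s^r_0],\Pr[s^c_0]<1$ decodes \emph{exactly}, after renormalization, to an equilibrium of $\G$ with a bonus $\delta\Pr[s^r_0]/(1-\Pr[s^r_0])$ on $s^c_k$ (Lemma~\ref{lem:gt}), i.e.\ to a biased version of $\GG^+$; quantitative payoff choices then pin the switch at $0$ for $t\leq 0.5$, force $\Pr[s^r_0]=\Pr[s^c_0]=0$ at $t=1$, and the suffix argument sweeps $\pp[v_{\switch}]$ continuously from $0$ to $1$.

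In your version, where $\G_1$ is the circuit-encoding game itself and the trivial bonus of $\G_0$ lands directly on the first strategy of ``the switch block,'' three things break. First, at $t=1$ nothing forces $\pp[v_{\switch}]=1$: in Proposition~\ref{prop:gg} the switch has constant payoffs, so $\G_1$ has equilibria for every value of $\pp[v_{\switch}]$ and the traced path could terminate at $\pp[v_{\switch}]=0$, encoding only the basic bmf's fixpoint and no {\sc Oeotl} information; you never introduce the countervailing incentive for $v_{\switch}$ to play $1$. Second, $\G_0$ bonuses the first strategy of \emph{both} players, so the row player's bonus hits a strategy pair of some non-switch circuit player and corrupts its gadget behaviour over an entire range of $t$ governed by the same parameter $(1-t)$ that is supposed to be pinning the switch; your claim that ``for $t$ in an initial interval the non-switch blocks are dominated into their correct simulation behaviour while $v_{\switch}$ is still pinned near~$0$'' is exactly the statement that needs a mechanism and a proof, and no such separation of scales is available when a single parameter drives both effects. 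Third, your assertion that interpolating the payoff matrices corresponds to interpolating the graphical-game incentives is not right here: $\G_0$ is not a circuit encoding of anything, so equilibria of $\G_t$ do not correspond to equilibria of $(1-t)\GG_0+t\GG^+$; the correct correspondence (equilibria of $\G_t$ $\leftrightarrow$ equilibria of a $\delta$-biased copy of $\G$) is precisely what the paper's extra strategies are engineered to deliver. As it stands the proposal defers the heart of the proof to ``the reduction should be set up so that\ldots,'' so I would count this as a genuine gap rather than an alternative route.
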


We reduce from the graphical game problem of Proposition~\ref{prop:gg}.
Let $\GG^+$ be a linear graphical game that includes a player $v_{switch}$ as per Proposition~\ref{prop:gg}.
First, modify $\GG^+$ to give $v_{switch}$ a small payment (say, $0.01$) to play 1, and zero to play 0.

We define a homotopy between two-player strategic-form games $\G_0$ and $\G_1$ such that equilibria of
$\G_1$ efficiently encode equilibria of $\GG^+$, and equilibria of $\G_t$ encode equilibria of versions
of $\GG^+$ where $v_{switch}$ has a bias towards playing 0.
We use the reduction of~\cite{DGP} (Section 6.1) from graphical games to
2-player games\ifnarrowcols{.}\else \ (a similar reduction is used in~\cite{CDT} (Section 7) to
express {\em generalized circuits} (similar to our linear arithmetic circuits) as 2-player games).\fi

\ifnarrowcols Given \else In the context of \fi a mixed-strategy profile,
let $\Pr[s]$ denote the probability allocated to pure strategy $s$ by its player.

\begin{definition}\label{def:circgame}
A {\em circuit-encoding} 2-player game $\G$ has a corresponding graphical game
$\GG$ where the graph of $\GG$ is bipartite; denote it $G=(V_1\cup V_2,E)$;
each player (vertex) in $\GG$ has 2 actions (denote them 0 and 1) and payoffs that
depend on the behavior of 2 other players in the opposite side of $G$'s bipartition. Each vertex/action pair
$(v,a)$ of $\GG$ has a corresponding strategy in $\G$; for $v\in V_1$, $(v,a)$ belongs to the
row player and for $v\in V_2$, $(v,a)$ belongs to the column player. The payoffs in $\G$
are designed to ensure that in a Nash equilibrium of $\G$
\begin{itemize}
\item $\Pr[(v,0)]+\Pr[(v,1)] \geq 1/2n$ where $n$ is the number of players in $\GG$
\item if in $\GG$, $v$ plays 1 with probability $\Pr[(v,1)]/(\Pr[(v,0)]+\Pr[(v,1)])$ then
we have a Nash equilibrium of $\GG$.
\end{itemize}
\end{definition}

Let $\G$ be a circuit-encoding game derived from $\GG^+$ according to Definition~\ref{def:circgame}.
Associate $v_{\switch}$ with 2 strategies of of the column player of $\G$, and let $s^c_k$
and $s^c_{k+1}$ be these strategies. Hence a Nash equilibrium of $\G$ corresponds to one of
$\GG^+$ where the value $\pp[v_{\switch}]$ is given by the value $\Pr[s^c_{k+1}]/(\Pr[s^c_k]+\Pr[s^c_{k+1}])$.

\begin{observation}
If we take a circuit-encoding 2-player game, and award one of the players a small
bonus to play $(v,a)$, then this corresponds to incentivizing the player $v$ in
$\GG$ to select strategy $a$. The corresponding incentive for $v$ will be larger,
but only polynomially larger.
\end{observation}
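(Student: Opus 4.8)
The plan is to show that awarding the bonus in $\G$ yields, verbatim, the circuit-encoding game (in the sense of Definition~\ref{def:circgame}) of the graphical game obtained by awarding a \emph{scaled-up} bonus to $v$ in $\GG$, and then to read the scaling factor off the reduction of~\cite{DGP}. First I would isolate the one quantitative feature of that reduction that is needed. Each player $v$ of $\GG$ is given two strategies $(v,0),(v,1)$ on one side of $\G$; as recorded in Definition~\ref{def:circgame}, a stabiliser mechanism keeps $\Pr[(v,0)]+\Pr[(v,1)]\ge 1/2n$ at every Nash equilibrium, and --- this is the point --- the stabiliser contributes the \emph{same} amount to the payoff of $(v,0)$ and of $(v,1)$. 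The remaining, action-sensitive part of the payoff to $(v,a)$ is an undistorted scaled copy of $v$'s $\GG$-payoff for action $a$: there is a fixed rational $\lambda_v = 1/{\rm poly}(n)$, determined by $v$'s gadget (it comes from the $\Theta(1/n)$ scaling with which neighbours' values are represented), such that for any mixed profile of $\G$ with recovered profile $\hat\sigma$ in $\GG$,
\[
  \payoff_\G\!\bigl((v,1)\bigr)-\payoff_\G\!\bigl((v,0)\bigr)
  \;=\;
  \lambda_v\bigl(U^v_1(\hat\sigma)-U^v_0(\hat\sigma)\bigr),
\]
where $U^v_a(\hat\sigma)$ is $v$'s expected payoff in $\GG$ for playing $a$ against the recovered strategies of its neighbours. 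Extracting this identity from~\cite{DGP} --- in particular checking that the stabiliser part is genuinely action-independent and cancels in the difference, while the rest is scaled by $\lambda_v$ without distortion --- is the step that needs care, and is the main obstacle; everything else is bookkeeping.

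So suppose we give the owner of $(v,\cdot)$ a bonus $\beta>0$ for playing $(v,1)$, obtaining $\G^\beta$ (the case of a bonus on $(v,0)$ is symmetric). In $\G^\beta$ the left-hand side of the identity increases by exactly $\beta$, hence equals $\lambda_v\bigl(U^v_1(\hat\sigma)-U^v_0(\hat\sigma)+\beta/\lambda_v\bigr)$. Put $\gamma:=\beta/\lambda_v={\rm poly}(n)\cdot\beta$. Then this is precisely the action-sensitive payoff difference that the reduction of~\cite{DGP} produces from $\GG^\gamma$, the graphical game equal to $\GG$ except that $v$'s payoff for action $1$ is raised by $\gamma$: indeed $\lambda_v$, the stabiliser payoffs, and every other player's gadget are untouched by the bonus, so $\G^\beta$ is exactly the circuit-encoding game of $\GG^\gamma$. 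By Definition~\ref{def:circgame}, the recovery map $x_v=\Pr[(v,1)]/(\Pr[(v,0)]+\Pr[(v,1)])$ therefore sends every Nash equilibrium of $\G^\beta$ to a Nash equilibrium of $\GG^\gamma$, i.e.\ of the graphical game in which $v$ has been incentivised to play $a$ with incentive $\gamma$ satisfying $\beta\le\gamma={\rm poly}(n)\cdot\beta$. This is the asserted correspondence.

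Finally I would observe that smallness of $\beta$ plays no role in the argument: the stabiliser keeps $\Pr[(v,0)]+\Pr[(v,1)]$ near $1/2n$ whatever the bonus, so a bonus only shifts the equilibrium value $x_v$ (it biases which action carries that mass) rather than breaking the encoding --- "small" in the statement merely reflects the downstream use, where turning $\beta$ into $\gamma={\rm poly}(n)\cdot\beta$ is harmless. Thus the write-up will be short once the displayed scaling identity has been extracted from~\cite{DGP}, which is where I would concentrate the effort.
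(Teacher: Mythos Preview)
The paper does not prove this statement: it is stated as a bare Observation with no accompanying argument, relying on the reader's familiarity with the reduction of~\cite{DGP}. Your proposal therefore supplies considerably more than the paper does, and the outline is sound: the stabiliser payoffs in the circuit-encoding game are action-independent and cancel in the difference $\payoff_\G((v,1))-\payoff_\G((v,0))$, leaving only the gadget term that encodes $v$'s best-response condition in $\GG$, and a flat bonus $\beta$ on $(v,a)$ shifts that difference by exactly~$\beta$.

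One caveat on your displayed identity: the factor you call $\lambda_v$ is not literally a fixed rational determined by the gadget alone. In the encoding, the action-sensitive part of $\payoff_\G((v,a))$ is a sum over opponent strategies $(w,b)$ with $w$ a neighbour of $v$, weighted by $\Pr[(w,b)]=\hat\sigma_w(b)\cdot(\Pr[(w,0)]+\Pr[(w,1)])$; the slot masses $\Pr[(w,0)]+\Pr[(w,1)]$ are only guaranteed to lie in $[1/2n,1]$ and may vary across equilibria. So the clean single-constant identity need not hold, and in particular $\G^\beta$ need not be \emph{literally} the DGP image of one fixed $\GG^\gamma$. What does hold is that at every equilibrium of $\G^\beta$ the recovered profile satisfies $v$'s best-response condition in $\GG$ with an added incentive $\gamma$ lying in $[\beta,{\rm poly}(n)\cdot\beta]$ --- which is precisely the (informal) content of the Observation and all that is used downstream. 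This does not change the structure of your argument, only the form of the statement you should aim to extract from~\cite{DGP}.
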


Let $\G_0$ be a $(n+1)\times(n+1)$ game with strategies $\{s^r_0,\ldots
s^r_n\}$ for the row player, and $\{s^c_0,\ldots s^c_n\}$ for the column
player. Payoffs are as follows: each player receives 1 for playing $s^r_0$
or $s^c_0$, and 0 for $s^r_j$ or $s^c_j$ for $j>0$.
\ifnarrowcols  \else (Thus $\G_0$ is a $(n+1)\times(n+1)$ version of Equation~(\ref{eqn1}).)\fi

Rescale the payoffs of $\G$ to all lie in the range $[0\cdotp9,1\cdotp1]$.
Let $\G_1$ be a $(n+1)\times(n+1)$ game with strategies
$\{s^r_0,\ldots,s^r_n\}$ for the row player, and
$\{s^c_0,\ldots,s^c_n\}$ for the column player. Payoffs are as follows:

\begin{itemize}
\item $(s^r_0,s^c_0)$ results in payoffs $(0,-1)$ for the players.\footnote{The
two-component payoff vectors assign the first component
to the row player and the second component to the column player.}
\item $(s^r_0,s^c_j)$ for $j>0$ results in payoffs $(0,\frac{3}{4})$.
\item $(s^r_j,s^c_0)$ for $j>0$ results in payoffs $(-1,\frac{3}{4})$ for $j\not=k$,
      and $(-1,\frac{3}{4}+\delta)$ (for $\delta$ inverse polynomial in $n$) for $j=k$
\item The rest of $\G_1$ is a copy of $\G$ above.
\end{itemize}

\ifnarrowcols
\else
\[
\G_1 =
\begin{tabular}{c|cccc}
         & $s^c_0$  & $s^c_1\cdots s^c_{k-1}$ & $s^c_k$ & $s^c_{k+1}\cdots s^c_n$ \\ \hline
$s^r_0$  & $(0,-1)$ & $(0,\frac{3}{4})\cdots(0,\frac{3}{4})$ & $(0,\frac{3}{4}+\delta)$ & $(0,\frac{3}{4})\cdots(0,\frac{3}{4})$ \\
$s^r_1$  & $(-1,\frac{3}{4})$  &                         &         &       \\
$\vdots$ & $\vdots$ &                         &   $\G$  &       \\
$s^r_n$  & $(-1,\frac{3}{4})$  &                         &         &       \\
\end{tabular}
\]
\fi

Let $\G_t=(1-t)\G_0+t\G_1$.
The above payoffs have been chosen so that Nash equilibria satisfy:
in $\G_1$, players do not use $s^r_0$ or $s^c_0$;
in $\G_{0\cdotp6}$, players both have a proper mixture of $s^r_0$ and $s^c_0$
with their other strategies. Since $\G$'s payoffs were rescaled to lie in
$[0\cdotp9,1\cdotp1]$, $\Pr[s^r_0]$ and $\Pr[s^c_0]$ can be shown to lie in $[0\cdotp1,0\cdotp9]$,
which can be checked from the following payoff ranges for $\G_{0\cdotp6}$:
\[
\ifnarrowcols \begin{tabular}{@{}r@{}|c|c@{}|}
\else \begin{tabular}{r|c|c|} \fi
\multicolumn{1}{r}{}
 & \multicolumn{1}{c}{$s^c_0$} & \multicolumn{1}{c}{$s^c_1\ldots s^c_n$} \\
\cline{2-3}
 $s^r_0$ & $(0\cdotp4,-0\cdotp2)$    & $(0\cdotp4,0\cdotp45+\delta)$ \\
\cline{2-3}
 $s^r_1\ldots s^r_n$ & $(-0\cdotp6,0\cdotp85)$ & $([0\cdotp54,0\cdotp66],[0\cdotp54,0\cdotp66])$   \\
\cline{2-3}
\end{tabular}
\]
{\em Thus a continuous path of equilibria should at some stage allocate gradually less and less
probability to $s^r_0$ and $s^c_0$ as $t$ increases.}

\begin{observation}\label{subgameNE}
In any Nash equilibrium ${\cal N}$ of $\G_1$, the players assign probability 0 to
$s^r_0$ and $s^c_0$, and consequently ${\cal N}$ consists of a Nash
equilibrium of $\G$, restricting to strategies $s^r_j$, $s^c_{j'}$ for $j,j'>0$.
\end{observation}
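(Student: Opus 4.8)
The plan is to remove the two $0$-indexed strategies in a fixed order: first argue that $s^c_0$ is \emph{strictly dominated} for the column player in $\G_1$, so it carries probability $0$ in every Nash equilibrium; then argue that, once the column player puts no weight on $s^c_0$, the strategy $s^r_0$ is strictly inferior to every other row strategy and hence also gets probability $0$; and finally observe that the surviving block of strategies carries exactly a rescaled copy of $\G$, so any equilibrium ${\cal N}$ restricts to a Nash equilibrium of $\G$.

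For the first step I would exhibit the dominating column mixture $\sigma=\tfrac1n\sum_{j=1}^{n}s^c_j$ (uniform over $s^c_1,\dots,s^c_n$) and compare it with $s^c_0$ against each pure row strategy. Against $s^r_0$, strategy $s^c_0$ pays the column player $-1$ whereas $\sigma$ pays $\tfrac34$. Against any $s^r_j$ with $j>0$, strategy $s^c_0$ pays at most $\tfrac34+\delta$, while $\sigma$ pays the average of the column player's $\G$-payoffs in row $j$, which lies in $[0\cdotp9,1\cdotp1]$ because $\G$ has been rescaled into that interval. Since $\delta$ is inverse polynomial in $n$ we have $\tfrac34+\delta<0\cdotp9$, so $\sigma$ strictly beats $s^c_0$ against every pure row strategy; strict domination (even by a mixed strategy) then forces $\Pr[s^c_0]=0$ in any Nash equilibrium ${\cal N}$.

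For the second step, in ${\cal N}$ the column player uses only $\{s^c_1,\dots,s^c_n\}$, so the row player gets $0$ from $s^r_0$ (every entry $(s^r_0,s^c_j)$ with $j>0$ pays the row player $0$) but at least $0\cdotp9$ from any $s^r_j$ with $j>0$ (this being the expected rescaled-$\G$ payoff against the column player's conditional distribution). Hence $s^r_0$ is not a best response and $\Pr[s^r_0]=0$ in ${\cal N}$. It then remains only to note that on $\{s^r_j:j>0\}\times\{s^c_{j'}:j'>0\}$ the game $\G_1$ coincides with the rescaled $\G$, and that an affine rescaling of payoffs (positive factor plus constant) does not change the set of Nash equilibria, so the restriction of ${\cal N}$ is a Nash equilibrium of $\G$. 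I do not expect a real obstacle here; the only subtlety is that $s^r_0$ is \emph{not} strictly dominated in $\G_1$ by itself --- it is the unique best reply to $s^c_0$ --- so the two eliminations must be carried out in the stated order, and $\delta$ must be taken small enough (inverse polynomial is plenty) not to interfere with the domination of $s^c_0$.
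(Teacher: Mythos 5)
Your proof is correct and is essentially the argument the paper intends: the Observation is asserted directly from the payoff structure (the paper notes the payoffs "have been chosen so that" $s^r_0$, $s^c_0$ are unused in equilibria of $\G_1$), and your ordered elimination --- $s^c_0$ strictly dominated by the uniform mixture over $s^c_1,\ldots,s^c_n$, then $s^r_0$ strictly inferior once $\Pr[s^c_0]=0$, then restriction to the rescaled copy of $\G$ --- is the natural way to fill in the details, including the correct caveat that $s^r_0$ is the unique best reply to $s^c_0$ and so cannot be eliminated first.
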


Since $\G_t=(1-t)\G_0+t\G_1$, we can write $\G_t$ as
\ifnarrowcols
\begin{small}
\[
\begin{tabular}{@{}c@{}|@{}c@{~~}c@{~~}c@{~~}c@{}}
         & $s^c_0$  & $s^c_1\cdots s^c_{k-1}$ & $s^c_k$ & $s^c_{k+1}\cdots s^c_n$ \\ \hline
$s^r_0$  & $(1-t,1-2t)$  & $(1-t,\frac{3}{4}t)$ & $(1-t,(\frac{3}{4}+\delta)t)$ & $(1-t,\frac{3}{4}t)$ \\
$s^r_1$  & $(-t,1-\frac{1}{4}t)$  &                         &         &       \\
$\vdots$ & $\vdots$ &                         &   $t\G$  &       \\
$s^r_n$  & $(-t,1-\frac{1}{4}t)$  &                         &         &       \\
\end{tabular}
\]
\end{small}
\else
\[
\begin{tabular}{c|cccc}
         & $s^c_0$  & $s^c_1\cdots s^c_{k-1}$ & $s^c_k$ & $s^c_{k+1}\cdots s^c_n$ \\ \hline
$s^r_0$  & $(1-t,1-2t)$  & $(1-t,\frac{3}{4}t)\cdots(1-t,\frac{3}{4}t)$ & $(1-t,(\frac{3}{4}+\delta)t)$ & $(1-t,\frac{3}{4}t)\cdots(1-t,\frac{3}{4}t)$ \\
$s^r_1$  & $(-t,1-\frac{1}{4}t)$  &                         &         &       \\
$\vdots$ & $\vdots$ &                         &   $t\G$  &       \\
$s^r_n$  & $(-t,1-\frac{1}{4}t)$  &                         &         &       \\
\end{tabular}
\]
\fi

The general idea is as follows. Consider the Browder path of equilibria
that begins from the unique equilibrium of $\G_0$ (where initially both
players play $s^r_0$, $s^c_0$). As $t$ increases, the players will start
to use the other strategies. At that stage, consider the distribution of
their mixed strategies restricted to $s^r_1,\ldots,s^r_n$ and
$s^c_1,\ldots,s^c_n$. These distributions will constitute a Nash
equilibrium of a version of $\G$ in which the column player receives a
small bonus for playing $s^c_k$. As $t$ increases to 1, the bonus
decreases continuously to 0, and we recover Observation~\ref{subgameNE}.
Now, recall from Definition~\ref{def:circgame} that the way
\cite{DGP,CDT} reduce graphical games to two-player games, is to
associate each player $v$ in the graphical game with two strategies in
the two-player game, both belonging to the same player. The division of
probability between those two strategies represents the probability that
$v$ plays 1. Consider $v_{\switch}$ now, corresponding
to $s^c_k$ and $s^c_{k+1}$. $v_{\switch}$ is, in the graphical
game, mildly incentivized to play 1, but for $t<1$ the $\delta$ in the
two-player game $\G_t$ pushes it the other way, towards 0. As a result,
a Nash equilibrium of $\G_t$ may simulate a Nash equilibrium of $\GG^+_t$ where $\pp[v_{switch}]\in(0,1)$.
As $t$ increases and the contribution from $\delta$ decreases, this process
corresponds to raising $\pp[v_{\switch}]$ continuously (but not monotonically) from 0 to 1.

\begin{lemma}\label{lem:gt}
Let ${\cal N}$ be a Nash equilibrium of $\G_t$ in which
$\Pr[s^r_0]<1$ and $\Pr[s^c_0]<1$.
Let ${\cal P}$ be the probability distributions over
$\{s^r_1,\ldots,s^r_n\}$ and $\{s^c_1,\ldots,s^c_n\}$ obtained by
taking each value $\Pr[s^i_j]$ (for $i\in\{r,c\}$, $1\leq j\leq n$)
and dividing it by $1-\Pr[s^i_0]$.

Then ${\cal P}$ is a Nash equilibrium of a version of $\G$ where the
column player receives a bonus of  $\delta\Pr[s^r_0]/(1-\Pr[s^r_0])$ for $s^c_k$.
\end{lemma}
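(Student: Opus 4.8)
The plan is to examine the best-response conditions for $\mathcal{N}$ in $\G_t$ and show that, once one discards the contributions of $s^r_0$ and $s^c_0$ and renormalizes, the remaining conditions are exactly the best-response conditions for a perturbed copy of $\G$. First I would write down the expected payoffs that the two players in $\G_t$ receive from each of their strategies, using the explicit payoff table for $\G_t$ displayed above. For the row player, strategy $s^r_0$ yields $(1-t) - t\cdotp(\text{something})$ independent of which $s^c_j$ with $j>0$ the column player uses (the row-player payoffs in row $s^r_0$ are constant across columns $j>0$), while strategy $s^r_j$ for $j>0$ yields $(-t)\Pr[s^c_0] + t\cdotp(\text{payoff from }\G)$. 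The key point is that the ``$s^r_0$ column'' and ``$s^c_0$ row'' contribute only additive constants (the same for all of the row player's strategies $s^r_j$, $j>0$, namely $-t\Pr[s^c_0]$, and similarly a common constant for all $s^c_{j'}$, $j'>0$ except for the $\delta$-perturbation at $s^c_k$). Since constants common to all of a player's strategies do not affect which strategies are best responses, the row player's best-response structure among $\{s^r_1,\dots,s^r_n\}$ is governed purely by the $t\G$-block, weighted by the column player's probabilities on $\{s^c_1,\dots,s^c_n\}$.

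Next I would make the renormalization precise. Let $p_0 = \Pr[s^r_0]$ and $q_0 = \Pr[s^c_0]$; by hypothesis $p_0,q_0<1$, so ${\cal P}$ is well-defined. For the column player, the expected payoff from $s^c_{j'}$ ($j'>0$) in $\mathcal{N}$ is $\frac34 t\cdotp p_0$ (plus $\delta t\cdotp p_0$ if $j'=k$) coming from the row $s^r_0$, plus $t$ times the $\G$-payoff of $s^c_{j'}$ against the row player's conditional distribution ${\cal P}$ scaled by $(1-p_0)$. Dividing through by the positive constant $t(1-p_0)$ (which is harmless for best-response comparisons), the column player is best-responding in $\G_t$ among $\{s^c_1,\dots,s^c_n\}$ if and only if, in $\G$, strategy $s^c_{j'}$ is a best response to ${\cal P}$'s row distribution when the column player is given an extra bonus of $\frac{\delta\, p_0}{1-p_0}$ for playing $s^c_k$; this is exactly the bonus claimed in the statement. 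The analogous (simpler, bonus-free) computation for the row player shows that ${\cal P}$'s row distribution is a best response in $\G$ to ${\cal P}$'s column distribution. Finally, I would invoke Definition~\ref{def:circgame}: since $\mathcal{N}$ is an equilibrium of $\G_t$ with $\Pr[s^r_0]+\Pr[s^r_1]\geq$ the relevant mass (indeed the support conditions force each $v$ in $\GG^+$ to have $\Pr[(v,0)]+\Pr[(v,1)]>0$ after renormalization), the pair ${\cal P}$ satisfies precisely the conditions that make it a Nash equilibrium of the circuit-encoding game $\G$ with the stated bonus.

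The main obstacle I anticipate is bookkeeping the effect of the $s^r_0$ row and $s^c_0$ column carefully: one must verify that the payoff contributions from these are genuinely constant across each player's ``proper'' strategies $s^r_j,s^c_{j'}$ ($j,j'>0$) — so that they drop out of best-response comparisons — with the sole exception of the deliberately inserted $\delta$ at $(s^r_k,s^c_0)$ and $(s^r_0,s^c_k)$, whose net effect on renormalization is the $\frac{\delta\,p_0}{1-p_0}$ term. A secondary subtlety is checking that the renormalized strategies ${\cal P}$ actually have full enough support / satisfy the mass bound $\Pr[(v,0)]+\Pr[(v,1)]\geq 1/2n$ demanded by Definition~\ref{def:circgame}; this should follow from the fact that in $\G$ itself every such pair of strategies carries mass at least $1/2n$ in equilibrium, a property preserved under the conditioning since we only rescale by the common positive factor $1-p_0$ (respectively $1-q_0$). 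Once these are in hand, the rest is the routine translation via Definition~\ref{def:circgame}.
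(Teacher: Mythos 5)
Your proposal is correct and follows essentially the same route as the paper's proof: you observe that the $s^c_0$ column contributes a uniform amount to each of the row player's strategies $s^r_1,\ldots,s^r_n$, and that the $s^r_0$ row contributes a uniform $\tfrac34 t\Pr[s^r_0]$ to the column player's strategies except for the extra $\delta t\Pr[s^r_0]$ at $s^c_k$, so after renormalizing by $1-\Pr[s^r_0]$ the best-response conditions become those of $\G$ with the bonus $\delta\Pr[s^r_0]/(1-\Pr[s^r_0])$ on $s^c_k$. The only differences are cosmetic (your explicit division by $t(1-p_0)$, which is harmless since the hypothesis $\Pr[s^r_0]<1$ rules out $t=0$, and the unnecessary worry about the $1/2n$ mass bound, which is a property of the circuit-encoding construction rather than something the lemma requires you to verify).
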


\begin{proof}
In Nash equilibrium ${\cal N}$, $c$'s strategy $s^c_0$ contributes
the same quantity to each one of $r$'s strategies $s^r_1,\ldots,s^r_n$.
So the values $\Pr[s^r_1],\ldots,\Pr[s^r_n]$ must form a best response
to $c$'s mixed strategy from ${\cal P}$.

The column player receives a bonus $\delta t\Pr[s^r_0]$ specific to $s^c_k$, arising
from the possibility that row player plays 0. He also receives an
additional $\frac{3}{4}t$ for all strategies $s^c_j$ for $j>0$, but that
uniform bonus has no further effect on his preference amongst $s^c_1,\ldots,s^c_n$.

So in ${\cal N}$, $\Pr[s^c_1],\ldots,\Pr[s^c_n]$ is a best response to
a mixture of $\G$ weighted by $1-\Pr[s^r_0]$ and the probability
$\Pr[s^r_0]$ of a bonus $\delta\Pr[s^r_0]$ for playing $s^c_k$.
This is equivalent to a best response to a version of $\G$ with a
bonus of $\delta\Pr[s^r_0]/(1-\Pr[s^r_0])$ for playing $s^c_k$.
\end{proof}

Consider the path of equilibria connecting equilibrium $\NE_0$ of $\G_0$ to equilibrium $\NE_1$ of $\G_1$.
By Lemma~\ref{lem:gt} we can choose $\delta$ such that in any equilibrium of $\G_{0.5}$ we have $\Pr[s^c_{k+1}]=0$.
We also have that in any equilibrium of $\G_1$, $\Pr[s^c_k]=0$.
Consider the longest suffix of the path for which $t\geq 0.5$ for all games $\G_t$ that appear in that suffix.
The corresponding equilibria assign weight strictly less than 1 to $s^r_0$ and $s^c_0$, so Lemma~\ref{lem:gt}
may be used to recover corresponding equilibria of versions of $\G$ which in turn correspond to
versions of $\GG^+$ in which initially, $v_{switch}$ is incentivized to play 0,
and finally, $v_{switch}$ is incentivized to play 1.

\section{From Linear Tracing to the homotopies of van den Elzen-Talman, Herings-van den Elzen, and Herings-Peeters}\label{sec:extVdET-HP}

In the previous section, we showed the \PSPACE-completeness of finding
the Nash equilibrium of a two-player game that is associated with a
homotopy that uses a specific simple starting-game that is not derived
from the game of interest. In the
literature on homotopy methods, starting with Harsanyi~\cite{Harsanyi},
the starting-game is usually derived from the game of interest by
positing a prior distribution over the players' pure strategies, and
using a starting-game whose payoffs are the result of playing against
this prior distribution. 
%
In this section, we extend the result of Section~\ref{sec:lt} to handle 
these starting-games and thus obtain results for the  
Herings-van den Elzen~\cite{HE} and Herings-Peeters~\cite{HP1} algorithms, which 
use the same underlying homotopy, and the van den Elzen-Talman~\cite{HP} algorithm, 
which uses a different homotopy. All three algorithms have been shown under certain conditions to mimic 
the Harsayni-Selten linear tracing procedure. 
For each algorithm, we use the uniform distribution as the prior distribution, 
which is a natural choice. 

The van den Elzen-Talman algorithm uses a homotopy based on a starting mixed-strategy profile $v$.
Letting $\Sigma$ be the set of mixed-strategy profiles, let $\Sigma(t)$ be the
set of convex combinations $(1-t)\{v\}+t\Sigma$.
In the notation of~\cite{HP}, the van den Elzen-Talman algorithm ---restricted to the
two-player case--- uses the homotopy
\[
H(t,\sigma) = \beta^1_{\sigma^1(t)}(\sigma) \times \beta^2_{\sigma^2(t)}(\sigma)
\]
where for $i=1,2$, $\beta^i_{\sigma^i(t)}(\sigma)$ denotes the best responses of player $i$
to mixed strategy $\sigma$, restricted to $\Sigma(t)$.

\begin{theorem}
It is \PSPACE-complete to compute equilibria that result from the above van den Elzen-Talman homotopy.
\end{theorem}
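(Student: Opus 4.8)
The plan is to reduce from the \PSPACE-complete problem of Theorem~\ref{thm:ltdetails} --- computing the equilibrium of a two-player game $\G_1$ that the linear homotopy reaches from the simple starting game $\G_0$ --- by showing that, on a mildly padded version $\mathcal H$ of $\G_1$, the van den Elzen--Talman homotopy with the uniform prior $v$ retraces exactly that path. The key observation, which is the precise form of the known fact that this homotopy mimics Harsanyi--Selten, is that it is a reparametrized linear tracing procedure. Given a two-player game $\mathcal H$, let $\mathcal H_0$ be the constant-payoff game in which player $i$'s payoff for a pure strategy $s_i$ is its expectation $u^{\mathcal H}_i(s_i,v^{-i})$ against the uniform prior; this is polynomial-time computable from $\mathcal H$. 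Writing a candidate fixed point $\sigma\in\Sigma(t)$ of $H(t,\cdot)$ as $\sigma^i=(1-t)v^i+t\mu^i$, and noting that best responses within $\Sigma^i(t)=(1-t)v^i+t\Sigma^i$ are precisely the $\mu^i$ supported on the pure best responses to $\sigma^{-i}$ in $\mathcal H$ (for $t>0$), together with the identity $u^{\mathcal H}_i\bigl(s_i,(1-t)v^{-i}+t\mu^{-i}\bigr)=u^{\G_t}_i(s_i,\mu^{-i})$ for $\G_t:=(1-t)\mathcal H_0+t\mathcal H$, one sees that $\sigma$ is a fixed point of $H(t,\cdot)$ iff $\mu$ is a Nash equilibrium of $\G_t$. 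Hence the van den Elzen--Talman path for $(\mathcal H,v)$ is the image under $\sigma\mapsto\mu$ of the Harsanyi--Selten linear tracing path from $\mathcal H_0$ to $\mathcal H$, and at $t=1$ (where $\Sigma(1)=\Sigma$ and $\sigma=\mu$) it returns exactly the linear-tracing equilibrium of $\mathcal H$. This already yields membership in \PSPACE\ (the problem reduces to {\sc Linear tracing}, or can be solved directly by polynomial-space path-following), and reduces the hardness claim to constructing $\mathcal H$ so that (a) $\mathcal H_0$ is a starting game of the form used in Theorem~\ref{thm:ltdetails}, and (b) the linear tracing path from $\mathcal H_0$ to $\mathcal H$ is the one analyzed in Section~\ref{sec:lt}, ending at an equilibrium that encodes a solution to {\sc Oeotl}.

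For (a) and (b), I would take $\G_1$ and $\G_0$ from Theorem~\ref{thm:ltdetails} and let $\mathcal H$ be $\G_1$ with a block of $N=\mathrm{poly}(n)$ extra ``dummy'' strategies adjoined for each player. I choose the payoffs on the new entries so that: every dummy strategy pays its owner at most $-B$ against every opponent strategy, for a large constant $B$; and, against the uniform prior over all strategies of $\mathcal H$, each player's first strategy pays $1$ while every core strategy pays $0$. The first requirement constrains only the ``dummy vs.\ anything'' entries, the second is a small linear system in the ``core strategy vs.\ opponent's dummies'' entries, so both can be met. Then $\mathcal H_0$ restricted to the core strategies is exactly $\G_0$, its unique equilibrium is $(s^r_0,s^c_0)$, and in every $\G_t=(1-t)\mathcal H_0+t\mathcal H$ each dummy strategy is strictly dominated by $s^r_0$ (resp.\ $s^c_0$), since a dummy always pays at most $-B$ in $\G_t$ whereas $s^r_0,s^c_0$ always pay at least $-1$ there. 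Consequently no Nash equilibrium of any $\G_t$ uses a dummy strategy, so the linear tracing path of $\G_t$ coincides with the path of $(1-t)\G_0+t\G_1$ studied in Section~\ref{sec:lt}; by Theorem~\ref{thm:ltdetails} its endpoint $\NE_1$ at $t=1$ is \PSPACE-hard to compute and encodes the underlying {\sc Oeotl} solution. By the reparametrization above, $\NE_1$ is precisely the equilibrium returned by the van den Elzen--Talman homotopy on $(\mathcal H,v)$, completing the reduction.

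I expect the main obstacle to be making the padding compatible with the delicate construction of Section~\ref{sec:lt}: the dummy-block payoffs must be chosen so that $\mathcal H_0$ comes out \emph{exactly} as a Theorem~\ref{thm:ltdetails} starting game, so that the padding is strictly dominated \emph{throughout} the homotopy rather than only at its ends, and so that every payoff-range inequality of Section~\ref{sec:lt} --- such as the bounds that place $\Pr[s^r_0]$ and $\Pr[s^c_0]$ in $[0\cdotp1,0\cdotp9]$ at $\G_{0\cdotp6}$ and drive the transition of $\pp[v_{\switch}]$ from $0$ to $1$ --- still goes through unchanged; an arbitrarily small generic perturbation of the payoffs, which leaves all these strict inequalities intact, may be needed to ensure the traced path is well-defined. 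Finally, the same padded game $\mathcal H$ serves for the Herings--van den Elzen and Herings--Peeters algorithms, whose homotopy is (in the two-player case) the prior-driven linear tracing homotopy itself, so the analogous \PSPACE-completeness statements follow by the same argument.
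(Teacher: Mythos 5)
Your proposal is correct and follows essentially the same route as the paper: it exploits the reparametrization of the van den Elzen--Talman path as linear tracing from the game of expected payoffs against the uniform prior, and pads the Section~\ref{sec:lt} game with strictly dominated dummy strategies whose payoffs to the opponent make that prior-induced starting game coincide with $\G_0$, so Theorem~\ref{thm:ltdetails} gives the hardness (the paper does exactly this, using a single dummy strategy per player rather than a block). The only differences are presentational: you make the tracing/homotopy equivalence and the linear system for the dummy payoffs explicit, while the paper states them as a sketch.
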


\ifnarrowcols
The proof works by giving each player a dummy strategy which does not get used in any
equilibrium, but whose payoffs ensure that the uniform distribution (at $t=0$) over all players' strategies
results in payoffs that looks like $\G_0$. A similar trick works for the homotopy of~\cite{HE, HP1}.
\else
\begin{proof}(sketch)
It can be checked that the algorithm uses polynomial space.
For the hardness, we reduce from {\sc Linear tracing}; consider a game ${\cal G}$ for
which we seek an equilibrium that results from starting with ${\cal G}_0$
\ifnarrowcols
as in Section~\ref{sec:lt}.
\else
of the form of~(\ref{eqn1}).
\fi
Suppose we take a game $\G$ from Section~\ref{sec:lt} and give each player an additional
strategy as follows. Let $s^r_n$ and $s^c_n$ be the new strategies, for the row and
column player respectively. $s^r_n$ has a payoff of $-10$ for the row player, regardless
of how the column player plays (thus, $s^r_n$ is dominated by all the other strategies).
The payoffs to the column player are chosen in such a way that, if in fact the row
player uses the uniform distribution over $s^r_0,\ldots,s^r_n$, then the column player's
payoffs will be 1 for $s^c_0$ and 0 for $s^c_j$, for $j>0$. The new strategy $s^c_n$
has a similar definition. Note that the new payoffs are at most $n$ in absolute value.
These new strategies ensure that we have the desired $\G_0$ of Section~\ref{sec:lt}, when
we restrict each player to his first $n$ strategies. Strategies $s^r_n$ and $s^c_n$ are
not used in any Nash equilibrium of $\G_t$, since they are strictly dominated for all $t$.

We let $v$ be the uniform distribution. In ${\cal G}_t$, the row plays a mixture
$(1-t)v+t\sigma^r_t$ while the column player plays $(1-t)v+t\sigma^c_t$, where
$\sigma^r_t$ and $\sigma^c_t$ are mixed strategies whose support do not include
$s^r_n$ and $s^c_n$, so they constitute a Nash equilibrium of a version of ${\cal G}$
in which there is a bonus to play $s^r_0$ and $s^c_0$. This bonus drops continuously
to zero, so it is equivalent to the linear-tracing homotopy.
\end{proof}



The algorithms of Herings-van den Elzen~\cite{HE} and Herings-Peeters~\cite{HP1} are based on
an identical homotopy and differ only in the numerical technique used to follow the homotopy path.
We can show using essentially the same construction as above that is is \PSPACE-hard to compute 
the equilibria found by these homotopies. To do so we can again construct a starting game by giving the 
row/column players new strategies $s^r_n$ and $s^c_n$ chosen to have low payoffs to the row 
(respectively, column) players, but whose payoffs to the opponent are chosen such that if either player
played the uniform distribution, the opponent would receive a higher payoff for
his first strategy (either $s^r_0$ or $s^c_0$) than the others, which would all 
receive the same (lower) payoffs.
\fi

\section{From Linear Tracing to Lemke-Howson}
\label{sec:extLH}

The Lemke-Howson (L-H) algorithm is an important and rich research subject
in and by itself within Game Theory; for the purposes of this reduction,
it is helpful to take a point of view that considers the L-H algorithm
as a homotopy~\cite{HP}, where an arbitrary strategy (the one whose label is
dropped initially) is given a large ``bonus'' to be played, so that
the unique equilibrium consists of that strategy together with its
best response from the other player; the homotopy arises from reducing
that bonus continuously to zero. 
\begin{theorem}\label{LHtheorem}
It is \PSPACE-complete to find {\em any} of the solutions of a 2-player game that
are constructed by the Lemke-Howson algorithm.
\end{theorem}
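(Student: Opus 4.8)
The plan is to reduce from {\sc Linear tracing} (Theorem~\ref{thm:lt}), viewing the Lemke-Howson algorithm as a homotopy in which an initially-dropped label $\ell$ receives a large bonus, creating a unique starting equilibrium (that label together with its best response), and the bonus then drops continuously to zero. The central difficulty, flagged already in the outline of the paper, is that the Lemke-Howson algorithm admits $2n$ different initial choices of label to drop, each generating its own disjoint path, and we must show that \emph{every} such path terminates at an equilibrium from which a solution to {\sc Oeotl} can be recovered. The idea is to build a game $\G_1$ that contains \emph{two} disjoint copies of the circuit-encoding game $\G$ of Section~\ref{sec:lt}, each with its own switch player, wired together via Corollary~\ref{cor:switches} so that $f_1$ is simulated only when both switches play $1$, and $f_0$ is simulated as soon as either switch plays $0$. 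Whichever label $\ell$ is dropped first, at least one of the two copies does not contain $\ell$; I will argue that the switch of that untouched copy traces out a path from ``bias toward $0$'' to ``bias toward $1$'' exactly as in Lemma~\ref{lem:gt}, so that the final equilibrium restricted to that copy encodes the unique solution of~$I$.

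First I would set up the embedding: take the two-player game $\G$ from Theorem~\ref{thm:ltdetails} (the circuit-encoding game for $\GG^+$, which by Corollary~\ref{cor:switches} now has two switch players $v_{\switch}$, $v'_{\switch}$), and form $\G_1$ as a block game with two row-strategy groups and two column-strategy groups, one for each copy, plus a small number of auxiliary strategies. Payoffs across the two copies are arranged so that every strategy in copy~$A$ is (weakly) indifferent to the play in copy~$B$ except through the switch wiring of Corollary~\ref{cor:switches}; the wiring itself is encoded the same way switch players were encoded in Definition~\ref{def:circgame}, namely as a pair of strategies of one player whose probability ratio represents $\pp[v_{\switch}]$. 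Crucially, I want the Lemke-Howson dynamics on $\G_1$, for \emph{any} initial dropped label, to force the relevant copy's switch to sweep continuously, which is exactly the role the interpolation parameter $t$ played in Section~\ref{sec:lt}; so the Lemke-Howson homotopy in $\G_1$ should be made to coincide (on the untouched copy) with the linear-tracing homotopy $\G_t=(1-t)\G_0+t\G_1$ of Theorem~\ref{thm:ltdetails}, with the dropped-label bonus playing the role of the $\delta$-bias and of $(1-t)$.

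Next I would verify the two membership/hardness halves. Membership in \PSPACE\ follows because the Lemke-Howson path can be traced with a lexicographic pivoting rule using only polynomial space (the same style of argument as for {\sc Linear tracing} via~\cite{HE}). For hardness, the key lemma to establish is: for each of the $2n$ possible initial labels $\ell$, the Lemke-Howson path starting from the $\ell$-augmented game ends at an equilibrium ${\cal N}$ of $\G_1$ in which the copy not containing $\ell$ is in a Nash equilibrium of $\G$ with its switch playing~$1$ (no residual bias), hence by Proposition~\ref{prop:gg} and Theorem~\ref{thm:pspace-brouwer} it encodes the panchromatic vertex connected to the origin, i.e.\ the unique {\sc Oeotl} solution. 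This uses Observation~\ref{subgameNE}-style reasoning inside the untouched copy, together with the fact that the dominated auxiliary strategies carrying the bonus are abandoned by the end of the path.

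The main obstacle I expect is the case analysis over \emph{where} the dropped label sits: if $\ell$ lies in copy~$A$ (or in $A$'s switch, or in a shared auxiliary strategy), I must still guarantee copy~$B$ behaves correctly, and in particular that the coupling between the copies does not let a ``spurious'' equilibrium of $\G_1$ arise in which copy~$B$ fails to simulate $f_1$ fully. Getting Corollary~\ref{cor:switches} to do exactly this — $f_1$ simulated iff \emph{both} switches are at $1$, and $f_0$ simulated as soon as \emph{either} drops — is precisely why that ``double switch'' variant was proved, and the payoff bookkeeping to make the two-copy game's Lemke-Howson path restrict to the single-copy linear-tracing path is the delicate part. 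Once that is in place, the conclusion is immediate: any Lemke-Howson solution of $\G_1$, regardless of the initial label, yields (in polynomial time, by restricting to the untouched copy and applying the chain of reductions of Sections~\ref{sec:hom} and~\ref{sec:lt}) the unique solution to the {\sc Oeotl} instance~$I$.
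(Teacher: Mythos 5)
Your high-level architecture matches the paper's: view Lemke--Howson as a bonus-homotopy, embed two copies of the circuit-encoding game so that at least one copy avoids the initially dropped label, and use Corollary~\ref{cor:switches} so that the simulation reverts to the basic \bmf\ whenever a switch is pulled to $0$. But there is a genuine gap at the heart of the hardness argument: you assert that ``the dropped-label bonus plays the role of the $\delta$-bias and of $(1-t)$'' on the untouched copy, while simultaneously arranging the two copies to be payoff-indifferent to each other ``except through the switch wiring.'' These two requirements are in tension, and the first is simply not available: the dropped label is an arbitrary strategy, possibly living entirely in the other copy, so its bonus exerts no direct pressure on the untouched copy's switch strategies. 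If the copies really were mutually indifferent, nothing would force the untouched copy to start in the ``basic'' configuration and sweep continuously to the unbiased one --- which is exactly the sweep your argument needs. The paper supplies this missing mechanism with dedicated coupling strategies $C,D,C',D'$ and large off-diagonal payoffs $M$: whichever label is dropped, the initial equilibrium puts mass at least $1$ on these strategies ($X(0)\geq 1$), this mass feeds a bonus $e$ onto the first strategy of (at least) one player of the untouched copy $\bar{\cal G}$, and a separate endpoint analysis (Lemma~\ref{lem:endpath}) shows $X(1)\leq\frac{1}{25}$. Continuity of $X(T)$ along the Browder path, plus the lower bounds $\Pr[A],\Pr[B],\Pr[A'],\Pr[B']\geq\frac{1}{10}$ when $X\leq\frac14$ (Lemma~\ref{lem:LBweight}), then yield the intermediate-value argument that the induced bias on $s^p_0$ crosses a threshold $\tau$ exactly once in the relevant suffix, so the switch moves continuously from $0$ to $1$ and the terminal equilibrium encodes the {\sc Oeotl} solution. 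None of this quantitative machinery --- the coupling block, the quantity $X(T)$, the endpoint lemma, or the probability lower bounds --- appears in your proposal, and without some substitute for it the ``delicate payoff bookkeeping'' you defer is precisely the proof.

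A smaller but related misreading: in the paper the two switches of Corollary~\ref{cor:switches} both live inside a \emph{single} copy of ${\cal G}$, one attached to the row player's first strategy pair and one to the column player's, not one switch per copy. This matters because the coupling mass only guarantees a large bonus for \emph{one} of the two players of $\bar{\cal G}$ (depending on whether the mass sits on $C,D$ or on $C',D'$), so the simulation must revert to $f_0$ if \emph{either} player's switch is at $0$; wiring one switch to each copy, as you propose, would instead make the correctness of the untouched copy hinge on the behaviour of the corrupted one.
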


The remainder of this section proves Theorem~\ref{LHtheorem}, the hardness being established
by a reduction from the graphical game problem of Proposition~\ref{prop:gg}, extending the
ideas of the reduction for {\sc Linear tracing} (Theorems~\ref{thm:lt},~\ref{thm:ltdetails}).
A new technical challenge here is that the choice of initially dropped label
results in $2n$ alternative homotopy paths, and we must ensure that any of the (up to) $2n$ solution
can encode the single solution to some instance of {\sc Linear tracing}.

Suppose that some strategy has been given this ``L-H bonus'', and a Browder path of Nash
equilibria is obtained from reducing that bonus to zero.  As before
let $t\in[0,1]$ be a parameter that denotes the distance from the
starting game of the homotopy to the game of interest, so that $1-t$
is a multiplicative weight for the bonus in intermediate games.
Consider the Browder path. It is piecewise linear, a topologically
well-behaved line. Let $T\in[0,1]$ parameterize points along the
Browder path --- an equilibrium ${\cal N}_T$ is the one that is a
fraction $T$ of the distance along the path (starting at the version
of the game with the L-H bonus).  So, multiple values of $T$
can correspond to the same value of $t$.  Here we mostly focus on $T$
rather than $t$.

The following construction addresses the issue that an arbitrary strategy may receive the L-H bonus.
We embed two copies of a circuit-encoding game $\G$ (Definition~\ref{def:circgame}) into a game
instance for the Lemke-Howson algorithm. At least one of those copies of $\G$ will not contain the strategy
that receives the L-H bonus. The L-H homotopy, restricted to that copy of $\G$,
will simulate the homotopy of Section~\ref{sec:lt}.

In Figure~\ref{LHdiagram}, ${\cal G}$ denotes a circuit-encoding $n\times n$ game (note the two copies)
whose payoffs have been rescaled to lie in the interval $[0\cdotp4,0\cdotp6]$. ${\cal G}$ is
assumed to have an associated graphical game with two ``switch'' players $v^r_{\switch}$, $v^c_{\switch}$
that affect the equilibria of ${\cal G}$ according to Corollary~\ref{cor:switches}.
They will correspond to the first pair of each of ${\cal G}$'s players' strategies
$(s^r_0,s^r_1)$ and $(s^c_0,s^c_1)$ such that,
\begin{itemize}
\item if both $\pp[v^r_{\switch}]=1$ and $\pp[v^c_{\switch}]=1$, ${\cal G}$'s equilibrium encodes a solution
to an {\sc End of the line} instance that is efficiently encoded by ${\cal G}$;
\item if either $\pp[v^r_{\switch}]=0$ or $\pp[v^c_{\switch}]=0$, ${\cal G}$ encodes the ``basic''
Brouwer-mapping function;
\item if we add a bonus to the row player for his first strategy $s^r_0$ that is
less than some threshold $\tau$, it will result in $\Pr[s^r_0]=0$ and hence $\pp[v^r_{\switch}]=1$,
and similarly for the column player with respect to $s^c_0$ and $v^c_{\switch}$.
(We will see that such bonuses occur, and they decrease at $T\longrightarrow 1$.)
\end{itemize}

\noindent{\bf Notation.} $A,B,C,D$ and $A',B',C',D'$ denote sets of the players' strategies as
shown in Figure~\ref{LHdiagram}. In the context of a mixed-strategy profile,
$\Pr[C]$ denotes the probability that the column player uses $C$; $\Pr[A]$ that
he chooses an element of $A$, and so on.
Let $X(T) = \Pr[C]+\Pr[D]+\Pr[C']+\Pr[D']$, a function of distance along
the Browder path. We note the following facts
\begin{itemize}
\item $X(0)\geq 1$ (if, say, a column player strategy receives the L-H bonus, then the row
player will play some pure best response, either $C'$ or $D'$; so $\Pr[C']=1$ or $\Pr[D']=1$.)
\item $X(1)\leq \frac{1}{25}$ (shown in Lemma~\ref{lem:endpath})
\end{itemize}
together with the key observation that $X(T)$ is a continuous function
of $T$, implying:
\begin{observation}
For some $T'\in[0,1]$, $X(T')=\frac{1}{4}$, and for $T>T'$, $X(T)<\frac{1}{4}$.
\end{observation}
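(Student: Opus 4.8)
The plan is to prove the Observation by invoking the intermediate value theorem on the continuous function $X(T)$ on the interval $[0,1]$, together with the two boundary facts $X(0)\geq 1$ and $X(1)\leq \frac{1}{25}$ just recorded, and then to extract the ``last crossing'' to get the monotonicity-after-$T'$ part of the statement.

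First I would argue continuity of $X(T)$: the Browder path is a piecewise-linear curve in the product of mixed-strategy space and $[0,1]$, parameterized by arc length (up to affine reparameterization) by $T$, so each of $\Pr[C]$, $\Pr[D]$, $\Pr[C']$, $\Pr[D']$ is a continuous (indeed piecewise-linear) function of $T$ along the path, and $X(T)$ is their sum. Since $X(0)\geq 1 > \frac14$ and $X(1)\leq \frac{1}{25} < \frac14$, the intermediate value theorem yields some $T^\ast\in(0,1)$ with $X(T^\ast)=\frac14$. To get the stated $T'$ with the property that $X(T)<\frac14$ for all $T>T'$, I would take $T' = \sup\{T\in[0,1] : X(T)=\frac14\}$; this set is nonempty by the above and closed by continuity, so the supremum is attained, $X(T')=\frac14$, and by definition of the supremum no $T>T'$ has $X(T)=\frac14$. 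Finally, since $X(1) = \frac{1}{25} < \frac14$ and $X$ is continuous on the connected interval $(T',1]$ which contains the point $1$ with value below $\frac14$, the intermediate value theorem rules out $X(T)>\frac14$ anywhere on $(T',1]$ (such a point would force another solution of $X(T)=\frac14$ strictly between it and $1$, exceeding $T'$), so $X(T)<\frac14$ for all $T\in(T',1]$, as claimed.

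I do not expect any genuine obstacle here; the only point requiring care is the assertion that $X(T)$ is continuous, which rests on the claim (stated in the surrounding text) that the Browder path is a topologically well-behaved piecewise-linear line admitting a continuous arc-length parameterization by $T$, and on the fact $X(1)\leq \frac{1}{25}$, which is deferred to Lemma~\ref{lem:endpath}. Both of these are cited as established, so the Observation follows immediately.
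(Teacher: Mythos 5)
Your proposal is correct and matches the paper's own (implicit) argument: the paper derives the Observation directly from $X(0)\geq 1$, $X(1)\leq\frac{1}{25}$ (Lemma~\ref{lem:endpath}), and continuity of $X$ along the piecewise-linear Browder path, which is exactly your intermediate-value-theorem argument with $T'$ taken as the last crossing of the level $\frac14$. You have simply spelled out the details the paper leaves to the reader.
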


Let $\bar{\cal G}$ be the copy of ${\cal G}$ that does not contain the strategy that receives
the L-H bonus. (If one of $C$, $D$, $C'$ or $D'$ receive the L-H bonus, then $\bar{\cal G}$ may be
either copy of ${\cal G}$.)

For any $X$, at least one player $p$ has an additional bonus at least $X/4$ to play $s^p_0$ in $\bar{\cal G}$
(suppose for example $\Pr[C]+\Pr[D]\geq X/2$ and $p$ is the row player; Figure~\ref{LHdiagram} awards additional $e=1$ to $p$
when $C$ or $D$ is played). But neither player's bonus exceeds $X/2$.
As $T$ increases from $T'$ to 1, $X(T)$ goes down from $\frac{1}{4}$ to at most $\frac{4}{M}$.
We will establish that when $X(T)=\frac{1}{4}$, ${\cal
N}_T$ contains a solution to a ``biased'' version of $\bar{\cal G}$
where one of the players' first strategies (i.e. $s^r_0$ or $s^c_0$) has an additional bonus
(enough to ensure $\Pr[s^r_1]=0$ and $\pp[v^r_{\switch}]=0$, in the case of the row player).
Furthermore, when $T=1$, we have that ${\cal N}_T$ contains a solution
to $\bar{\cal G}$, only with smaller biases. These biases are associated with ``switch''
strategies in the graphical game associated with ${\cal G}$.

Let $T'$ be the largest value of $T$ where $X(T)$ is large enough that one of the
bonuses sets $\Pr[s^p_0]=1$ in $\bar{\cal G}$ (for $p\in\{r,c\}$). Between $T'$ and
$T=1$ we pass through a continuum of equilibria where $\Pr[s^p_0]$ changes from 1 to 0;
equivalently $\pp[v^p_{\switch}]$ changes from 0 to 1,
and the resulting equilibrium at $T=1$ corresponds to a solution to {\sc Oeotl}.

\begin{lemma}
Let ${\cal N}_T$ be a solution of ${\cal G}_T$. If $\bar{\cal G}$ is the bottom
right-hand copy of ${\cal G}$ in Figure~\ref{LHdiagram}, then if the distributions
over $B$ and $B'$ are normalised to 1, we have a Nash equilibrium of a game
$\hat{\cal G}$ where the row player has an additional bonus of
$e(\Pr[C]+\Pr[D])/\Pr[B']$ to play his first strategy $s^r_0$, and the column player
has an additional bonus of $e(\Pr[C']+\Pr[D'])/\Pr[B]$ to play his first strategy $s^c_0$.
\end{lemma}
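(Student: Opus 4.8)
The plan is to follow the template of the proof of Lemma~\ref{lem:gt}: reduce the claim to a pair of best-response checks, one per player, carried out on the strategies of $\bar{\cal G}$ as they sit inside ${\cal G}_T$. Write $\gamma:=\Pr[B]$ for the column player's total weight on the block $B$ in ${\cal N}_T$ and $\beta:=\Pr[B']$ for the row player's total weight on $B'$, and assume --- as the phrase ``normalised to $1$'' presupposes --- that $\beta,\gamma>0$; let $\tilde p^{\,r}$ and $\tilde p^{\,c}$ be the induced normalised distributions on $B'$ and on $B$. The goal is to show that $(\tilde p^{\,r},\tilde p^{\,c})$ is a Nash equilibrium of $\hat{\cal G}$, the copy $\bar{\cal G}$ augmented by the two bonuses named in the statement.

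First I would fix an arbitrary row strategy $s\in B'$ and decompose the row player's payoff $U_r(s)$ in ${\cal N}_T$ according to which block the column player is playing. Three features of the construction of Figure~\ref{LHdiagram} make every term but one $s$-independent: (i)~the dropped-label Lemke--Howson bonus is, by the choice of $\bar{\cal G}$, attached to a strategy lying outside $\bar{\cal G}$, and since such a bonus is a payoff that only its \emph{own owner} collects for playing it, it is completely transparent to the other player's best-response calculation --- in particular it never appears in $U_r(s)$; (ii)~the two embedded copies of ${\cal G}$ are decoupled, so the column player's strategies in the \emph{other} copy (block $A$) contribute to $U_r(s)$ a quantity not depending on $s$; (iii)~a column strategy in $C\cup D$ contributes a common baseline to every $s\in B'$, plus an extra $e$ precisely when $s=s^r_0$. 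Therefore $U_r(s)=\lambda\,u^{\bar{\cal G}}_r(s,\tilde p^{\,c})+e(\Pr[C]+\Pr[D])\,\mathbf{1}[s=s^r_0]+\kappa$, where $u^{\bar{\cal G}}_r$ is the row payoff function of $\bar{\cal G}$, $\kappa$ does not depend on $s$, and $\lambda>0$ is the weight with which the $\bar{\cal G}$-interaction with block $B$ enters (a product of $\gamma$ with the fixed rescaling used to embed ${\cal G}$). In ${\cal N}_T$ every $B'$-strategy in the support of $\tilde p^{\,r}$ is a best response for the row player, hence a maximiser of $U_r(\cdot)$ over $B'$; subtracting $\kappa$ and dividing by $\lambda$, it is a maximiser of $u^{\bar{\cal G}}_r(\cdot,\tilde p^{\,c})$ with $s^r_0$ carrying a bonus of $e(\Pr[C]+\Pr[D])/\Pr[B']$ --- the embedding constants being exactly what rewrites $\lambda^{-1}$ in this form --- which says precisely that $\tilde p^{\,r}$ is a best response in $\hat{\cal G}$ to $\tilde p^{\,c}$. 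The symmetric computation, decomposing the column player's payoff by the row player's blocks $B'$, $A'$ and $C'\cup D'$, shows $\tilde p^{\,c}$ is a best response in $\hat{\cal G}$ to $\tilde p^{\,r}$, with $s^c_0$ carrying a bonus of $e(\Pr[C']+\Pr[D'])/\Pr[B]$. Since both normalised profiles are supported inside the strategy sets of $\hat{\cal G}$ and each best-responds to the other, $(\tilde p^{\,r},\tilde p^{\,c})$ is a Nash equilibrium of $\hat{\cal G}$.

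I expect the main obstacle to be verifying (ii)--(iii) directly from the payoff tables in Figure~\ref{LHdiagram}: one must check that the \emph{only} $s$-dependence of $U_r(s)$, apart from the genuine $\bar{\cal G}$-interaction with the block $B$, is the single $+e$ that sits in the $s^r_0$-row of the $C\cup D$ columns, and likewise that the auxiliary row strategies $C',D'$ and the dropped-label strategy never disturb the internal preference ordering among the $B$-strategies of $\bar{\cal G}$. This is exactly where the two-copy embedding has to be arranged with care, and also where the normalisation constants --- and hence the precise form of the two stated bonuses --- get pinned down; once those structural facts are granted, the remainder is the same best-response bookkeeping as in the proof of Lemma~\ref{lem:gt}.
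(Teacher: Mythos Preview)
Your approach is the same as the paper's: decompose each player's payoff over $B'$ (respectively $B$) by the opponent's blocks, observe that block $A$ contributes nothing $s$-dependent to the row player, and that $C\cup D$ contributes only the extra $e$ on the row $s^r_0$. The paper's own proof is a two-sentence sketch of exactly this; you have simply written out the best-response bookkeeping in full, in the style of Lemma~\ref{lem:gt}.

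One slip to flag: your normalisation does not actually produce the denominator you claim. You define $\lambda$ as ``a product of $\gamma$ with the fixed rescaling'', where $\gamma=\Pr[B]$, and then assert that dividing by $\lambda$ yields a bonus of $e(\Pr[C]+\Pr[D])/\Pr[B']$. But $\Pr[B']=\beta$, not $\gamma$, and nothing in your derivation turns a factor of $1/\Pr[B]$ into $1/\Pr[B']$; the phrase ``the embedding constants being exactly what rewrites $\lambda^{-1}$ in this form'' is doing work it cannot do. The analogue of Lemma~\ref{lem:gt} says the bonus should be scaled by the \emph{opponent's} weight on the subgame, which here is $\Pr[B]$ for the row player and $\Pr[B']$ for the column player --- the two denominators in the statement appear to be swapped. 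The paper's proof is too terse to pin this down either, and the discrepancy does not affect the downstream argument (Lemma~\ref{lem:LBweight} bounds all four block-weights uniformly), but you should not paper over it with a handwave.
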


By symmetry, a similar result also holds in the case that $\bar{\cal G}$ is the
top right-hand copy of the ${\cal G}$.

\begin{proof}
Payoffs to the row player are unaffected by the column player's distribution over
$A$. Meanwhile, $C$ and $D$ lead to an additional bonus of $e$ (weighted by the
probability that $C$ and $D$ are used by the column player) for the row player to
use the top row of $B'$.
\end{proof}

\begin{lemma}\label{lem:endpath}
At $t=1$ (equivalently, $T=1$) we have in any Nash equilibrium, that $\Pr[C]\leq\frac{1}{M}$,
$\Pr[D]\leq\frac{1}{M}$, $\Pr[C']\leq\frac{1}{M}$ and $\Pr[D']\leq\frac{1}{M}$.
Since $M\geq 100$ we have $X(1)\leq \frac{1}{25}$.
\end{lemma}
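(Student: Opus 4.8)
The plan is to prove the four probability bounds by a direct best-reply analysis of the end-game ${\cal G}_1$ (the $t=T=1$ end of the homotopy), reading off the payoffs from Figure~\ref{LHdiagram}; the final inequality then follows at once, since $X(1)=\Pr[C]+\Pr[D]+\Pr[C']+\Pr[D']\le 4\cdot\frac1M\le\frac{1}{25}$ using $M\ge 100$. The structural fact driving everything is that $C,D,C',D'$ are Lemke--Howson scaffolding: they exist only so that a suitable artificial equilibrium is present at $T=0$, and in ${\cal G}_1$ each of them carries an explicit penalty entry of magnitude $\Theta(M)$ that is activated exactly when the opposing player best-responds to it, whereas by construction the payoffs a player sees inside either embedded copy of ${\cal G}$ never fall below $0\cdotp4$. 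Hence any scaffolding strategy whose payoff gets dragged below $0\cdotp4$ is not a best reply and cannot appear in the support of a Nash equilibrium of ${\cal G}_1$.

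The argument itself is a seesaw. Fix a Nash equilibrium ${\cal N}$ of ${\cal G}_1$ and suppose, say, $\Pr[C]>\frac1M$. Then the bonus $e\Pr[C]$ (with $e=1$) that the column player's use of $C$ confers on the row player's ``punisher'' response to $C$ exceeds the threshold past which that response becomes a best reply, so the row player must place enough weight on the punisher that the $\Theta(M)$ penalty entry against $C$ pulls the column player's payoff from $C$ strictly below $0\cdotp4$ --- contradicting $\Pr[C]>0$. Hence $\Pr[C]\le\frac1M$, and the same reasoning, with the roles of $C$ and $D$ and of row and column permuted, gives $\Pr[D]\le\frac1M$, $\Pr[C']\le\frac1M$ and $\Pr[D']\le\frac1M$. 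A cleaner route to the bounds on $\Pr[C']$ and $\Pr[D']$ is to invoke the column player's indifference condition directly: if she puts any weight on $C$ or $D$ then her indifference forces the weight she faces on $\{C',D'\}$ (the source of the $\Theta(M)$ penalty on those strategies) to be at most $\frac{0\cdotp6-0\cdotp4}{M}<\frac1M$; and if she puts no weight on $C$ or $D$, then $C'$ and $D'$ --- whose principal source of an above-baseline payoff is playing against $C$, resp.\ $D$ --- are not best replies for the row player, so $\Pr[C']=\Pr[D']=0$. Either way the bounds hold.

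The step I expect to be the real obstacle is the quantitative bookkeeping needed to make the thresholds in Figure~\ref{LHdiagram} cooperate: one must choose the baseline payoffs of $C',D'$, the penalty scale $M$, the bonus $e=1$, and the $[0\cdotp4,0\cdotp6]$ rescaling of ${\cal G}$ so that (i) the tipping point past which the punisher of $C$ (resp.\ $D$) becomes a best reply lies at opponent-probability exactly $\frac1M$; (ii) at that point the penalty of size $\Theta(M)$ beats the $0\cdotp6-0\cdotp4=0\cdotp2$ payoff slack of the copies --- this is where the product $M\cdot\frac1M=1>0\cdotp2$ is used --- so that the penalized scaffolding strategy is then strictly worse than any best reply inside a copy; and (iii) none of the scaffolding entries ($\pm M$, the baselines, the additive $e=1$ awarded for $C,D$) changes which strategies are best replies \emph{inside} either copy of ${\cal G}$, so that Corollary~\ref{cor:switches} and the equilibrium correspondence of Definition~\ref{def:circgame} continue to apply. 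Point (iii) is a routine but tedious block-by-block check over $A,B,C,D$ and $A',B',C',D'$ in Figure~\ref{LHdiagram}.
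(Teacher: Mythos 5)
Your overall strategy --- a best-reply analysis of the end game read off from Figure~\ref{LHdiagram}, using the $\Theta(M)$ penalty entries and the $[0.4,0.6]$ range of the embedded copies, plus the trivial arithmetic $X(1)\le 4/M\le 1/25$ --- is the same in spirit as the paper's proof, which also fixes a Nash equilibrium, assumes $\Pr[C]>\frac1M$, and derives a contradiction (the other three bounds following by symmetry). But your execution has genuine gaps. First, your load-bearing structural claim, ``the payoffs a player sees inside either embedded copy of ${\cal G}$ never fall below $0.4$, hence any scaffolding strategy dragged below $0.4$ is out of the support,'' is false as stated: the column player's payoff from a strategy in $A$ or $B$ is $0$ against the off-diagonal blocks and roughly $-M$ against $D'$ (respectively $C'$), so no single column strategy in the copies is guaranteed a $0.4$ floor. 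The paper has to prove a substitute fact --- that because the $-M$ punishments of $A$ and $B$ by $D'$ and $C'$ are ``equal and opposite,'' at least one strategy in $A\cup B$ still earns at least $\frac12\min\{0.4,e\}$ --- and this is not cosmetic bookkeeping; without it your contradiction ``payoff from $C$ falls below $0.4$'' has nothing to be compared against. Second, your seesaw step is not a valid equilibrium deduction: from ``the punisher of $C$ becomes a best reply'' you cannot conclude the row player places \emph{enough} weight on it to kill $C$, and, more importantly, you never address the real danger, namely that $C$ is attractive to the column player not because its payoff might stay above $0.4$ but because it earns $+M$ when the row player uses $C'$. Ruling out support configurations in which $C$ and $C'$ (or the analogous pairs) prop each other up is exactly the content of the paper's case analysis: it first shows $\Pr[C]>\frac1M$ forces $\Pr[A']=0$ (via the $\payoff(B')-\payoff(A')$ comparison), then splits on $\Pr[D']=0$ versus $\Pr[C']=0$ and derives a contradiction in each reduced game. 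Your sketch skips this entirely. (A smaller point: the force pulling the row player toward the block that punishes $C$ is of order $M\Pr[C]$, not the bonus $e\Pr[C]$; $e$ plays its role later, in the switch-threshold argument, not here.)

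Your proposed ``cleaner route'' to $\Pr[C']\le\frac1M$ and $\Pr[D']\le\frac1M$ also breaks. The claim that if the column player puts no weight on $C$ or $D$ then $C'$ and $D'$ cannot be best replies is wrong: $C'$ pays the row player about $+M$ against $B$ (and $D'$ about $+M$ against $A$), so $C'$ can strictly dominate everything in $A'\cup B'$ against a column mixture concentrated on $B$, with no weight at all on $C$ or $D$. Excluding that possibility again requires the two-sided argument (if the row leans on $C'$, the column abandons $B$ for $A$ or $C$, against which $C'$ earns $-M$), i.e.\ the same case analysis you were trying to avoid; and the indifference bound $\frac{0.6-0.4}{M}$ you quote for the weight on $\{C',D'\}$ is asserted, not derived, and does not follow from the column player's indifference condition without first controlling what the row player is mixing over. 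So while your plan points in the right direction, as written it does not constitute a proof of Lemma~\ref{lem:endpath}; the missing ingredients are precisely the ``equal and opposite'' floor for $A\cup B$ and the support-elimination cascade ($\Pr[D]=0$, one of $\Pr[C'],\Pr[D']$ equal to $0$, $\Pr[A']=0$, then the two terminal cases) that the paper carries out.
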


\begin{figure}
\begin{center}
\ifnarrowcols
\includegraphics[width=2.5in]{figureLH-3.pdf}
\else
\includegraphics[width=6.0in]{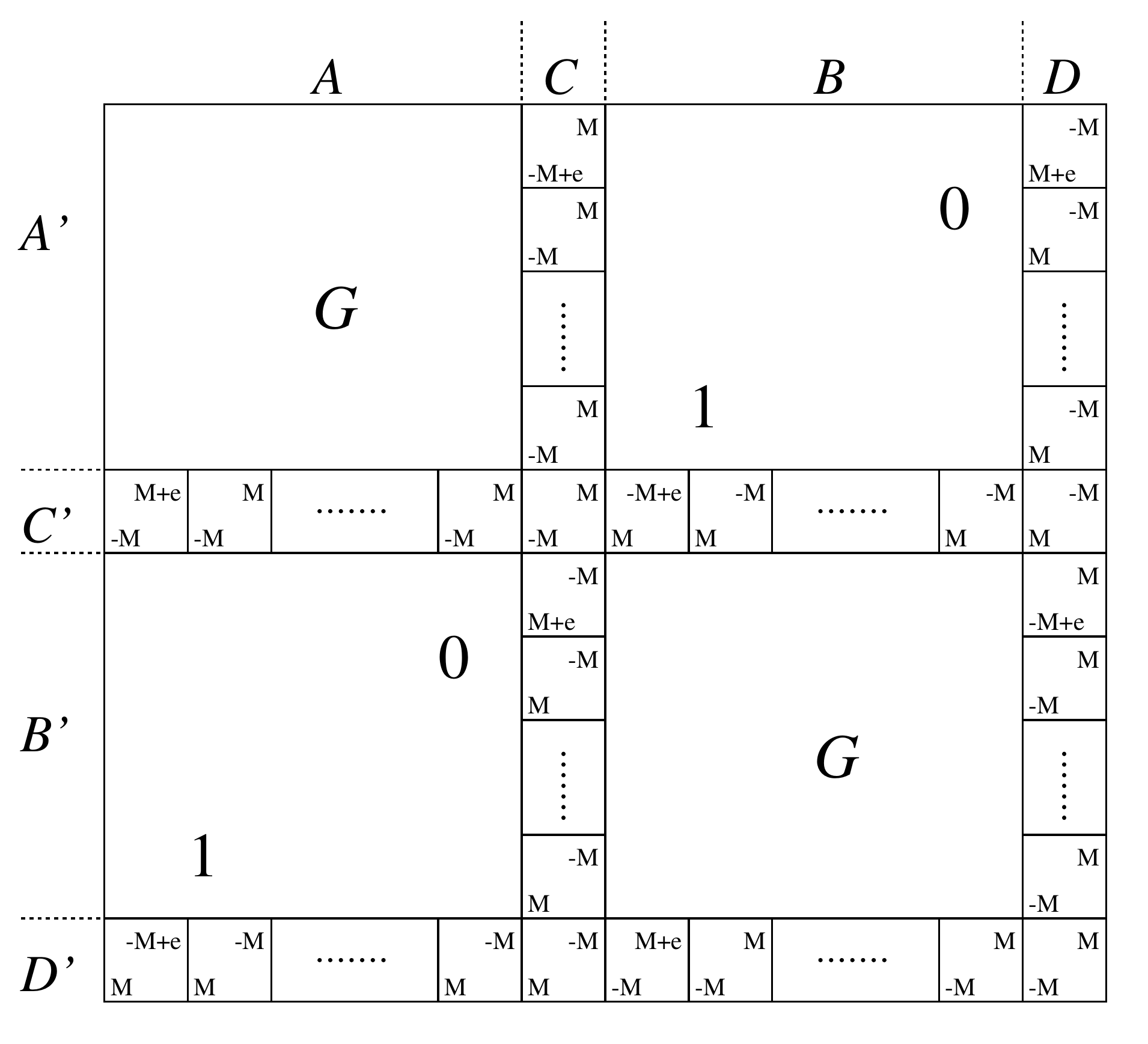}
\fi
\end{center}
\caption{The game has 2 copies of $n\times n$ game ${\cal G}$ embedded in the top-left
and bottom-right regions, with payoff rescaled to $[0\cdotp4,0\cdotp6]$.
In the top-right and bottom-left regions are copies
of a $n\times n$ game that give the column player a payoff of
0 and the row player a payoff of 1.\newline
Each of $A$, $B$, $A'$, $B'$ denotes a set of $n$
strategies. $C$, $D$, $C'$ and $D'$ are individual strategies.\newline
In the proofs we put $M=1000$, $e=1$.}\label{LHdiagram}
\end{figure}

\ifnarrowcols
A proof of Lemma~\ref{lem:endpath} may be found in the full version.
\else
\begin{proof} of Lemma~\ref{lem:endpath}.
We give the proof that $\Pr[C]\leq\frac{1}{M}$; by symmetry the other claims are similar.

Assume for contradiction that $\Pr[C]>\frac{1}{M}$.
We know that at least one strategy from $A\cup B$ gives positive payoff, since the
places where they can obtain a negative payoff are ``equal and opposite''.
Indeed, it can be checked that the payoff to at least one member of $A$ and $B$
is at least $\frac{1}{2}\min\{0\cdotp4,e\}$.
The (column player's) payoffs from $C$ and $D$ must sum to zero, so if $\Pr[C]>0$
then $\Pr[D]=0$ ($D$ gets negative payoff and is a strictly worse response $A$ or $B$.)
A similar argument for the row player's payoffs establishes that one or
both of $C'$ and $D'$ gets zero probability.

Given that $\Pr[C]>\frac{1}{M}$ we can deduce that $\Pr[A']=0$ due to being a worse
response than $B'$: $C$ contributes at least $\frac{2M}{M}=2$ to $\payoff(B')-\payoff(A')$;
$A$ contributes a positive amount; $D$ has zero probability so contributes nothing;
$B$ contributes $\geq -0\cdotp6$.
We noted above that $\Pr[C']=0$ or $\Pr[D']=0$ (or both). Consider two cases:

{\bf Case 1:} $\Pr[D']=0$. Deleting strategies with probability zero, we are left with
the following structure:

\begin{center}
\begin{tabular}{r|c|c|c|}
\multicolumn{1}{r}{}
 & \multicolumn{1}{c}{$A$} & \multicolumn{1}{c}{$C$}  & \multicolumn{1}{c}{$B$} \\
\cline{2-4}
 $B'$ & $(1,0)$    & $([M,M+e],-M)$ & $([0\cdotp4,0\cdotp6],[0\cdotp4,0\cdotp6])$ \\
\cline{2-4}
 $C'$ & $(-M,[M,M+e])$ & $(-M,M)$   & $(M,[-M,-M+e])$  \\
\cline{2-4}
\end{tabular}
\end{center}


Comparing $C$ with $A$, we need $\Pr[C']=1$ to avoid $A$ being a better response than
$C$ (which is supposed to have positive probability $>\frac{1}{M}$).
If $\Pr[C']=1$, $B$ is a worse response than the others, but when the column player
uses only $A$ and $C$, $C'$ has much lower payoff than $B'$, contradicting
assumption that $\Pr[C']$ is positive. This leaves us with Case 2:

{\bf Case 2:} $\Pr[C']=0$. We get, after deleting zero-probability strategies,

\begin{center}
\begin{tabular}{r|c|c|c|}
\multicolumn{1}{r}{}
 & \multicolumn{1}{c}{$A$} & \multicolumn{1}{c}{$C$}  & \multicolumn{1}{c}{$B$} \\
\cline{2-4}
 $B'$ & $(1,0)$    & $([M,M+e],-M)$ & $([0\cdotp4,0\cdotp6],[0\cdotp4,0\cdotp6])$ \\
\cline{2-4}
 $D'$ & $(M,[-M,-M+e])$ & $(M,-M)$   & $(-M,[M,M+e])$  \\
\cline{2-4}
\end{tabular}
\end{center}

Here the contradiction is immediate since $B$ is a strictly better response than $C$,
preventing $\Pr[C]>0$.
\end{proof}
\fi


\begin{lemma}\label{lem:LBweight}
Assume that $e\leq 1$ in Figure~\ref{LHdiagram} and that $M\geq 100$.
Suppose that $X(T)\leq\frac{1}{4}$. Then $\Pr[A]\geq\frac{1}{10}$, $\Pr[B]\geq\frac{1}{10}$,
$\Pr[A']\geq\frac{1}{10}$, $\Pr[B']\geq\frac{1}{10}$.
\end{lemma}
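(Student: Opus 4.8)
The plan is to prove each of the four lower bounds by contradiction: over-weighting one of the blocks $A,B$ (of the column player) or $A',B'$ (of the row player) will be shown to drive the opponent into a \emph{pure} strategy among $\{C,D,C',D'\}$, which immediately pushes $X(T)$ above $\tfrac{1}{4}$. I would carry this out for the block $A$; the cases $\Pr[B]<\tfrac{1}{10}$, $\Pr[A']<\tfrac{1}{10}$ and $\Pr[B']<\tfrac{1}{10}$ are entirely analogous --- one may either repeat the argument verbatim or invoke the symmetries of Figure~\ref{LHdiagram} that interchange the $A$- and $B$-blocks (together with $C\leftrightarrow D$, $C'\leftrightarrow D'$, and the two copies of $\G$) and that transpose the two players. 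The only preliminary observation needed is that $X(T)=\Pr[C]+\Pr[D]+\Pr[C']+\Pr[D']\le\tfrac{1}{4}$ forces $\Pr[C]+\Pr[D]\le\tfrac{1}{4}$, whence $\Pr[A]+\Pr[B]=1-\Pr[C]-\Pr[D]\ge\tfrac{3}{4}$ (and symmetrically $\Pr[A']+\Pr[B']\ge\tfrac{3}{4}$).

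So suppose $\Pr[A]<\tfrac{1}{10}$; then $\Pr[B]=1-\Pr[A]-\Pr[C]-\Pr[D]>\tfrac{13}{20}$. I would now read off the row player's expected payoffs in the equilibrium ${\cal N}_T$ from the $\pm M$ entries of Figure~\ref{LHdiagram} --- these are precisely the sub-tables that already appear in the proof of Lemma~\ref{lem:endpath}. The strategy $C'$ pays the row player $M$ against \emph{every} column strategy in block $B$ and at least $-(M+e)$ against everything else, so its expected payoff is at least $\tfrac{13}{20}M-\tfrac{7}{20}(M+e)=\tfrac{3}{10}M-\tfrac{7}{20}e$. By contrast, the row strategies $A'$ and $B'$ can collect an $\Theta(M)$ payoff only against the columns $C$ and $D$ --- against block $A$ and against block $B$ their payoffs lie in $[0\cdotp4,0\cdotp6]$ or equal $1$ --- so their expected payoffs are at most $\tfrac{1}{4}(M+e)+O(1)$; and $D'$ is paid $-M$ against all of block $B$, so its expected payoff is negative. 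Since $\tfrac{3}{10}M-\tfrac{7}{20}e>\tfrac{1}{4}(M+e)+O(1)$ when $M\ge100$ and $e\le1$, the strategy $C'$ is the row player's \emph{unique} best response in ${\cal N}_T$, which forces $\Pr[C']=1$ and therefore $X(T)\ge\Pr[C']=1>\tfrac{1}{4}$, a contradiction. In the remaining three cases an over-weighted block forces the opponent to play, respectively, $D'$, $D$, or $C$ with probability $1$, and each of these again contradicts $X(T)\le\tfrac{1}{4}$.

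The step I expect to be the main obstacle is the verification that $C'$ is a \emph{strict}, rather than merely tied, best response: this amounts to checking that none of $A'$, $B'$, $D'$ picks up an $\Theta(M)$ payoff against column strategies of total weight more than $\tfrac{1}{4}$, which rests on the concrete placement of the $\pm M$ entries in Figure~\ref{LHdiagram} (and on $M\ge100$, $e\le1$ to make the $O(1)$ terms harmless). A secondary point that must be dispatched is the Lemke--Howson bonus present in $\G_T$: it is attached to a single strategy, it plays no role in the \emph{opponent's} best-response comparison used above, and if it happens to sit on the ``escape'' strategy ($C'$, $D'$, $D$ or $C$) it only reinforces the conclusion that that strategy is played with probability $1$; hence the argument goes through for the intermediate games $\G_T$ as well.
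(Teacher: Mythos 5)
Your overall skeleton (bound the block weights by contradiction, using the $\pm M$ entries to show that an over-/under-weighted block makes one of $C,D,C',D'$ overwhelmingly attractive, contradicting $X(T)\le\frac14$) is the same as the paper's, and your payoff arithmetic for the \emph{unbonused} game matches the paper's (e.g.\ $\payoff(C')\ge 0\cdotp3M$ versus $\payoff(B')\le\frac14 M$ and $\payoff(D')<0$ when $\Pr[B]\ge 0\cdotp65$). The genuine gap is in how you dispatch the Lemke--Howson bonus. You only consider the possibility that the bonus sits on one of the four escape strategies $C,D,C',D'$, and claim that otherwise it ``plays no role in the opponent's best-response comparison.'' But the bonus is attached to an \emph{arbitrary} strategy, so it may sit on a strategy inside $A'$ or $B'$ (or $A$ or $B$); in your main case ($\Pr[A]<\frac1{10}$) the best-response comparison you run is the \emph{row} player's, and if the bonus sits on a row strategy in $A'$, that strategy can be a best response along the homotopy path no matter how the $\pm M$ entries compare --- at intermediate $T$ the bonus term dwarfs $M$. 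Then $C'$ is not the unique best response, $\Pr[C']=1$ does not follow, and your contradiction evaporates. The four-case ``symmetry'' does not repair this, because the location of the bonus breaks the symmetry between the cases: for each bonus placement, exactly the two bounds that require a best-response analysis by the bonus's owner are the delicate ones.

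The paper's proof is organized around precisely this issue. It splits into Case 1 (bonus on an element of $A'$) and Case 2 (bonus on $C'$), with the rest by symmetry. In Case 1 the bounds on $\Pr[A']$ and $\Pr[B']$ are obtained exactly as you do, via the \emph{column} player's best responses, which the row-side bonus cannot affect. For the bounds on $\Pr[A]$ and $\Pr[B]$, however, it explicitly refrains from claiming that $A'$ is suboptimal; it only eliminates $B'$ and $D'$, concludes from $X\le\frac14$ that the row player's weight is concentrated on $A'\cup\{C'\}$ (indeed $\Pr[A']\ge\frac34$ once $\Pr[B']=0$), and then derives the contradiction one level up, from the \emph{column} player's induced preferences (e.g.\ $C$ strictly better than $A$, contradicting $\Pr[A]\ge 0\cdotp65$). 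This second step is exactly what your proposal is missing, so as written your argument is incomplete; to fix it you would need to add the bonus-on-block case and close it with an argument of this two-stage kind.
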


\ifnarrowcols
A proof of Lemma~\ref{lem:LBweight} may be found in the full version.
\else
\begin{proof} 
We need to consider two cases in detail: case 1 assumes that an element of $A'$
received the L-H bonus and case 2 assumes that $C$ received the bonus. All
other possibilities are essentially the same as these, by symmetry.

\smallskip
{\bf Case 1.} Suppose first that a strategy from $A'$ has been given the L-H bonus, and
we are at a Nash equilibrium where $X=\frac{1}{4}$.

First we prove that the row player strategies satisfy
$\Pr[A']\geq\frac{1}{10}$, $\Pr[B']\geq\frac{1}{10}$.
Since $X\leq\frac{1}{4}$, we have $\Pr[A']+\Pr[B']\geq \frac{3}{4}$.
Suppose for a contradiction that $\Pr[B']<\frac{1}{10}$, so that $\Pr[A']\geq 0\cdotp65$.
Then (for $e\leq 1$),
$C$ is the unique best response for the column player and hence $\Pr[C]=1$.
This implies $X\geq 1$, contradicting the assumption that $X\leq\frac{1}{4}$.
Similarly, if $\Pr[A']<\frac{1}{10}$ then $\Pr[B']\geq 0\cdotp65$, then $D$
is the column player's unique best response, hence $\Pr[D]=1$, again contradicting
$X\leq\frac{1}{4}$.

\smallskip
Next we prove that the column player strategies satisfy
$\Pr[A]\geq\frac{1}{10}$, $\Pr[B]\geq\frac{1}{10}$.
Suppose for a contradiction that $\Pr[B]<\frac{1}{10}$, so that $\Pr[A]\geq0\cdotp65$.
Then for the row player, $D'$ is a better response than $B'$ and $C'$ (and regarding $A'$, some
strategy from $A'$ received the L-H bonus, so we do not claim $A'$ is
suboptimal).
If $B'$ is not a best response, so $\Pr[B']=0$, since $X\leq\frac{1}{4}$ we have $\Pr[A']\geq\frac{3}{4}$.
Then $C$ is strictly better than $A$, contradicting $\Pr[A]\geq0\cdotp65$.
Alternatively suppose that $\Pr[A]<\frac{1}{10}$, so that $\Pr[B]\geq0\cdotp65$.
Then $C'$ is a better response than $B'$ and $D'$, since $\payoff(C')\geq(0\cdotp65-0\cdotp35)M=0\cdotp3M$;
$\payoff(B')\leq\frac{1}{4}M$ since $\Pr[C]\leq\frac{1}{4}$ by
assumption that $X\leq\frac{1}{4}$; $\payoff(D')$ is negative.
With $D'$ and $B'$ eliminated, $C$ is a strictly better response than~$B$, contradicting $\Pr[B]>0$.

\medskip
{\bf Case 2.} Suppose alternatively that it was strategy $C'$ that received the L-H bonus.

We show first that the row player's strategies satisfy
$\Pr[A']\geq\frac{1}{10}$, $\Pr[B']\geq\frac{1}{10}$.
Suppose $\Pr[A']<\frac{1}{10}$, so that $\Pr[B']\geq0\cdotp65$. $D$ is a better response than
$A$ and $B$. But from that it follows that
$\Pr[D]+\Pr[C]=1$, contradicting $X=\frac{1}{4}$.
Suppose $\Pr[B']<\frac{1}{10}$, so that $\Pr[A']\geq0\cdotp65$.
$C$ is a better response than $A$ and $B$, so $\Pr[C]+\Pr[D]=1$ contradicting $X=\frac{1}{4}$.

\smallskip
Next we show that the column player's strategies satisfy
$\Pr[A]\geq\frac{1}{10}$, $\Pr[B]\geq\frac{1}{10}$.
Suppose $\Pr[B]<\frac{1}{10}$, so $\Pr[A]>0\cdotp65$. $D'$'s payoff is
greater than $0\cdotp3M$ while $A'$ and $B'$ have payoff at most
$\frac{1}{4}M+1$ so $D'$ is a better response than $A'$ and $B'$, contradicting
$X\leq\frac{1}{4}$.
Suppose $\Pr[A]<\frac{1}{10}$, so $\Pr[B]>0\cdotp65$. $C'$ is a better response than
$A'$ and $B'$ even ignoring the L-H bonus.
\end{proof}
\fi

At $X(T)=\frac{1}{4}$ we have that at least one of $\Pr[C]$, $\Pr[D]$,
$\Pr[C']$, $\Pr[D']$ is at least $\frac{1}{16}$, while at the end
of the Browder path, we know that all these quantities are
at most $\frac{1}{100}$. We set the switch threshold probability to be
somewhere between these, but we have to use lower bounds on $\Pr[A]$,
$\Pr[B]$, $\Pr[A']$, $\Pr[B']$ at $t=1$ and upper bounds on these
at $X=\frac{1}{4}$ (as well as lower bounds on these at $X=\frac{1}{4}$
to ensure that a Nash equilibrium of the ``biased game'' is being
encoded).

\medskip
Finally, we need to show that there exists $\tau$ such that the bonus from at least one switch strategy in $\bar{\cal G}$ changes
continuously above $\tau$ to below it, while the bonus
for the other switch strategy ends up below $\tau$, thus initially, at least one
value of $\pp[v^r_{\switch}]$ and $\pp[v^c_{\switch}]$ is zero, but at the end both
evaluate to 1. This needs to take into account the variable amount of probability allocated
to the strategies in $\bar{G}$, since that affects the impact of the bonuses on $s^p_0$.

For any $T\in[T',1]$ the weight assigned by each player
to $\bar{\cal G}$'s strategies is at least $\frac{1}{10}$ by Lemma~\ref{lem:LBweight},
so that the bonus for player $p$ to play $s^p_0$, falls by a larger factor than the
probability that $\bar{G}$ is played. That means that $\tau$ can indeed be chosen as required.

\section{Discussion and Open Problems}

Should a more general result be obtainable? For example, perhaps it should be possible to
identify general classes of ``path-following algorithms'' that include the ones we analyzed here, for which
it is \PSPACE-complete to compute their output.
\ifnarrowcols
\begin{paragraph}{Acknowledgements}Supported by EPSRC Grant EP/G069239/1 ``Efficient Decentralised Approaches in Algorithmic Game Theory''\end{paragraph}
\else
A potential obstacle is that such a general result may subsume the question of whether
the 2-dimensional analogue of {\sc Oeotl} is \PSPACE-complete (i.e. consider
the PPAD-complete problem {\sc 2d-Sperner}~\cite{CD}; suppose we ask for
the trichromatic triangle identified in the proof of Sperner's Lemma.) In 2 dimensions,
the gadget that is used to allow ``edges'' to cross each other, rearranges the structure
of those edges, such that the corresponding solutions to {\sc End of the line} are the
same, but not the unique solution to {\sc Oeotl}. Generally, there are many
ways to modify the edges of a given $(S,P)$-graph so that the degree-1 vertices are
unchanged, but the structure of the graph is in other respects completely different.

Von Stengel et al~\cite{SET} use a tracing procedure to solve extensive
two-person games, and they obtain a {\em normal form perfect}
equilibrium by starting from a completely mixed starting vector. What is
the complexity of computing a normal form perfect solution using this
(or other) methods? (They note (\cite{SET}, p.~707) that on strategic-form
games this procedure mimics the linear tracing procedure of~\cite{HS}.)
\fi

\ifnarrowcols
\bibliographystyle{IEEEtranS}
\begin{small}
\else
\fi

\ifnarrowcols
\end{small}
\else
\fi

\ifnarrowcols
\else
\section{Appendix: exponentially many changes of direction}
\label{sec:exp-changes}

We give an outline of how to modify our first construction, so as to
show that in following a sequence of approximate fixpoints of Brouwer
functions, or equilibria of games, $t$ may have to change direction
exponentially many times, and furthermore, oscillate between values whose
difference is bounded away from zero.

$\F_0$ shall be the same as in Theorem~\ref{thm:pspace-brouwer}. We construct
a modified form of $\F_1$, which we will call $\F^m_1$, as follows.
We identify two subsets of the cubelets $\cube_{ijk}$, $R_1$ and $R_2$, defined as
$R_1=\{\cube_{ijk}~:~10\leq k\leq 12\}$, and
$R_2=\{\cube_{ijk}~:~20\leq k\leq 22\}$. These subsets are ``thin layers'' of cubelets
that are perpendicular to the $z$-axis. Now we define how $\F^m_1$ behaves on points
at the centers of cubelets. For $x$ at the center of $\cube_{ijk}$,
\[
\begin{tabular}{c}
if $\cube_{ijk}\in R_1$, $\F^m_1(x)-x=10(\F_1(x)-x)$,\\
if $\cube_{ijk}\in R_2$, $\F^m_1(x)-x=\frac{1}{10}(\F_1(x)-x)$,\\
otherwise, $\F^m_1(x)=\F_1(x)$.
\end{tabular}
\]

For points not at the centers of cubelets, $\F^m_1$ shall interpolate between
the values at the nearest cubelet centers, using the same general approach as
$\F_1$. Let $\F^m_t=(1-t)\F_0+t\F^m_1$.

Let $D$ be the set of cubelets $\cube_{ijk}$ for which
$f_0(\cube_{ijk})\not=f_1(\cube_{ijk})$, so $D$ is the region where
fixpoints of $\F^m_t$ may exist. Let $D^-$ be the connected component of
$D$ which contains the cubelet $\cube_{1,1,1}$, so $D^-$ is the region
within which fixpoints of $\F^m_t$ will occur on the homotopy path.

By construction, $D^-$ has the property that $D^-\setminus R_1$ (and similarly
$D^-\setminus R_2$) has exponentially many connected components. This
follows from the construction of~\cite{DGP} from which $D$ is derived;
$D$ simulates a $(S,P)$-graph by, for each edge of the $(S,P)$-graph, including
a long sequence of cubelets that passes through both $R_1$ and $R_2$.

Now consider points in $D^-\cap R_1$. The claim is that fixpoints $x$ of
$\F^m_t$ for which $x\in D^-\cap R_1$, satisfy $t\leq \frac{1}{4}$, and that
for fixpoints $x$ of $\F^m_t$ with $x\in D^-\cap R_2$ we have $t\geq \frac{3}{4}$.
The general idea (in the first case; the second case is similar) is that
$\F^m$ will map points $z$ in $R_1$ to points $z'$
for which at least one component of $z'-z$ is greater than $10\alpha$.
Hence for $t\geq \frac{1}{4}$, $\F^m_t$ raises the value of this component
(the positive contribution from $\F^m_1$ exceeds the negative contribution
from $\F_0$) and prevents it from being a fixpoint.

The homotopy path must pass through this long sequence of regions that require
$t>\frac{3}{4}$ or else $t<\frac{3}{4}$. Moreover, the two types of
regions alternate, so we establish the following result:

\begin{theorem}
For continuous functions defined using arithmetic circuits, the sequence
of fixpoints along the path given by the linear homotopy $(1-t)\F_0 + t\F_1$
has exponentially many alternations of the value of $t$. 
\end{theorem}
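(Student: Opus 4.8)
The plan is to turn the construction above into two clean, separately‑provable claims---a \emph{localization lemma} that pins down $t$ at any fixpoint lying inside one of the thin layers $R_1$, $R_2$, and a \emph{structural claim} that the active region $D^-$ threads through both layers exponentially often---and then finish with the intermediate value theorem applied to the $t$‑coordinate along the Browder path. First I would check that $\F^m_1$ is admissible in our framework: scaling the displacement $\F_1(x)-x$ by the constants $10$ and $\frac{1}{10}$ on $R_1$ and $R_2$ (and leaving it unchanged elsewhere) only requires comparing the third input coordinate against the constants $10\cdot2^{-n},13\cdot2^{-n},20\cdot2^{-n},22\cdot2^{-n}$ and then a constant multiplication, so $\F^m_1$ is still computed by a linear arithmetic circuit of polynomial size; since $\alpha=2^{-2n}$ the amplified displacement $10\delta_c$ never leaves $\cube$ inside $R_1$; and interpolating between cubelet‑center values by the simplicial scheme of Theorem~\ref{thm:bmf2circ} keeps $\F^m_1$ continuous, so Browder's theorem applies to $\F^m_t=(1-t)\F_0+t\F^m_1$. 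The resulting path starts at the unique fixpoint of $\F_0$ near $2^{-n}(1,1,1)$, which lies well below $R_1$ in the $z$‑direction, and ends, exactly as in Theorem~\ref{thm:pspace-brouwer}, at a fixpoint of $\F^m_1$ near the panchromatic vertex at the far end of the $\{1,2,3\}$‑colored tube $D^-$.

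For the localization lemma I would argue by coordinate sums. Suppose $x$ lies at a $z$‑level strictly inside one of the layers, so that every cubelet center within $L_\infty$ distance $2^{-n}$ of $x$ lies in that same layer; then $\F^m_1(x)-x$ is a convex combination of vectors $10\delta_c$ (in $R_1$) or $\frac{1}{10}\delta_c$ (in $R_2$). If, in addition, $x$ is not a panchromatic vertex of $f_1$, all the relevant colors $c$ are in $\{1,2,3\}$, each such $\delta_c$ has coordinate sum $\alpha$, so the coordinate sum of $\F^m_1(x)-x$ is $10\alpha$ in $R_1$ and $\frac{\alpha}{10}$ in $R_2$, while $\F_0(x)-x=\delta_0$ has coordinate sum $-3\alpha$. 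Adding the three coordinates of the fixpoint equation $\F^m_t(x)=x$ gives $-3(1-t)+10t=0$, i.e.\ $t=\frac{3}{13}<\frac{1}{4}$, inside $R_1$, and $-3(1-t)+\frac{t}{10}=0$, i.e.\ $t=\frac{30}{31}>\frac{3}{4}$, inside $R_2$; for $\frac{\alpha}{5}$‑approximate fixpoints the right‑hand side is a quantity of absolute value at most $\frac{3\alpha}{5}$, which still keeps $t$ below $0.3$ in $R_1$ and above $0.7$ in $R_2$, so the statement covers the approximate‑fixpoint version of Theorem~\ref{thm:pspace-brouwer} as well. The construction must keep every panchromatic vertex of $f_1$ out of $R_1\cup R_2$ (finitely many vertices, easy to route away); it is also convenient to thicken $D^-$ by the usual ``subdivide the cubelets'' device so that a crossing point has no $0$‑colored neighbour at all, although a $0$‑colored neighbour would only lower the coordinate sum and push $t$ even further from $\frac{1}{2}$, so the one‑sided bounds are robust without it.

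For the structural claim I would recall from~\cite{DGP} that $D=\{K:f_0(K)\ne f_1(K)\}$ is a disjoint union of thin ``tubes'', one per arc of the underlying $(S,P)$‑graph, joined at panchromatic vertices, and that $D^-$ is the tube‑structure of the component containing the start vertex. Taking the $(S,P)$‑graph to be a single explicit path with an exponential number $N$ of arcs---no hardness is needed here, only length---makes $D^-$ topologically an interval, and I would arrange the embedding so that every arc's tube runs through both $R_1$ and $R_2$ (say, with consecutive vertices alternately placed below $R_1$ and above $R_2$). Then $D^-$ crosses each layer $\Omega(N)$ times, with the two kinds of crossing interleaved along the tube. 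Since the Browder path is connected, lies in $D^-$, and joins the two ends of this interval‑shaped tube, it must traverse essentially all of $D^-$; by continuity of the $z$‑coordinate it passes, at each of these $\Omega(N)$ crossings, through a point at a $z$‑level strictly inside the layer, where the localization lemma applies. At the $R_1$‑crossings $t$ is bounded above by a constant less than $\frac{1}{2}$ and at the $R_2$‑crossings below by a constant larger than $\frac{1}{2}$; hence between consecutive crossings of opposite type $t$ must change by at least a positive constant, and consecutive such changes have opposite sign, so $t$ reverses direction $\Omega(N)$ times and oscillates between two values separated by a constant gap. That is precisely the assertion of the theorem.

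The step I expect to be the main obstacle is the structural claim: one has to check that the DGP construction still goes through when each arc's tube is additionally required to snake up and down through two fixed thin layers, without corrupting the clockwise $\{1,2,3\}$‑orientation of the line that encodes arc direction, and without introducing a ``shortcut'' in $D^-$ by which the Browder path could skip a layer. By contrast the localization lemma is comfortable: the amplification factors $10$ and $\frac{1}{10}$ leave a wide margin against the $\frac{\alpha}{5}$ approximation error and against any blending of amplified and unamplified displacements near a layer boundary, so the real work is in laying out the geometry of the simulation rather than in analysing $\F^m_t$.
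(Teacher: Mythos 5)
Your proposal follows essentially the same route as the paper's own argument (the appendix on exponentially many changes of direction): the same modified map $\F^m_1$ with thin layers $R_1,R_2$ scaled by $10$ and $\frac{1}{10}$, the same localization of $t$ at fixpoints inside the layers (the paper pins $t\leq\frac14$ in $R_1$ and $t\geq\frac34$ in $R_2$ via a single coordinate, you via coordinate sums), and the same structural fact that the DGP tube threads both layers once per arc, forcing exponentially many alternations of $t$ along the Browder path. One small correction: in $R_1$ a $0$-colored neighbour \emph{lowers} the sum $\Sigma_1$ and hence \emph{raises} the fixpoint value $t=3\alpha/(3\alpha+\Sigma_1)$ toward and past $\frac12$ (it is only in $R_2$ that it pushes $t$ further from $\frac12$), so the thickening/subdivision step you describe as merely convenient is actually needed to keep crossing points away from $0$-colored cubelets; since you include that step anyway, the argument goes through.
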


We obtain the following corollary:

\begin{corollary}
For graphical or two-player games, suppose $\G_0$ is a game that assigns each player
a dominating strategy, and $\G$ is an arbitrary game. The linear tracing procedure
for the homotopy $(1-t)\G_0+t\G$ will, in the worst case, have exponentially
many reversals of $t$.
\end{corollary}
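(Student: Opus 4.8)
The plan is to feed the modified homotopy $(1-t)\F_0+t\F^m_1$ of the preceding theorem through the \emph{same} reduction chain used for Theorems~\ref{thm:lt} and~\ref{thm:ltdetails}, and then to translate its exponentially many reversals of the Brouwer parameter into exponentially many reversals of the game parameter. Concretely, I would build a linear graphical game that simulates the linear arithmetic circuit computing $\F^m_1$ (Definition~\ref{def:simulation}); attach the switch player as in Lemma~\ref{lem:switch}, but using the max/min homotopy of Equation~(\ref{eq:homotopy}) with $\pp[v_{\switch}]$ in the role of $t$, so that the combined object is again a linear graphical game; and finally encode the result as a two-player game $\G$ with starting game $\G_0$ exactly as in Theorem~\ref{thm:ltdetails}. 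By (the argument behind) Observation~\ref{intermediate}, a suffix of the linear-tracing path of $(1-t)\G_0+t\G$ makes the triple $(\pp[v^+_x],\pp[v^+_y],\pp[v^+_z])$ trace a Browder path of $(1-\pp[v_{\switch}])\F_0+\pp[v_{\switch}]\F^m_1$, with $\pp[v_{\switch}]$ playing the role of the Brouwer parameter; by the preceding theorem this parameter reverses direction exponentially many times, oscillating between $\pp[v_{\switch}]\le\tfrac14$ while the simulated point lies in the layer $R_1$ and $\pp[v_{\switch}]\ge\tfrac34$ while it lies in $R_2$, which it must enter alternately exponentially often.

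The one thing that does not come for free is that, in the construction of Section~\ref{sec:lt}, $\pp[v_{\switch}]$ is only loosely tied to the homotopy parameter $t$: the bias of the column player between the two strategies encoding $v_{\switch}$ is the \emph{constant}-incentive-versus-$\delta$-bonus mechanism of Lemma~\ref{lem:gt}, which makes $\pp[v_{\switch}]$ a thresholded (hence, over a whole range of $t$, completely free) function of $\Pr[s^r_0]$ rather than a function of $t$, so ``$\pp[v_{\switch}]$ reverses'' alone need not force ``$t$ reverses''. I would repair this by replacing the constant incentive with a $\pp[v_{\switch}]$-proportional one: add a copy gadget (a player $w$ forced to $\pp[w]=\pp[v_{\switch}]$, in the style of the circuit gadgets of~\cite{DGP,CDT}) and let $v_{\switch}$'s payoff for action $1$ be a fixed multiple of $\pp[w]$, while its payoff for action $0$ carries a bonus $B<1$ that lives only in $\G_0$ (so it is worth $(1-t)B$ in $\G_t$). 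Indifference of $v_{\switch}$ then pins down $\pp[v_{\switch}]=\phi(t)$ for an explicit strictly decreasing $\phi$ whenever $\pp[v_{\switch}]\in(0,1)$, and $B<1$ keeps $s^c_0$ strictly dominant in $\G_0$, so $\G_0$ still has a unique dominant-strategy equilibrium.

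With this coupling the corollary follows: the $\pp[v_{\switch}]=0$ branch of the switch subsystem is consistent only with the (unique) fixpoint of $\F_0$, near $\cube_{1,1,1}$, and the $\pp[v_{\switch}]=1$ branch only with the fixpoint of $\F_1$ in the component $D^-$, namely the panchromatic vertex $v_{end}$; both of these can be kept outside the layers $R_1,R_2$, so all along the stretch of the Browder path on which the simulated point is traversing the tube $D^-$ and its layers we have $\pp[v_{\switch}]\in(0,1)$, hence $t=\phi^{-1}(\pp[v_{\switch}])$ there. In particular $t$ is confined to $\phi^{-1}([0,\tfrac14])$ whenever the simulated point is in $R_1$ and to $\phi^{-1}([\tfrac34,1])$ whenever it is in $R_2$ --- two intervals that are disjoint because $\phi$ is strictly monotone --- and since the simulated point alternates between $R_1$ and $R_2$ exponentially many times, so does $t$. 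This handles the two-player case; the graphical-game case is the same argument applied one step earlier, to the linear-tracing homotopy on $\GG^+$ itself.

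The step I expect to be the main obstacle is the design and verification of the copy/feedback gadget for $v_{\switch}$: the short cycle $v_{\switch}\to w\to v_{\switch}$ must be realizable inside a \emph{linear} graphical game (this is the cyclic ``generalized circuit'' regime of~\cite{CDT}), and one has to check that it introduces no spurious Nash equilibria that would let the Browder path bypass the indifference branch, and that the degenerate ends of the coupling (at $t=0$, where the simulation is off, and at $t=1$, where $\phi(t)=0$ but we want $\pp[v_{\switch}]=1$) are handled correctly. A secondary chore is re-checking that the parameter bounds of the preceding theorem ($t\le\tfrac14$ on $R_1$, $t\ge\tfrac34$ on $R_2$) survive the replacement of the plain linear homotopy there by the max/min construction of~(\ref{eq:homotopy}) that is needed to keep everything inside linear graphical games.
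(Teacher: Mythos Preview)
Your proposal is far more elaborate than the paper's own argument, which consists of two sentences: it simply asserts that ``the way we represent Brouwer functions parameterized by $t$ in terms of games parameterized by $t$, does not change the value of $t$,'' and that since $\F_0$ is unchanged, the associated $\G_0$ is the same dominating-strategy game as in Section~\ref{sec:lt}. The paper offers no further detail; in particular it does not engage with the point you raise, that in the Section~\ref{sec:lt} construction the Brouwer parameter is carried by $\pp[v_{\switch}]$ while the game-homotopy parameter $t$ controls it only indirectly through the $\delta$-bonus mechanism. Your diagnosis of this loose coupling is sharper than anything the paper states, and your plan to repair it with an explicit feedback gadget is a genuinely different (and more careful) route.

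That said, the gadget as you describe it has the feedback on the wrong side. With Payoff$(1)=c\,\pp[w]$ and Payoff$(0)=(1-t)B$, the indifference branch $\pp[v_{\switch}]=(1-t)B/c$ is \emph{unstable}: whenever $\pp[v_{\switch}]$ drifts below it, $v_{\switch}$ strictly prefers~$0$, which pushes $\pp[v_{\switch}]$ further down. Consequently $\pp[v_{\switch}]=0$ is a Nash equilibrium of the switch subsystem for \emph{every} $t\in[0,1]$, and nothing prevents the Browder path --- which starts with the graphical game ``off'' and hence effectively at $\pp[v_{\switch}]=0$ --- from riding that branch all the way to $t=1$ without the simulated point ever entering $R_1$ or $R_2$. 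Your acknowledged ``main obstacle'' (spurious equilibria letting the path bypass the indifference branch) is therefore not a detail to be checked later but the decisive failure mode of this particular design. The fix is to swap the roles: put the copy feedback on action~$0$ (Payoff$(0)=\pp[w]$) and let action~$1$ carry a bonus that lives in $\G_1$ rather than $\G_0$ (so Payoff$(1)=t$). Then indifference gives $\pp[v_{\switch}]=t$, the middle branch is \emph{stable}, the corner branches $p=0$ and $p=1$ are inconsistent for $t\in(0,1)$, and the Browder path is forced to carry $\pp[v_{\switch}]=t$ through the layer crossings --- which directly yields the reversals of $t$ you want, and also dissolves the ``$\phi(1)=0$ but we want $\pp[v_{\switch}]=1$'' endpoint worry.
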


The corollary follows since, the way we represent Brouwer functions parameterized
by $t$ in terms of games parameterized by $t$, does not change the value of $t$. Since
we did not change $\F_0$, the associated game $\G_0$ is the same ``dominating
strategy'' game of Section~\ref{sec:lt}.

\section{Appendix: polynomially small error}
\label{sec:polyerror}

We can use the machinery of Chen et al.~\cite{CDT} so that when we talk about the hardness
of finding an $\epsilon$-fixpoint, $\epsilon$ is allowed to be inverse polynomial
rather than inverse exponential. This is achieved by using the snake embeddings of~\cite{CDT}.

\begin{paragraph}{Snake-embeddings}
A snake embedding reduces a low-dimensional Brouwer-mapping function having $2^n$ of cubelets
in each dimension, to a $\Theta(n)$ dimensional bmf having $O(1)$ cubelets in each dimension, in such
a way that panchromatic vertices of the high-dimensional bmf efficiently encode panchromatic vertices
of the low-dimensional bmf. The reduction can be decomposed into a sequence of $\Theta(n)$ iterations,
in which at each iteration, the number of cubelets along some axis is reduced by a constant factor,
and we acquire an additional axis having $O(1)$ cubelets (in~\cite{CDT} it is in fact 8 cubelets).
(Intuitively, the space is folded a constant number of times and gains thickness along the new dimension.)
\end{paragraph}

\begin{paragraph}{A specific snake-embedding, and some notation}
We consider a snake-embedding of a 3-dimensional bmf $f_B$ of the type of Proposition~\ref{prop:graph}.
Initially the colors are $\{0,1,2,3\}$; let $c_i$ denote the new color at the $i$-th iteration and let $s$
be the number of iterations required to reduce to 8 the number of cubelets along each axis.
\end{paragraph}

Let $n=4+c_s$ be the dimension of the new snake-embedding.

\begin{definition}
Let $\cube^n$ be the unit $n$-dimensional cube. Partition $\cube^n$ into
cubelets $\cube^n_{\bf v}$ where ${\bf v} \in\{0,1,...,7\}^n$ represents
a cubelet of edge length $1/8$. A {\em Brouwer-mapping circuit} maps
each such cubelet to one of the colors $\{0,1,2...,n\}$. Again, a bmf
should be polynomial-time computable, and map exterior cubelets to color
$i$ for cubelets whose $i$-th coordinate contains the first 0 (when
${\bf v}$ contains a 0), otherwise color 0. Other types of bmf that
correspond to Definition~\ref{def:bmf} are defined analogously.
\end{definition}

The high-dimensional bmf can be computed by a Brouwer-mapping circuit $B'$ that is polynomial in the size of $B$.
Let $f_{B'}$ be the function computed.
The challenge is to implement $f_{B'}$ using an arithmetic circuit that is
polynomial in the size of the circuit that computes the bmf, and computes a Lipschitz continuous function.
The simplicial-decomposition technique of Theorem~\ref{thm:bmf2circ} no longer works when we
move to non-constant dimension, since the number of simplices per cubelet is exponential in the dimension.

\begin{observation}
After iteration $i$, we have a $(i+3)$-dimensional bmf in which $c_i$ becomes the ``background color''
corresponding to color 0 in the original 3-dimensional instance.

The cubelets having colors $\{0,1,2,3,c_1,\ldots,c_{i-1}\}$ are mapped to cubelets in the $(i+3)$-dimensional
instance in such a way as to have the same neighborhood structure, but with some duplication at the
folds of the embedding.
\end{observation}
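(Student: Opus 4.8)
(sketch)
The plan is to prove both assertions simultaneously by induction on $i$, the inductive step being a direct analysis of a single fold of the snake embedding of~\cite{CDT}. For the base case $i=0$, the ``$0$-th iteration'' is the original $3$-dimensional bmf $f_B$ of Proposition~\ref{prop:graph}: its background color (the color given to every interior cubelet not adjacent to a panchromatic vertex) is $0=c_0$, and the colors $\{0,1,2,3\}$ map to themselves with the identity neighbourhood structure and no duplication.

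For the inductive step, I would assume that after iteration $i-1$ we have an $(i+2)$-dimensional bmf $g_{i-1}$ with background color $c_{i-1}$ in which the cubelets of colors $\{0,1,2,3,c_1,\dots,c_{i-2}\}$ reproduce the neighbourhood structure of $f_B$ up to duplication along the earlier creases, and then describe the $i$-th fold explicitly: pick an axis $a$ of $g_{i-1}$ that still carries more than $8$ cubelets; cut its cubelets into $O(1)$ consecutive blocks of equal length; reassemble the blocks ``accordion style'' along a fresh axis of $8$ cubelets, alternating their orientation and joining consecutive blocks by short ``U-turn'' regions at the two ends of the new axis; and fill every cubelet outside the image of this accordion --- the interiors of the U-turns and the unused slabs of the new axis --- with a fresh color $c_i$, used wherever the previous construction would have left ``empty space''. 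This makes $c_i$ the background color of the resulting $(i+3)$-dimensional bmf $g_i$, which is the first assertion, and the former background $c_{i-1}$ is simply carried along by the fold map together with the cubelets of colors $\{0,1,2,3,c_1,\dots,c_{i-2}\}$.

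It then remains to verify the second assertion: that the colors $\{0,1,2,3,c_1,\dots,c_{i-1}\}$ keep their neighbourhood structure up to duplication, and that the fold neither creates a spurious panchromatic vertex nor destroys an existing one. For the former, cubelets adjacent along an axis $\neq a$ stay adjacent, cubelets adjacent along $a$ stay adjacent within a block and are re-linked across a U-turn, and only the $O(1)$ cubelets at each block boundary have their local picture altered --- and these get replicated, which is exactly the ``duplication at the folds''. For the latter, the only cubelets that can become neighbours without having been so before are those facing each other across a $c_i$-filled slab or U-turn, every cubelet on the far side of which is colored $c_i$; the U-turn gadget of~\cite{CDT} is engineered precisely so that this padding forces every panchromatic vertex of $g_i$ to lie inside the image of the accordion and hence to project onto a panchromatic vertex of $g_{i-1}$, which by the induction hypothesis encodes one of $f_B$. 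The step I expect to be the main obstacle is this last one --- verifying that the padding really seals the creases --- and the right way to handle it is to invoke the explicit U-turn construction of~\cite{CDT} and its guarantees rather than re-deriving them; the genuinely new ingredient is only the bookkeeping of which color currently plays the background role, and that follows immediately from their construction.
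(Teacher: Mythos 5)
Your proposal is correct and follows essentially the same route as the paper: the paper states this as an Observation with no separate proof, treating it as an immediate consequence of the fold-by-fold structure of the snake-embedding of~\cite{CDT}, which is exactly what your induction formalizes (one fold per step, the freshly padded region coloured $c_i$ becoming the new background, and the old colours carried through with duplication only at the creases, deferring to the CDT U-turn guarantees for the sealing of the folds). Your sketch is a reasonable elaboration of that implicit argument rather than a different approach.
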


\begin{paragraph}{The continuous implementation}
Define $\F_{B'}:\cube^n\longrightarrow\cube^n$ as follows. If $x$ lies at the center of a cubelet
(of length $1/8$), letting $j=f_{B'}(x)$, $\F_{B'}(x) = x+\delta_j$, where $\delta_j=(-\alpha,-\alpha,\ldots,-\alpha)$
if $j$ is the background color $c_s$, otherwise $\delta_j=(0,0,\ldots,0,\alpha,0,0,\ldots,0)$ where the position
of the non-zero entry depends on the color $j$, and is chosen to satisfy the boundary conditions of a bmf.

If $x$ does not lie at the center of a cubelet, we claim that for each color $j$, we can efficiently compute the $L_\infty$ distance
from $x$ to the closest center of a cubelet having color $j$, using a linear arithmetic circuit. Let $d_j(x)\in[0,1]$ be this distance.
Let $\lambda_j(x) = \max(0,(\frac{1}{10}-d_j(x)))$.
Then define $\F_{B'}(x) = x + \sum_j \delta_j.\lambda_j(x)$. In that expression for $\F_{B'}(x)$, $\delta_j$ is a constant vector,
so we are not multiplying two computed quantities together (which is disallowed in a linear arithmetic circuit).
\end{paragraph}

\begin{paragraph}{Why it works}
If we are not within distance $\frac{1}{10}$ of a panchromatic vertex, then $\lambda_j=0$ where $j$ is one of the missing colors.
However, there is some $j'$ for which $\lambda_{j'}>\frac{1}{16}$. The choice of the vectors $\delta_j$ ensure that
$|\F_{B'}(x)-x|\geq \frac{1}{16}$.

Consequently any approximate fixpoint of $\F_{B'}$ is close to a panchromatic vertex of $f_{B'}$. 
To show that it is close enough to permit that panchromatic vertex to be efficiently reconstructed from the coordinates
of the fixpoint, it is easiest to assume that there are 24 rather than 8 cubelets along each axis, with the original
cubelets having been divided into 27 smaller ones all having the same color. Then an approximate fixpoint can be
assumed to lie within $\frac{1}{30}$ of a panchromatic vertex.

We also need to point out that the high-dimensional bmf $f_{B'}$ has a path of $\{0,1,2,3,c_1,\ldots,c_{s-1}\}$-chromatic cubelets
which simulates the path of $\{1,2,3\}$-chromatic cubelets in the bmf $f_B$. Thus, the one obtained by following the path in $f_{B'}$,
encodes the one obtained by following the corresponding path in $f_B$.
\end{paragraph}

By way of a final remark, it is necessary for us to make a snake embedding of our 3D
graph into higher dimension, rather than (as in~\cite{CDT}) reduce from
the 2D version of the problem~\cite{CD}. This is because the
PPAD-completeness of {\sc 2D Sperner}~\cite{CD} is a reduction that
alters the structure of the {\sc End of the line} graph being encoded,
and so would not (in an obvious way) apply in a reduction from {\sc Oeotl}.

\fi

\end{document}